\newif\ifsubmission
\newif\ifams
\let\olddesc\description
\let\oldenddesc\enddescription
\providecommand{\IEEEsetlabelwidth}[1]{}
\renewenvironment{description}[1][]{\olddesc}{\oldenddesc}
\renewcommand{\paragraph}{\subsubsection*}
\def\NAT@spacechar{~}%
\newcommand{\one}{\mathds{1}}
\renewcommand{\thmcontinues}[1]{%
    \hyperref[#1]{continued}%
}%
\declaretheorem[numberwithin=section]{theorem}
\declaretheorem[sibling=theorem]{property}
\declaretheorem[sibling=theorem]{lemma}
\declaretheorem[sibling=theorem]{corollary}
\declaretheorem[sibling=theorem,style=remark,qed=\qedsymbol]{remark}
\declaretheorem[sibling=theorem,style=remark,qed=\qedsymbol]{example}
\declaretheorem[sibling=theorem]{claim}
\declaretheoremstyle[
spaceabove=6pt,spacebelow=6pt,
headfont=\normalfont\bfseries,
notefont=\mdseries, notebraces={}{},
bodyfont=\normalfont,
postheadspace=1em,
numbered=no
]{problem}
\declaretheorem[name={Problem:},style=problem]{problem}
\tikzstyle{state}=[circle,minimum size=.6cm,draw=gray!90,inner sep=0pt,fill=gray!20,thick]
\tikzstyle{every node}=[font=\footnotesize]
\tikzstyle{every edge}=[draw,>=stealth',shorten >=1pt,thin]
\newcommand{\F}[1]{\ComplexityFont{F}_{#1}}
\newcommand{\FGH}[1]{\mathscr{F}_{\!#1}}
\newcommand{\TOWER}{\ComplexityFont{TOWER}}
\newcommand{\ACK}{\ComplexityFont{ACKERMANN}}
\newcommand{\qout}[1]{q_\mathit{out}(#1)}
\newcommand{\qin}[1]{q_\mathit{in}(#1)}
\renewcommand{\exp}{\mathsf{exp}}
\newcommand{\setN}{\mathbb{N}}
\newcommand{\setZ}{\mathbb{Z}}
\newcommand{\size}[1]{| #1 |}
\newcommand{\norm}[2]{\|#1\|_{#2}}
\newcommand{\eqby}[1]{\stackrel{\textrm{{\normalfont\tiny{#1}}}}{=}}
\newcommand{\eqdef}{\eqby{def}}
\newcommand{\facr}[1][\@empty]{\Pi_\rho\ifx#1\@empty\relax\else(#1)\fi}
\newcommand{\vect}[1][\@empty]{\vec
  V_{\!\rho}\ifx#1\@empty\relax\else(#1)\fi} \makeatother
\newcommand{\step}[2][\@empty]{%
  \mathbin{\raisebox{0pt}[1em][.4ex]{\ifx#1\@empty%
    $\xrightarrow{\raisebox{-1.5pt}[0pt][.3ex]{\scriptsize\ensuremath{#2}}}$%
  \else%
    $\xrightarrow[{\raisebox{2pt}[0pt][0pt]{\scriptsize\ensuremath{#1}}}]%
                {\raisebox{-1.5pt}[0pt][.3ex]{\scriptsize\ensuremath{#2}}}$%
  \fi}}%
}%
\newcommand{\VS}[1][G]{\vec{V}_{\!\!#1}}
\def\vec#1{\mathchoice{\mbox{\boldmath$\displaystyle#1$}}
{\mbox{\boldmath$\textstyle#1$}}
{\mbox{\boldmath$\scriptstyle#1$}}
{\mbox{\boldmath$\scriptscriptstyle#1$}}}
\newcommand{\rank}[1]{\operatorname{rank}(#1)}
  \newcommand{\appref}[1]{Appendix~\ref{#1}}
  \newcommand{\appref}[1]{the full paper}
\providecommand{\qedhere}{\qed}
\renewcommand{\eqby}[1]{\mathrel{\raisebox{-.1ex}{\ensuremath{\stackrel{\raisebox{-.25ex}{\scalebox{.5}{\upshape\textrm{#1}}}}{=}}}}}
\providecommand{\urlstyle}[1]{}
\providecommand{\doi}[1]{\href{http://dx.doi.org/#1}{\nolinkurl{doi:#1}}}
\crefname{section}{Sec.}{Sections}
\Crefname{section}{Section}{Sections}
\crefname{subsection}{Sec.}{Sections}
\Crefname{subsection}{Section}{Sections}
\crefname{subsubsection}{Sec.}{Sections}
\Crefname{subsubsection}{Section}{Sections}
\crefname{theorem}{Thm.}{theorems}
\Crefname{theorem}{Theorem}{Theorems}
\crefname{lemma}{Lem.}{lemmata}
\Crefname{lemma}{Lemma}{Lemmata}
\crefname{fact}{Fact}{facts}
\Crefname{fact}{Fact}{Facts}
\crefname{corollary}{Cor.}{corollaries}
\Crefname{corollary}{Corollary}{Corollaries}
\crefname{proposition}{Prop.}{propositions}
\Crefname{proposition}{Proposition}{Propositions}
\crefname{claim}{Claim}{claims}
\Crefname{claim}{Claim}{Claims}
\crefname{definition}{Def.}{definitions}
\Crefname{definition}{Definition}{Definitions}
\crefname{example}{Ex.}{examples}
\Crefname{example}{Example}{Examples}
\crefname{remark}{Rmk.}{remarks}
\Crefname{remark}{Remark}{Remarks}
\crefname{figure}{Fig.}{figures}
\Crefname{figure}{Figure}{Figures}
\crefname{table}{Tab.}{tables}
\Crefname{table}{Table}{Tables}
\crefname{property}{Pty.}{properties}
\Crefname{property}{Property}{Properties}
\def\keywords{\smallskip\noindent\textsc{Keywords.}}
\begin{document}
\title[VAS Reachability is Primitive-Recursive in Fixed Dimension]{Reachability \mbox{in Vector Addition Systems} is Primitive-Recursive in Fixed Dimension}
\author[J.~Leroux and S.~Schmitz]{J\'er\^ome Leroux$^1$ and Sylvain Schmitz$^{2,3}$}
\address{$^1$~LaBRI, CNRS, Universit\'e de Bordeaux\\France}
\address{$^2$~LSV, ENS Paris-Saclay \& CNRS
\\Universit\'e Paris-Saclay
\\France
}
\address{$^3$~IUF\\France}
\begin{abstract}
  The reachability problem in vector addition systems is a central
question, not only for the static verification of these systems, but
also for many inter-reducible decision problems occurring in various
fields.  The currently best known upper bound on this problem is not
primitive-recursive, even when considering systems of fixed dimension.
We provide significant refinements to the classical decomposition
algorithm of \citeauthor{mayr81}, \citeauthor{kosaraju82}, and
\citeauthor{lambert92} and to its termination proof, which yield an
\ifams{\smaller\textsf{ACKERMANN}}\else{\smaller\textsf{\textbf{ACKERMANN}}}\fi\ upper bound in the general case,
and primitive-recursive upper bounds in fixed dimension.  While this
does not match the currently best known
\ifams{\smaller\textsf{TOWER}}\else{\smaller\textsf{\textbf{TOWER}}}\fi\ lower bound for reachability, it is
optimal for related problems.

  \keywords{}Vector addition system, Petri net, reachability, fast-growing complexity
\end{abstract}
\maketitle
\section{Introduction}%
\label{sec-intro}
\paragraph{Vector addition systems with states\nopunct}
(VASS) are basically finite state systems with vectors of integers as
transition weights, as depicted in \cref{fig-vass}.
\begin{figure}[htbp]
  \centering
  \begin{tikzpicture}[auto,on grid,node distance=1.6,initial text={}]
    \node[state,initial by arrow](p){$q_\mathit{in}$};
    \node[state,below left=of p](p1){$p$};
    \node[state,right=2.5 of p,accepting by arrow](q){$q_\mathit{out}$};
    \node[state,below right=of q](q1){$q$};
    \path[->,every node/.style={font=\scriptsize,inner sep=1pt}]
      (p)  edge[loop above] node {$\vec{a}_1{=}(0,2,0)$}  ()
      (p)  edge[bend left=10]  node {$\vec{a}_3{=}(1,0,0)$}  (q)
      (p)  edge[bend right=10,swap,inner sep=2pt] node {$\vec{a}_4{=}(1,0,-2)$} (q)
      (p)  edge[bend left,inner sep=0pt,near end]  node {$\vec{a}_2{=}(2,2,-1)$} (p1)
      (p1) edge[bend left, near start]  node {$\vec{a}_5{=}(1,0,-2)$} (p)
      (q)  edge[loop above] node {$\vec{a}_6{=}(1,-1,0)$} ()
      (q)  edge[bend left, near end]  node {$\vec{a}_7{=}(1,-1,-2)$} (q1)
      (q1) edge[bend left,inner sep=0pt,near start]  node {$\vec{a}_9{=}(0,0,0)$}  (q)
      (q1) edge[loop below,inner sep=3pt] node {$\vec{a}_8{=}(-2,-1,0)$}();
  \end{tikzpicture}
  \caption{\label{fig-vass}A vector addition system with states.}
\end{figure}
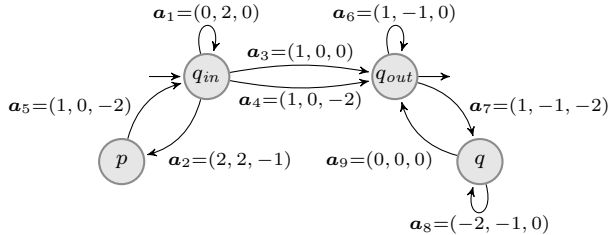
\noindent
Their semantics, starting from an initial vector
of natural numbers, simply adds component-wise the weights of the
successive transitions, but the current values should remain
non-negative at all times on every coordinate.  For instance, in the
three-dimensional system of \cref{fig-vass},\vspace*{-1ex}
\begin{multline*}
  q_\mathit{in}(0,0,2)\step{\vec a_1}q_\mathit{in}(0,2,2)\step{\vec a_1}q_\mathit{in}(0,4,2)\step{\vec a_3}q_\mathit{out}(1,4,2)\\\step{\vec a_6}q_\mathit{out}(2,3,2)\step{\vec a_7}
  q(3,2,0)\step{\vec a_8} q(1,1,0)\step{\vec a_9} q_\mathit{out}(1,1,0)
\end{multline*}
is a path witnessing that $q_\mathit{out}(1,1,0)$ can be reached
from $q_\mathit{in}(0,0,2)$, but for instance $q(1,1,0)\step{\vec a_8} q(-1,0,0)$
is not a valid execution step due to the negative value in the first
coordinate.

Vector addition systems with states are equivalent to Petri nets, and
well-suited whenever one needs to model discrete resources, for
instance threads in concurrent computations, molecules in chemical
reactions, organisms in biological processes,
etc.  %
They are also a crucial ingredient in many algorithms.  In particular,
the decidability of their \emph{reachability problem}
\citep{mayr81,kosaraju82,lambert92,leroux12} is the cornerstone of
many decidability results---see for instance \cite[Sec.~5]{schmitz16}
for a large sample of problems inter-reducible with VASS reachability
in logic, formal languages, verification, etc.

\medskip
In spite of its relevance to a wide range of problems, the complexity
of the VASS reachability problem is still not well understood.
Indeed, it turns out that this seemingly simple problem is both
conceptually and computationally very complex.

\paragraph{On a conceptual level,\nopunct} the \citeyear{mayr81} decidability proof
by \citet{mayr81} was the culmination of more than a decade of
research in the topic and is considered as one of the great
achievements of theoretical computer science.
Both \citeauthor{mayr81}'s \emph{decomposition algorithm} and its
proof are however quite intricate.  \Citet{kosaraju82} and
\citet{lambert92} contributed several simplifications
of \citet{mayr81}'s original arguments and \citet{leroux15} recast the
decomposition algorithm in a more abstract framework based on
well-quasi-order ideals, while \citet{leroux12} provides a very simple
algorithm with a short but non constructive proof, but none of these
developments can be called `easy' and the problem seems inherently
involved.

\paragraph{On a computational level,\nopunct} on the one hand, the
best known lower bound---which was from~\citeyear{lipton76} until very
recently \EXPSPACE-hardness~\citep{lipton76}---is now
\TOWER-hardness~\citep{czerwinski19}.  This new lower bound puts the
problem firmly in the realm of non-elementary complexity.  In this
realm, complexity is measured using the `fast-growing' complexity
classes $(\F\alpha)_\alpha$ from~\citep{schmitz13}, which form a
strict hierarchy indexed by ordinals.  The already mentioned
$\TOWER=\F3$ corresponds to problems solvable in time bounded by a
tower of exponentials; each $\F k$ for a finite~$k$ is primitive
recursive, and $\ACK=\F\omega$ corresponds to problems solvable with
Ackermannian resources\ifams\ (see \cref{fig-fg})\fi.  On the other hand, due to
the intricacy of the decomposition algorithm, it eluded analysis for a
long time until a `cubic Ackermann' upper bound was obtained
in~\citep{leroux15} at level $\F{\omega^3}$, with a slightly improved
$\F{\omega^2}$ upper bound in~\citep{schmitz17}.

\ifams\begin{figure}[hbtp]
  \centering\scalebox{.87}{
  \begin{tikzpicture}[every node/.style={font=\small}]
    \shadedraw[color=black!90,top color=black!20,middle
    color=black!5,opacity=20,shading angle=-15](-1,0) arc (180:0:4.8cm);
    \draw[color=blue!90,thick,fill=blue!20](-.7,0) arc (180:0:3.8cm);
    \draw[color=blue!90,fill=blue!20,thick](-.65,0) arc (180:0:3.5cm);
    \draw[color=violet!90!black,fill=violet!20,thick](-.6,0) arc (180:0:3.25cm);
    \shadedraw[color=black!90,top color=black!20,middle color=black!5,opacity=20,shading angle=-15](-.5,0) arc (180:0:3cm);
    \draw[color=blue!90,fill=blue!20,thick](-.1,0) arc (180:0:1.7cm);
    \shadedraw[color=black!90,top color=black!20,middle
    color=black!5,opacity=20,shading angle=-15,thin](0,0) arc (180:0:1.5cm);
    \draw (1.5,.5) node{$\ComplexityFont{ELEMENTARY}$};
    \draw (4,1.2) node[color=blue]{$\F3=\!\ComplexityFont{TOWER}$};
    \draw[color=blue!90,thick] (3.15,1) -- (3.05,.9);
    \draw (2.5,1.9) node{$\bigcup_k\!\F{k}{=}\ComplexityFont{PRIMITIVE\text-RECURSIVE}$};
    \draw (5.32,1.5) node[color=violet!90!black]{$\F\omega$};
    \draw (5.73,1.6) node[color=blue]{$\F{\!\omega^{\!2}}$};
    \draw (6.21,1.7) node[color=blue]{$\F{\!\omega^3}$};
    \draw (3.95,4) node{$\bigcup_k\!\F{\omega^{k}}=\ComplexityFont{MULTIPLY\text-RECURSIVE}$};
    \draw[loosely
    dotted,very thick,color=black!70](6.7,1.8) --
    (7.2,1.92); \end{tikzpicture}} \caption{Pinpointing
    $\F{\omega}=\ComplexityFont{ACKERMANN}$ among the complexity
    classes beyond \ComplexityFont{ELEMENTARY}~\citep{schmitz13}.\label{fig-fg}}
\end{figure}
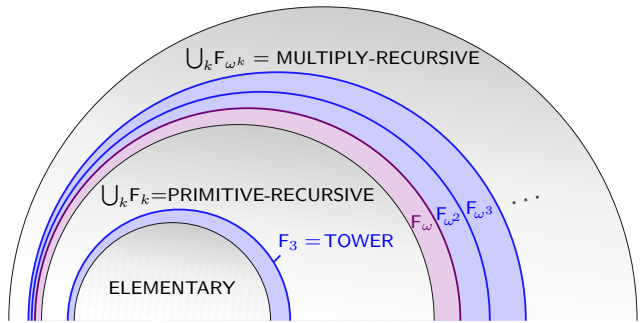\fi

This leaves a gigantic gap between the known lower and upper bounds.
This is however mitigated by the fact that the decomposition
algorithm, on which the upper bounds were obtained, provably has a non
primitive-recursive complexity.  This was already observed
by~\citet{muller85}, due to the algorithm's reliance on
\citeauthor{karp69} trees~\citep{karp69}.  Moreover, the full
decomposition produced by the algorithm contains more information than
just the existence of a reachability witness (which exists if and only
if the full decomposition is not empty).  For instance,
\citet{lambert92} exploits the full decomposition to derive a pumping
lemma for labelled VASS languages, \citet{habermehl10} further show
that one can compute a finite-state automaton recognising the
downward-closure of a labelled VASS language with respect to the
scattered subword ordering, and \citet{czerwinski18} show how to
exploit the decomposition for deciding language boundedness
properties.  In particular, the result of \citeauthor{habermehl10}
means that one can decide, given two labelled VASS, whether an
inclusion holds between the downward-closures of their languages,
which is an \ACK-hard problem~\cite{zetzsche16}.  Thus any algorithm
that returns such a full decomposition must be non
primitive-recursive.

\ifams\relax\else\medskip\fi
\paragraph{Contributions.\nopunct}In this paper, we show that VASS
reachability is in $\ACK=\F\omega$, and more precisely in
$\F{d+4}$ when the dimension~$d$ of the
system is fixed.  This improvement over the bound $\F{\omega^2}$
(resp.\ $\F{\omega\cdot(d+1)}$ in fixed dimension) shown
in~\citep{schmitz17} is obtained by analysing a decomposition
algorithm similar to those of \citet{mayr81}, \citet{kosaraju82}, and
\citet{lambert92}.  In a nutshell, a decomposition algorithm defines
both
{\ifams\relax\else\setlength\itemsep{0pt}\setlength\topsep{-5pt}\setlength\parskip{0pt}\setlength\partopsep{0pt}\fi
\begin{itemize}
\item a structure (resp.\ `regular constraint graphs' for
  \citeauthor{mayr81}, `generalised VASSes' for
  \citeauthor{kosaraju82}, and `marked graph-transition sequences' for
  \citeauthor{lambert92})---see \cref{sec-chara}---and
\item a condition on this structure that ensures there is an
  execution witnessing reachability (resp.\ `consistent marking',
  `property~$\uptheta$', and `perfectness')---see \cref{sub-normalnonempty}.
\end{itemize}}
The algorithms compute a decomposition by successive refinements of
the structure until the condition is fulfilled, by which time the
existence of an execution becomes guaranteed---see \cref{sec-klm}.

We work in this paper with a decomposition algorithm quite similar to
that of \citet{kosaraju82}, for which the reader will find good
expositions for instance in~\citep{muller85,reutenauer90,lasota18}.
We benefit however from two key insights (which in turn require
significant adaptations throughout the algorithm).

The first key insight is a new termination argument for the
decomposition process, based on the dimensions of the vector spaces
spanned by the cycles of the structure (see \cref{sub-rank}).  On its
own, this new termination argument would already be enough to yield
$\ACK$ upper bounds and primitive-recursive ones in fixed
dimension.

The second key insight lies within the decomposition process itself,
where we show using techniques inspired by \citet{rackoff78} that we
can eschew the computation of \citeauthor{karp69}'s coverability
trees, and therefore the worst-case Ackermannian blow-up that arises
from their use~\citep{cardoza76}---see \cref{sub-pump}.  In
itself, this new decomposition algorithm would not bring the
complexity below the previous bounds, but combined with the first
insight, it yields rather tight upper bounds, at level
$\F{d+4}$ in fixed dimension~$d$---see \cref{sec-up}.

In fact, the new upper bounds apply to other decision problems.  As we
discuss in \cref{sec-apps}, \citeauthor{zetzsche16}'s $\ACK$ lower
bound~\citep{zetzsche16} can be refined to prove that the inclusion
problem between the downward-closures of two labelled VASS languages
is $\F d$-hard in fixed dimension $d\geq 3$, thus close to matching the
$\F{d+4}$ upper bound one obtains by applying the results of
\citet{habermehl10} to our decomposition algorithm.

We start in \cref{sec-vass} by recalling basic definitions and
notations on vector addition systems.
The full proofs for the decomposition
algorithm are presented in Appendices~\ref{sec-pump}
to~\ref{sec-normal}.

\section{Background}
\label{sec-vass}
\paragraph{Notations}

Let $\+N_\omega\eqdef\+N\uplus\{\omega\}$ extend the set of natural
numbers with an infinite element~$\omega$ with $n<\omega$ for all
$n\in\+N$.  We also use the partial order $\sqsubseteq$ over
$\setN_\omega$ defined by $x\sqsubseteq y$ if $y\in\{x,\omega\}$.

Let $d\in\+N$ be a dimension.  The relations~$\leq$ and~$\sqsubseteq$
are extended component-wise to vectors in $\+N_\omega^d$.  The
components of a vector that are equal to~$\omega$ intuitively denote
arbitrarily large values; we call a vector in~$\+N^d$ \emph{finite}.
Given a vector $\vec{x}\in\setN_\omega^d$ and a subset
$I\subseteq \{1,\ldots,d\}$ of the components, we denote by
$\vec{x}|_I$ the vector obtained from $\vec{x}$ by replacing
components \emph{not} in $I$ by~$\omega$.  Note that
$\vec x\sqsubseteq\vec y$ implies $\vec x\leq\vec y$ and that
$\vec x\sqsubseteq \vec x|_I$ for all $\vec x,\vec y\in\+N^d_\omega$
and $I\subseteq\{1,\dots,d\}$.  For instance, for $d=3$,
$(3,2,1)\leq (4,\omega,1)$ but $(3,2,1)\not\sqsubseteq (4,\omega,1)$;
if $I=\{2,3\}$, then $(3,2,1)|_I=(\omega,2,1)$ and
$(4,\omega,1)|_I=(\omega,\omega,1)$, and then
$(\omega,2,1)\sqsubseteq(\omega,\omega,1)$.  We let $\vec 0$ denote
the zero vector and $\vec\omega$ the vector with
$\vec\omega(i)\eqdef\omega$ for all $1\leq i\leq d$.  Observe that
$\vec x\sqsubseteq\vec\omega$ for all $\vec x\in\+N_\omega^d$.

For a vector $\vec x\in\+N^d_\omega$, its norm $\|\vec x\|$ is defined
over its finite components as
$\sum_{1\leq i\leq d\mid \vec x(i)<\omega}\vec x(i)$ \ifams (a sum over an
empty set is zero)\fi; for a vector
$\vec x\in\+Z^d$, we let as usual
$\|\vec x\|\eqdef\sum_{1\leq i\leq d}|\vec x(i)|$.  For instance,
$\|(3,\omega,1)\|=4$ and $\|(-4,2,1)\|=7$.

\ifams\relax\else\smallskip\fi
\paragraph{Vector Addition Systems}
While we focus in this paper on reachability in vector addition
systems with a finite set of control states, we also rely on notations
for the simpler case of vector addition systems.

A \emph{vector addition system} (VAS)~\citep{karp69} of
dimension~$d\in\+N$ is a finite set $\vec{A}\subseteq \setZ^d$ of
vectors called \emph{actions}.  The semantics of a VAS is defined over
\emph{configurations} in~$\+N_\omega^d$.  We associate to an action
$\vec{a}\in\vec{A}$ the binary relation $\step{\vec{a}}$ over
configurations by $\vec{x}\step{\vec{a}}\vec{y}$ if
$\vec{y}=\vec{x}+\vec{a}$, where addition is performed component-wise
with the convention that $\omega+z=\omega$ for every $z\in\setZ$.
Given a finite word $\sigma=\vec{a}_1\ldots\vec{a}_k\in\vec A^\ast$ of
actions we also define the binary relation $\step{\sigma}$ over
configurations by $\vec{x}\step{\sigma}\vec{y}$ if there exists a
sequence $\vec{c}_0,\ldots,\vec{c}_k$ of configurations such that
$$\vec{x}=\vec{c}_0\step{\vec{a}_1}\vec{c}_1\cdots \step{\vec{a}_k}\vec{c}_k=\vec{y}\;.$$

The \emph{VAS reachability problem} consists in deciding given two
finite configurations $\vec{c}_\mathit{in},\vec{c}_\mathit{out}$ in
$\+N^d$ and a VAS $\vec{A}$ whether there exists a word
$\sigma\in\vec{A}^*$ such that
$\vec{c}_\mathit{in}\step{\sigma}\vec{c}_\mathit{out}$.

\ifams\relax\else\smallskip\fi
\paragraph{Vector Addition Systems with States}

A \emph{vector addition system with states} (VASS)~\citep{hopcroft79}
of dimension~$d\in\+N$ is a triple $G=(Q,q_\mathit{in},q_\mathit{out},
T)$ where~$Q$ is a non-empty finite set of states, $q_\mathit{in}\in
Q$ is the input state, $q_\mathit{out}\in Q$ is the output state,
and~$T$ is a finite set of transitions in $Q\times \setZ^d\times Q$; $\vec
A\eqdef\{\vec a\mid \exists p,q\in Q\mathbin.(p,\vec a,q)\in T\}$ is the
associated set of actions.

\begin{example}\label{ex-vass}
  \Cref{fig-vass} depicts the VASS
  $G_\mathrm{ex}=(Q_\mathrm{ex},q_\mathit{in},q_\mathit{out},
  T_\mathrm{ex})$ of dimension~$3$ where
  $Q_\mathrm{ex}=\{q_\mathit{in},q_\mathit{out},p,q\}$ and
  $T_\mathrm{ex}=\{t_1,t_2,t_3,t_4,t_5,t_6,t_7,t_8,t_9\}$
  with {\small\begin{align*} t_1&=(q_\mathit{in},(0,2,0),q_\mathit{in})\;,&
  t_2&=(q_\mathit{in},(2,2,-1),p)\;,\\
  t_3&=(q_\mathit{in},(1,0,0),q_\mathit{out})\;,&
  t_4&=(q_\mathit{in},(1,0,-2),q_\mathit{out})\;,\\
  t_5&=(p,(1,0,-2),q_\mathit{in})\;,&
  t_6&=(q_\mathit{out},(1,-1,0),q_\mathit{out})\;,\\
  t_7&=(q_\mathit{out},(1,-1,-2),q)\;,& t_8&=(q,(-2,-1,0),q)\;,\\
  t_9&=(q,(0,0,0),q_\mathit{out})\;.&&\qedhere  \end{align*}}
\end{example}

We focus on VASSes in this paper rather than VASes, because we exploit
the properties of their underlying directed graphs.  A
\emph{path}~$\pi$ in a VASS~$G$ from a state~$p$ to a state~$q$
labelled by a word $\vec{a}_1\ldots\vec{a}_k$ of actions is a word of
transitions of $G$ of the form
$(p_1,\vec{a}_1,q_1)\ldots (p_{k},\vec{a}_k,q_k)$ with $p_0=p$,
$q_k=q$, and $q_j=p_{j+1}$ for all $1\leq j<k$.  Such a path is
\emph{complete} if $p=q_\mathit{in}$ and $q=q_\mathit{out}$ are
the input and output states of~$G$.  A \emph{cycle} on a
state $q$ is a path from $q$ to~$q$.
\ifams\begin{example}\label{ex-path}\fi
For instance, in \cref{ex-vass},
the execution presented in the introduction corresponds to the path
$\pi_\mathrm{ex}=t_1\,t_1\,t_3\,t_6\,t_7\,t_8\,t_9$ labelled by
$\sigma_\mathrm{ex}=\vec a_1\,\vec a_1\,\vec a_3\,\vec a_6\,\vec
a_7\,\vec a_8\,\vec a_9$, and is complete.
\ifams\end{example}\fi

We write $p\equiv_G q$ if there exists a path from $p$ to $q$ and a
path from $q$ to $p$; this defines an equivalence relation whose
equivalence classes are called the \emph{strongly connected
components} of~$G$.  In \cref{ex-vass}, the strongly connected
components are $\{q_\mathit{in},p\}$ and~$\{q,q_\mathit{out}\}$.  A
VASS $G=(Q,q_\mathit{in},q_\mathit{out},T)$ is said to
be \emph{strongly connected} if $Q$ is a strongly connected component
of $G$.

The \emph{Parikh image} of a path $\pi$ is the function $\phi{:}\,T\to\+N$
that maps each transition~$t\in T$ to its number of occurrences
in~$\pi$.  The \emph{displacement} of a path~$\pi$ labelled by a word
$\vec{a}_1\ldots\vec{a}_k$ of actions is the vector
$\Delta(\pi)\eqdef\sum_{j=1}^k\vec{a}_j$; note that this is equal to
$\Delta(\phi)\eqdef\sum_{t=(p,\vec a,q)\in T}\phi(t)\cdot\vec a$
if~$\phi$ is the Parikh image of~$\pi$.
\ifams\begin{example}\label{ex-parikh}\fi
For the example path~$\pi_\mathrm{ex}$ \ifams from \cref{ex-path}\fi,
$\phi_\mathrm{ex}=(2,0,1,0,0,1,1,1,1)$ and
$\Delta(\pi_\mathrm{ex})=(1,1,-2)$.
\ifams\end{example}\fi

A \emph{state-configuration} of a VASS
$G=(Q,q_\mathit{in},q_\mathit{out},T)$ is a pair $(q,\vec{x})\in
Q\times\+N_\omega^d$ denoted by $q(\vec{x})$ in the sequel.  Given an
action $\vec{a}$ we define the step relation $\step[G]{\vec a}$ over
state-configurations by $p(\vec{x}) \step[G]{\vec{a}} q(\vec{y})$ if
$(p,\vec{a},q)\in T$ and $\vec{x}\step{\vec{a}}\vec{y}$.  By
extension, given a word $\sigma$ of actions
$\sigma=\vec{a}_1\ldots\vec{a}_k$, $p(\vec{x}) \step[G]\sigma
q(\vec{y})$ if there exists a sequence
$q_0(\vec{c}_0),\ldots,q_k(\vec{c}_k)$ of state-configurations such
that
$$p(\vec{x})=q_0(\vec{c}_0)\step[G]{\vec{a}_1}q_1(\vec{c}_1)\cdots\step[G]{\vec{a}_k}q_k(\vec{c}_k)=q(\vec{y})\;.$$
Notice that $p(\vec{x})\step[G]\sigma q(\vec{y})$ if, and only if,
there exists a path in~$G$ from~$p$ to~$q$ labelled by~$\sigma$ such
that $\vec{x}\step{\sigma}\vec{y}$.  In \cref{ex-vass},
$\qin{(0,0,2)}\step[G_\mathrm{ex}]{\sigma_\mathrm{ex}}\qout{(1,1,0)}$.
Finally, we write $p(\vec{x})\step[G]\ast q(\vec{y})$ if there exists
$\sigma\in\vec A^\ast$ such that $p(\vec{x})\step[G]\sigma q(\vec{y})$.

\ifams\relax\else\smallskip\fi
\paragraph{Reachability}  We focus in this paper on the
following decision problem.

\vspace*{-.5ex}
\begin{problem}[VASS reachability]\label{reach}
\hfill\vspace*{-.5ex}\begin{description}[\IEEEsetlabelwidth{question}]
\item[input] a VASS $G=(Q,q_\mathit{in},q_\mathit{out},T)$ of
dimension~$d$ and two finite configurations $\vec c_\mathit{in},\vec
c_\mathit{out}\in\+N^d$
\item[question] does $\qin{\vec c_\mathit{in}}\step[G]{*}\qout{\vec
c_\mathit{out}}$ hold?
\end{description}
\end{problem}
The previously mentioned VAS reachability problem reduces to
\nameref{reach}: given a VAS~$\vec A$ and two finite configurations
$\vec c_\mathit{in},\vec c_\mathit{out}$, it suffices to consider the
\nameref{reach} problem with input
$(\{q\},q,q,\{q\}\times\vec A\times\{q\})$ and the same configurations
$\vec c_\mathit{in},\vec c_\mathit{out}$.  A converse reduction is
possible by encoding the states, at the expense of increasing the
dimension by three~\citep{hopcroft79}.

\section{Decomposition Structures}
\label{sec-chara}
The version of the decomposition algorithm we present in
\cref{sec-klm} proceeds globally as the ones of \citeauthor{mayr81},
\citeauthor{kosaraju82}, and \citeauthor{lambert92}, and we call the
underlying structures \emph{KLM sequences} after them.

\subsection{KLM Sequences}
A \emph{KLM sequence} $\xi$ of dimension~$d$ is a sequence
\begin{equation}
  \xi= (\vec{x}_0G_0\vec{y}_0)\vec{a}_1 (\vec{x}_1G_1\vec{y}_1)\ldots
  \vec{a}_k(\vec{x}_kG_k\vec{y}_k)\label{eq-klm}
\end{equation}
where $\vec{x}_0,\vec{y}_0,\ldots,\vec{x}_k,\vec{y}_k$ are
configurations, $G_0,\ldots,G_k$ are VASSes of dimension~$d$, and
$\vec{a}_1,\ldots,\vec{a}_k$ are actions.  KLM sequences are
essentially the same as \citeauthor{kosaraju82}'s `generalised
VASSes'~\citep{kosaraju82}, except that we do not require
$G_0,\dots,G_k$ to be strongly connected.

The \emph{action language} of a KLM sequence $\xi$ is the set $L_\xi$
of words of actions of the form
$\sigma_0\vec{a}_1\sigma_1\ldots\vec{a}_k\sigma_k$ such
that~$\sigma_j$ is the label of a complete path of~$G_j$ for
every~$j$, and such that there exists a sequence
$\vec{m}_0,\vec{n}_0,\ldots,\vec{m}_k,\vec{n}_k$ of configurations in
$\setN^d$ such that
\begin{equation}\label{eq-lang}
  \vec{m}_0\xrightarrow{\sigma_0}\vec{n}_0\xrightarrow{\vec{a}_1}\cdots
\vec{m}_k\xrightarrow{\sigma_k}\vec{n}_k\,\end{equation}
where $\vec{m}_j\sqsubseteq \vec{x}_j$ and
$\vec{n}_j\sqsubseteq \vec{y}_j$ for every $0\leq j\leq k$.

Note that the reachability problem for a VASS~$G$ and two finite
configurations $\vec{c}_{in},\vec{c}_{out}\in\setN^d$ reduces to the
non-emptiness of the action language of the KLM sequence
$(\vec{c}_{in}G\vec{c}_{out})$.  In fact, in that case, the action
language is the set of words $\sigma\in\vec A^\ast$ such that
$\qin{\vec{c}_{in}}\step[G]{\sigma}\qout{\vec{c}_{out}}$.

\begin{example}\label{ex-lang}
  In \cref{ex-vass},
  $\xi_\mathrm{ex}=((0,0,2)G_\mathrm{ex}(1,1,0))$ is a KLM sequence
  with action language
  \begin{align*}
    L_{\xi_\mathrm{ex}}&=\{\vec a_1^{2+3n}\vec a_3\,\vec a_6^{1+4n}\vec
  a_7\,\vec a_8^{1+2n}\vec a_9\mid n\in\+N\}\\
   &\:\cup\{\vec a_1^{2+3n}\vec a_3\,\vec a_6^{4n}\vec
  a_7\,\vec a_8^{1+2n}\vec a_9\,\vec a_6\mid n\in\+N\}\;.\qedhere
  \end{align*}
\end{example}

\subsection{Ranks and Sizes}\label{sub-rank}

\paragraph{Vector Spaces}

We associate to a transition $t$ of a VASS $G$ the vector space
$\VS(t)\subseteq\+Q^d$ spanned by the \ifams\relax\else\pagebreak\fi
displacements of the cycles that contain~$t$. The following lemma
shows that this vector space only depends on the strongly connected
components of~$G$.
\begin{lemma}\label{lem:cyclespan}
  Let $t$ be a transition of a strongly connected VASS $G$. Then the
  vector space $\VS(t)$ is equal to the vector space
  spanned by the displacements of the cycles of~$G$.
\end{lemma}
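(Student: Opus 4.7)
The plan is to prove the two inclusions separately. The inclusion $\VS(t) \subseteq V$, where $V$ denotes the vector space spanned by the displacements of all cycles of~$G$, is immediate from the definition: every cycle through~$t$ is in particular a cycle of~$G$, so its displacement lies in~$V$.

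The real content is the reverse inclusion $V \subseteq \VS(t)$. Given an arbitrary cycle $\pi$ on some state $r$ of~$G$, I want to exhibit $\Delta(\pi)$ as an integer linear combination of displacements of cycles that pass through~$t$. Write $t=(p,\vec a,q)$. Since $G$ is strongly connected, there exists a path $\alpha$ from~$q$ to~$r$ and a path $\beta$ from~$r$ to~$p$. I would then consider the two following cycles on~$p$, both of which contain the transition~$t$:
\begin{align*}
  C_1 &\eqdef t\,\alpha\,\pi\,\beta\;, & C_2 &\eqdef t\,\alpha\,\beta\;.
\end{align*}
Their displacements are $\Delta(C_1)=\vec a+\Delta(\alpha)+\Delta(\pi)+\Delta(\beta)$ and $\Delta(C_2)=\vec a+\Delta(\alpha)+\Delta(\beta)$, so that
\[
  \Delta(\pi) = \Delta(C_1) - \Delta(C_2)\;.
\]
Since both $\Delta(C_1)$ and $\Delta(C_2)$ belong to $\VS(t)$ by definition, so does $\Delta(\pi)$. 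Extending by linearity over the generating set of~$V$ gives $V\subseteq\VS(t)$.

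I do not anticipate a serious obstacle here; the only point that needs care is ensuring the strong connectedness hypothesis is actually used, which is precisely what guarantees the existence of the connecting paths $\alpha$ and~$\beta$ (and, incidentally, guarantees that $t$ itself lies on at least one cycle, so that $\VS(t)$ is well-defined as a non-trivial construction). The argument is essentially the standard trick of ``anchoring'' an arbitrary cycle to a fixed base point, and it does not require any displacement cancellation beyond the subtraction $\Delta(C_1)-\Delta(C_2)$ above.
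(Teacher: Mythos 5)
Your proof is correct and uses essentially the same idea as the paper: both arguments express the displacement of an arbitrary cycle as the difference $\Delta(C_1)-\Delta(C_2)$ of displacements of two cycles through~$t$, obtained by splicing the given cycle into a reference cycle through~$t$ via connecting paths supplied by strong connectedness. The only cosmetic difference is that the paper performs this subtraction simultaneously for a finite spanning set of cycles against a single reference cycle, whereas you handle one arbitrary cycle at a time.
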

\begin{proof}
  Let $\vec{V}$ be the vector space spanned be the displacements of
  the cycles of~$G$. Naturally, we have $\VS(t)\subseteq
  \vec{V}\!$. For the converse, let us consider a sequence
  $\theta_1,\ldots,\theta_k$ of cycles such that $\theta_j$ is a cycle
  on a state $q_j$ for every $1\leq j\leq k$, and such that
  $\Delta(\theta_1),\ldots,\Delta(\theta_k)$ span the vector space
  $\vec{V}$. Since $G$ is strongly connected, there exists a path
  $\pi_j$ from $q_{j-1}$ to $q_{j}$ for every $j\in\{1,\ldots,k\}$
  with $q_0\eqdef q_k$. Moreover, we can assume without loss of
  generality that $t$ occurs in the cycle
  $\theta\eqdef\pi_1\ldots\pi_k$. Let $\theta_j'$ be the cycle
  obtained from $\theta$ by inserting $\theta_j$ in $q_j$ and formally
  defined as
  $\theta_j'\eqdef\pi_1\ldots\pi_j\theta_j\pi_{j+1}\ldots\pi_k$. Observe
  that $\Delta(\theta)$ and $\Delta(\theta_j')$ are both in
  $\VS(t)$ since $t$ occurs in the cycles $\theta$ and
  $\theta_j'$. As $\Delta(\theta_j)=\Delta(\theta_j')-\Delta(\theta)$,
  it follows that $\Delta(\theta_j)\in \VS(t)$. We derive
  that the vector space spanned by
  $\Delta(\theta_1),\ldots,\Delta(\theta_k)$ is included in
  $\VS(t)$. Hence $\vec{V}\subseteq\VS(t)$.
\end{proof}
As a corollary, if two transitions $t$ and $t'$ are induced by the
same strongly connected component of a VASS~$G$, then
$\VS(t)=\VS(t')$.

\ifams\relax\else\smallskip\fi
\paragraph{Ranks}
The \emph{rank} of a VASS $G$ is the tuple
$\rank{G}\eqdef(r_d,\ldots,r_0)\in\+N^{d+1}$ where $r_i$ is the number of
transitions $t\in T$ such that the dimension of $\VS(t)$ is equal
to~$i$. The \emph{rank} of a KLM sequence $\xi$ defined as
$(\vec{x}_0G_0\vec{y}_0)\vec{a}_1 (\vec{x}_1G_1\vec{y}_1)\ldots
\vec{a}_k(\vec{x}_kG_k\vec{y}_k)$
is the vector $\rank{\xi}\eqdef\sum_{j=0}^k\rank{G_j}$ where ranks are
added component-wise.

Ranks are ordered lexicographically by the relation
$\leq_\mathit{lex}$ defined by
$(r_d,\ldots,r_0)\leq_\mathit{lex}(s_d,\ldots,s_0)$ if they are equal
or if the minimal~$i$ such that $r_i\not=s_i$ satisfies~$r_i<s_i$.
Note that the linear order $(\+N^{d+1},<_\mathit{lex})$ is well-founded,
with order type~$\omega^{d+1}$.
In \citeauthor{kosaraju82}'s decomposition algorithm, the rank of a
KLM sequence was defined as a multiset of triples
$(n_{j,1},n_{j,2},n_{j,3})$ for all $0\leq j\leq k$, with
$n_{j,1}\leq d$, $n_{j,2}\eqdef|T_j|$, and $n_{j,3}\leq 2d$, where the
triples are ordered lexicographically and the multisets using
\citeauthor{DM1979} multiset ordering~\citep{DM1979}.  This ranking
function ranged over an order type in $\omega^{\omega^3}$, and
actually below~$\omega^{\omega\cdot(d+1)}$~\citep{schmitz17}.

\begin{example}
  In \cref{ex-vass},
  \ifams\begin{align*}
   \VS(t_3)&=\VS(t_4)=\{(0,0,0)\},\;\\
   \VS(t_1)&=\VS(t_2)=\VS(t_5)=\mathrm{span}((0,2,0),(3,2,-3))\;,\\
   \VS(t_6)&=\VS(t_7)=\VS(t_8)=\VS(t_9)=\mathrm{span}((-2,-1,0),(1,-1,-2),(1,-1,0))\;.
   \end{align*}\else
  \begin{align*}
   \VS(t_3)&=\VS(t_4)=\{(0,0,0)\},\;\\
   \VS(t_1)&=\VS(t_2)=\VS(t_5)\\&=\mathrm{span}((0,2,0),(3,2,-3))\;,\\
   \VS(t_6)&=\VS(t_7)=\VS(t_8)=\VS(t_9)\\&=\mathrm{span}((-2,-1,0),(1,-1,-2),(1,-1,0))\;.
   \end{align*}\fi Thus
  $\rank{G_\mathrm{ex}}=(4,3,0,2)=\rank{\xi_\mathrm{ex}}$.
\end{example}

\paragraph{Sizes}
The \emph{size} of a VASS
$G=(Q,q_\mathit{in},q_\mathit{out},T)$ is%
\begin{align}\label{eq-G-size}
\size{G}&\eqdef 
|Q|+|T|+\sum_{t\in T}\norm{\Delta(t)}{}\;.\intertext{The \emph{size} of a KLM
sequence $\xi$ of the form $(\vec{x}_0G_0\vec{y}_0)\vec{a}_1
\ldots
\vec{a}_k(\vec{x}_kG_k\vec{y}_k)$ is the natural number}
\size{\xi}&\eqdef \text{\small$2(d+1)^{d+1}\big(k+\sum_{j=1}^k\norm{\vec{a}_j}{}+\sum_{j=0}^k(\norm{\vec{x}_j}{}+\size{G_j}+\norm{\vec{y}_j}{})\big)$}.\label{eq-klm-size}
\end{align}

\subsection{Characteristic System}\label{sub-chara}
The action language of a KLM sequence can be over-approximated thanks
to a system of linear equations called its \emph{characteristic
  system}, which we are about to define.
Let us first associate to a VASS
$G=(Q,q_\mathit{in},q_\mathit{out},T)$ a \emph{Kirchhoff system} $K_G$
of linear equations such that 
$\phi\in\setN^T$ is a
model of $K_G$ if, and only if, the following constraint holds
\begin{equation}\label{eq-kir}
  \one_{q_\mathit{out}}\!-\one_{q_\mathit{in}}=\!\!\!\!\sum_{t=(p,\vec a,q)\in
    T}\!\!\!\!\!\phi(t)(\one_q-\one_p)\,,
\end{equation}
where
$\one_q{:}\,Q\rightarrow\{0,1\}$ is the \emph{characteristic
    function} of $q\in Q$ defined by $\one_q(p)\eqdef 1$ if $p=q$ and
  $\one_q(p)\eqdef 0$ otherwise.
Let us observe that the Parikh image of a path from $q_\mathit{in}$ to
$q_\mathit{out}$ in $G$ is a model of $K_G$.

A \emph{characteristic sequence} of a KLM
sequence of the form~$\xi=(\vec{x}_0G_0,\vec{y}_0)\vec{a}_1
\ifams\ldots$ $\else(\vec{x}_1G_1\vec{y}_1)\ldots\fi\vec{a}_k(\vec{x}_k,G_k,\vec{y}_k)$
where~$T_j$ is the set of transitions of~$G_j$ for each~$j$
is a sequence $\vec{h}=(\vec{m}_j,\phi_j,\vec{n}_j)_{0\leq j\leq k}$
of triples
$(\vec{m}_j,\phi_j,\vec{n}_j)\in\setN^d\times\setN^{T_j}\times\setN^d$. We
denote by $\norm{\vec{h}}{}$ the value
$\sum_{j=0}^k\big(\norm{\vec{m}_j}{}+\sum_{t\in
  T_j}\phi_j(t)+\norm{\vec{n}_j}{}\big)$.
We also denote by
$(\vec{m}_{j}^{\vec{h}},\phi_{j}^{\vec{h}},\vec{n}_{j}^{\vec{h}})$ the~$j$th
triple $(\vec{m}_j,\phi_j,\vec{n}_j)$ occurring in $\vec{h}$.

The \emph{characteristic system} of~$\xi$
is the system~$E_\xi$ of linear equations such that a characteristic
sequence $\vec{h}=(\vec{m}_j,\phi_j,\vec{n}_j)_{0\leq j\leq k}$ is a model
of $E_\xi$ if, and only if, the following two conditions hold:
\begin{enumerate}
\item $\vec{m}_j^{\vec{h}}\sqsubseteq \vec{x}_j$,
  $\phi_j^{\vec{h}}\models K_{G_j}$,
  $\vec{n}_j^{\vec{h}}=\vec{m}_j^{\vec{h}}+\Delta(\phi_j^{\vec{h}})$,
  and $\vec{n}_j^{\vec{h}}\sqsubseteq \vec{y}_j$ for every
  $0\leq j\leq k$, and
\item $\vec{n}^{\vec{h}}_{j-1}\step{\vec{a}_j}\vec{m}^{\vec{h}}_j$ for
  every $1\leq j\leq k$.
\end{enumerate}

A KLM sequence $\xi$ is said to be \emph{satisfiable} if its
characteristic system $E_\xi$ is satisfiable. It is said to be
\emph{unsatisfiable} otherwise.  

\begin{example}\label{ex-char}
  Let us consider $\xi_\mathrm{ex}=((0,0,2)G_\mathrm{ex}(1,1,0))$ from
  \cref{ex-vass}.  Its characteristic system is
   \ifams{\begin{align*}
    &&\vec m&=(0,0,2)\wedge\vec n=(1,1,0)\\
    &\wedge& \vec n(1)&=\vec m(1)+2\phi(t_2)+\phi(t_3)+\phi(t_4)+\phi(t_5)+\phi(t_6)+\phi(t_7)-2\phi(t_8)\\
    &\wedge&\vec n(2)&=\vec m(2)+2\phi(t_1)+2\phi(t_2)-\phi(t_6)-\phi(t_7)-\phi(t_8)\\
    &\wedge&\vec n(3)&=\vec m(3)-\phi(t_2)-2\phi(t_4)-2\phi(t_5)-2\phi(t_7)\\
    &\wedge&-1&=-\phi(t_2)-\phi(t_3)-\phi(t_4)+\phi(t_5)\\
    &\wedge&0&=\phi(t_2)-\phi(t_3)\\
    &\wedge&0&=\phi(t_7)-\phi(t_9)\\
    &\wedge&1&=\phi(t_3)+\phi(t_4)-\phi(t_7)+\phi(t_9)\;,
  \end{align*}}\else
   {\small\begin{align*}
    &&\vec m&=(0,0,2)\wedge\vec n=(1,1,0)\\
    &\wedge& \vec n(1)&=\vec m(1)+2\phi(t_2)+\phi(t_3)+\phi(t_4)+\phi(t_5)+\phi(t_6)\\&&&\quad+\phi(t_7)-2\phi(t_8)\\
    &\wedge&\vec n(2)&=\vec m(2)+2\phi(t_1)+2\phi(t_2)-\phi(t_6)-\phi(t_7)-\phi(t_8)\\
    &\wedge&\vec n(3)&=\vec m(3)-\phi(t_2)-2\phi(t_4)-2\phi(t_5)-2\phi(t_7)\\
    &\wedge&-1&=-\phi(t_2)-\phi(t_3)-\phi(t_4)+\phi(t_5)\\
    &\wedge&0&=\phi(t_2)-\phi(t_3)\\
    &\wedge&0&=\phi(t_7)-\phi(t_9)\\
    &\wedge&1&=\phi(t_3)+\phi(t_4)-\phi(t_7)+\phi(t_9)\;,
  \end{align*}}\fi where the last four equations correspond
   to~$K_{G_\mathrm{ex}}$.  One can check
   that~\ifams the tuple \fi$((0,0,2),\phi_\mathrm{ex},(1,1,0))$ is a
  model\ifams, where $\phi_\mathrm{ex}$ was defined in \cref{ex-parikh}\fi.%
\end{example}

\begin{lemma}\label{lem-unsat}
  The action language of an unsatisfiable KLM sequence is empty.
\end{lemma}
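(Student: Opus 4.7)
The plan is to prove the contrapositive: if the action language $L_\xi$ is nonempty, then the characteristic system $E_\xi$ admits a model. Concretely, given any word witnessing nonemptiness of $L_\xi$, I will read off a characteristic sequence from the data accompanying that witness.

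Assume a word $\sigma = \sigma_0\vec{a}_1\sigma_1\cdots\vec{a}_k\sigma_k$ belongs to $L_\xi$. By the definition in~\eqref{eq-lang}, there exist finite configurations $\vec{m}_0,\vec{n}_0,\ldots,\vec{m}_k,\vec{n}_k \in \setN^d$ such that $\vec{m}_j \sqsubseteq \vec{x}_j$ and $\vec{n}_j \sqsubseteq \vec{y}_j$ for every $0 \leq j \leq k$, with $\vec{m}_j \step{\sigma_j} \vec{n}_j$ and $\vec{n}_{j-1} \step{\vec{a}_j} \vec{m}_j$. Moreover, each $\sigma_j$ labels some complete path $\pi_j$ in $G_j$; let $\phi_j \in \setN^{T_j}$ be its Parikh image.

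I then claim that $\vec{h} \eqdef (\vec{m}_j,\phi_j,\vec{n}_j)_{0\leq j\leq k}$ is a model of $E_\xi$. For condition~(1), the containment $\vec{m}_j \sqsubseteq \vec{x}_j$ and $\vec{n}_j \sqsubseteq \vec{y}_j$ is immediate. The equality $\vec{n}_j = \vec{m}_j + \Delta(\phi_j)$ follows because $\vec{m}_j \step{\sigma_j} \vec{n}_j$, using the identity $\Delta(\pi_j) = \Delta(\phi_j)$ noted after the definition of displacement. Finally, $\phi_j \models K_{G_j}$ because $\pi_j$ runs from $q_\mathit{in}$ to $q_\mathit{out}$ in $G_j$, so its Parikh image satisfies the Kirchhoff equations (this is precisely the observation made right after~\eqref{eq-kir}). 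Condition~(2) is exactly the hypothesis $\vec{n}_{j-1} \step{\vec{a}_j} \vec{m}_j$.

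Hence $\vec{h}$ satisfies $E_\xi$, so $\xi$ is satisfiable, which is the contrapositive of the statement. There is no real obstacle here: the lemma essentially says that the characteristic system is an over-approximation of the action language, and the proof amounts to unfolding the definitions and invoking the elementary fact that the Parikh image of any complete path in a VASS solves its Kirchhoff system.
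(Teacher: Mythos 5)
Your proof is correct and follows exactly the same route as the paper's: pass to the contrapositive, take a witness $\sigma\in L_\xi$ with its decomposition and accompanying configurations, and check that $(\vec{m}_j,\phi_j,\vec{n}_j)_{0\leq j\leq k}$ built from the Parikh images of the complete paths is a model of $E_\xi$. You merely spell out the verification of the two conditions in slightly more detail than the paper does.
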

\begin{proof}
  Assume that $L_\xi$ contains a word $\sigma$, and let us prove that $E_\xi$
  is satisfiable. As $\sigma\in L_\xi$, there exists a decomposition of $\sigma$ into
  $\sigma_0\vec{a}_1\sigma_1\ldots \vec{a}_k\sigma_k$ such that 
  $\sigma_j$ is the label of a complete path $\pi_j$ of $G_j$, 
  and there exists a sequence 
  $\vec{m}_0,\vec{n}_0,\ldots,\vec{m}_k,\vec{n}_k$ of vectors in
  $\setN^d$ with $\vec{m}_j\sqsubseteq \vec{x}_j$ and
  $\vec{n}_j\sqsubseteq \vec{y}_j$ for every $0\leq j\leq k$, and such
  that
  $$\vec{m}_0\step{\sigma_0}\vec{n}_0\step{\vec{a}_1}\cdots \step{\vec{a}_k}
  \vec{m}_k\step{\sigma_k}\vec{n}_k\;.$$
  Let $\phi_j$ be the Parikh image of $\pi_j$; \ifams then \else
  observe that \fi the
  characteristic sequence
  $(\vec{m}_j,\phi_j,\vec{n}_j)_{0\leq j\leq k}$ is a model of~$E_\xi$.
\end{proof}

\subsection{Homogeneous Characteristic System}\label{sub-hom}
In the sequel, variables whose values are bounded by the
characteristic system will provide a way of decomposing KLM sequences.
Since~$E_\xi$ is a system of linear
equations, bounded variables are characterised thanks to the
homogeneous form~$E^0_\xi$ of~$E_\xi$, called the \emph{homogeneous
  characteristic system} of~$\xi$ that we are about to define.

First, we define the \emph{homogeneous form}~$K_G^0$ of the Kirchhoff
  system~$K_G$ as the system of linear equation such
that~$\phi\in\+N^T$ is a model of $K_G^0$ if, and only
if, the following constraint holds
\begin{gather}
  \sum_{t=(p,\vec a,q)\in
    T}\phi(t)(\one_q-\one_p)=0\;.
\end{gather}
The \emph{homogenerous characteristic system} $E_\xi^0$ is such that a
sequence
\ifams $(\vec{m}_0,\phi_0,\vec{n}_0),\ldots,$
$(\vec{m}_k,\phi_k,\vec{n}_k)$\else
$(\vec{m}_0,\phi_0,\vec{n}_0),\ldots,(\vec{m}_k,\phi_k,\vec{n}_k)$\fi\ of
triples
$(\vec{m}_j,\phi_j,\vec{n}_j)\in\setN^d\times\setN^{T_j}\times\setN^d$
is a model of~$E_\xi^0$ if, and only if, the following two conditions
hold:
\begin{enumerate}
\item $\bigwedge_{i\mid \vec{x}_j(i)\not=\omega}\vec{m}_j(i)=0$,
  $\phi_j\models K_{G_j}^0$, $\vec{n}_j=\vec{m}_j+\Delta(\phi_j)$, and
  $\bigwedge_{i\mid \vec{y}_j(i)\not=\omega}\vec{n}_j(i)=0$ for every $0\leq j\leq k$, and
\item $\vec{n}_{j-1}=\vec{m}_j$ for every $1\leq
  j\leq k$.
\end{enumerate}

\begin{example}\label{ex-homochar}
  Let us consider $\xi_\mathrm{ex}=((0,0,2)G_\mathrm{ex}(1,1,0))$ from
  \cref{ex-vass}.  Its homogeneous characteristic system is
  \ifams{\begin{align*}
    &&\vec m&=(0,0,0)\wedge\vec n=(0,0,0)\\
    &\wedge& \vec n(1)&=\vec m(1)+2\phi(t_2)+\phi(t_3)+\phi(t_4)+\phi(t_5)+\phi(t_6)+\phi(t_7)-2\phi(t_8)\\
    &\wedge&\vec n(2)&=\vec m(2)+2\phi(t_1)+2\phi(t_2)-\phi(t_6)-\phi(t_7)-\phi(t_8)\\
    &\wedge&\vec n(3)&=\vec m(3)-\phi(t_2)-2\phi(t_4)-2\phi(t_5)-2\phi(t_7)\\
    &\wedge&0&=-\phi(t_2)-\phi(t_3)-\phi(t_4)+\phi(t_5)\\
    &\wedge&0&=\phi(t_2)-\phi(t_3)\\
    &\wedge&0&=\phi(t_7)-\phi(t_9)\\
    &\wedge&0&=\phi(t_3)+\phi(t_4)-\phi(t_7)+\phi(t_9)\;,
  \end{align*}}\else{\small\begin{align*}
    &&\vec m&=(0,0,0)\wedge\vec n=(0,0,0)\\
    &\wedge& \vec n(1)&=\vec m(1)+2\phi(t_2)+\phi(t_3)+\phi(t_4)+\phi(t_5)+\phi(t_6)\\&&&\quad+\phi(t_7)-2\phi(t_8)\\
    &\wedge&\vec n(2)&=\vec m(2)+2\phi(t_1)+2\phi(t_2)-\phi(t_6)-\phi(t_7)-\phi(t_8)\\
    &\wedge&\vec n(3)&=\vec m(3)-\phi(t_2)-2\phi(t_4)-2\phi(t_5)-2\phi(t_7)\\
    &\wedge&0&=-\phi(t_2)-\phi(t_3)-\phi(t_4)+\phi(t_5)\\
    &\wedge&0&=\phi(t_2)-\phi(t_3)\\
    &\wedge&0&=\phi(t_7)-\phi(t_9)\\
    &\wedge&0&=\phi(t_3)+\phi(t_4)-\phi(t_7)+\phi(t_9)\;,
  \end{align*}}\fi where the last four equations correspond
   to~$K_{G_\mathrm{ex}}^0$.
\end{example}

By using classical linear algebra results
\citep[e.g.,][Thm.~1]{pottier91}, in \appref{sec-trans} we prove the
following characterisation of the bounded variables of~$E_\xi$.
\begin{restatable}{lemma}{unbounded}\label{lem:unbounded}
  Assume that~$\xi=(\vec x_0G_0\vec y_0)\vec a_1\dots(\vec x_kG_k\vec
  y_k)$ is satisfiable.  Then for every $0\leq j\leq k$
  we have:
  \begin{itemize}
  \item For every $1\leq i\leq d$, the set of values
    $\vec{m}_j^{\vec{h}}(i)$ where $\vec{h}$ is a model of $E_\xi$ is unbounded if, and only if, there exists 
    a model $\vec{h}_0$ of $E_\xi^0$ such that $\vec{m}_j^{\vec{h}_0}(i)>0$.
  \item For every $t\in T_j$, the set of values
    $\phi_j^{\vec{h}}(t)$ where $\vec{h}$ is a model of $E_\xi$ is unbounded if, and only if, there exists 
    a model $\vec{h}_0$ of $E_\xi^0$ such that $\phi_j^{\vec{h}_0}(t)>0$.
  \item For every $1\leq i\leq d$, the set of values
    $\vec{n}_j^{\vec{h}}(i)$ where $\vec{h}$ is a model of $E_\xi$ is unbounded if, and only if, there exists 
    a model $\vec{h}_0$ of $E_\xi^0$ such that $\vec{n}_j^{\vec{h}_0}(i)>0$.
  \end{itemize}
  Moreover, the sum of the bounded values of~$E_\xi$ is bounded by
  $\size{\xi}^{\size{\xi}-1}$.
\end{restatable}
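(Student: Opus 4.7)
The plan is to treat $E_\xi$ as a linear Diophantine system over $\setN$-valued variables (the components of the $\vec m_j$, $\vec n_j$ and $\phi_j$) and apply Pottier's structural theorem~\cite{pottier91}. The crucial preliminary observation is that $E_\xi^0$ is truly the homogenisation of $E_\xi$: each condition defining $E_\xi^0$ is obtained from the corresponding condition of $E_\xi$ by erasing the constant terms. Indeed, $\vec m_j\sqsubseteq\vec x_j$ unfolds to the conjunction of equations $\vec m_j(i)=\vec x_j(i)$ over the finite coordinates, homogenised to $\vec m_j(i)=0$; the step relation $\vec n_{j-1}\step{\vec a_j}\vec m_j$ is the linear equation $\vec m_j=\vec n_{j-1}+\vec a_j$, homogenised to $\vec m_j=\vec n_{j-1}$; the Kirchhoff constraint $K_{G_j}$ and the displacement equality $\vec n_j=\vec m_j+\Delta(\phi_j)$ are handled in the same way. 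Consequently, for any model $\vec h$ of $E_\xi$ and any model $\vec h_0$ of $E_\xi^0$, the combination $\vec h+n\vec h_0$ remains a model of $E_\xi$ for every $n\in\setN$.

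The ``if'' direction of each of the three bullet points is then immediate: given a model $\vec h_0$ of $E_\xi^0$ that is strictly positive at the variable of interest, the sequence $(\vec h+n\vec h_0)_{n\in\setN}$ directly witnesses unboundedness. For the ``only if'' direction, I invoke Pottier's structure theorem: whenever it is non-empty, the $\setN$-valued solution set of $E_\xi$ can be written as $B+\setN\cdot H$, where $B$ is a finite set of minimal base solutions of $E_\xi$ and $H$ is a finite Hilbert basis of the $\setN$-valued solutions of $E_\xi^0$. If a variable is unbounded over models of $E_\xi$, then, since $B$ is finite, some $\vec h_0\in H$ must have a positive value at that variable; this $\vec h_0$ is the desired model of $E_\xi^0$.

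The ``moreover'' clause exploits the same decomposition. A coordinate $v$ is bounded over models of $E_\xi$ iff every $\vec h_0\in H$ satisfies $\vec h_0(v)=0$, so on the bounded coordinates every model $\vec h$ agrees with some $\vec h_b\in B$; hence the sum of values of bounded coordinates in any model is at most $\max_{\vec h_b\in B}\norm{\vec h_b}{}$. The quantitative form of Pottier's theorem bounds this maximum in terms of the number of variables of $E_\xi$, the number of equations, and the largest coefficient, all of which are polynomial in $\size{\xi}$. A routine calculation, exploiting the generous $2(d+1)^{d+1}$ factor built into the definition of $\size{\xi}$, yields the claimed bound $\size{\xi}^{\size{\xi}-1}$. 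The main obstacle is precisely this last bookkeeping step: one must carefully unfold the constraints of \cref{sub-chara} to enumerate variables and equations, estimate coefficients (which include the $\norm{\vec a_j}{}$ and the displacements of transitions), and then verify that Pottier's coefficient-dependent bound fits within $\size{\xi}^{\size{\xi}-1}$.
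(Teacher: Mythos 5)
Your proposal is correct and follows essentially the same route as the paper: both view $E_\xi$ as a linear Diophantine system, use the fact that $\vec h+n\vec h_0$ remains a model of $E_\xi$ for the ``if'' direction, and invoke the quantitative form of \citeauthor{pottier91}'s theorem (the paper states it as a decomposition of every solution into a norm-bounded particular solution plus norm-bounded homogeneous solutions, which is the same content as your $B+\setN\cdot H$ formulation) for the ``only if'' direction and the $\size{\xi}^{\size{\xi}-1}$ bound. The final bookkeeping you defer is carried out in the paper exactly as you describe, via the estimates $2+\max_i\sum_j|a_{i,j}|\leq\size{\xi}$, $m\leq\size{\xi}-3$, and $\norm{\vec c}{}\leq\size{\xi}^2$.
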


\section{The Decomposition Algorithm}
\label{sec-klm}
Let us give an overview of the decomposition algorithm.  Given an
instance $(G,\vec c_\mathit{in},\vec c_\mathit{out})$ of
the \nameref{reach} problem, the algorithm takes as input the KLM
sequence $\xi_0\eqdef(\vec c_\mathit{in}G\vec c_\mathit{out})$.  In an
initialisation phase, the algorithm computes a finite set
$\mathrm{clean}(\xi_0)$ of so-called \emph{clean} KLM sequences
(see \cref{lem:initialreduction}) such that
$L_{\xi_0}=\bigcup_{\xi'_0\in\mathrm{clean}(\xi_0)}L_{\xi'_0}$.  At
each step of the algorithm, given a clean KLM sequence~$\xi$,
\begin{itemize}
\item either~$\xi$ is \emph{normal}, which is a condition that ensures that the action
language~$L_\xi$ is non-empty (see \cref{lem:normalnonempty}),
\item or we
can perform a \emph{decomposition step} as per
\cref{lem:inductivereduction},
which produces a \emph{finite} (possibly empty)
set~$\mathrm{dec}(\xi)$ of clean KLM sequences such that
$\rank{\xi'}<_\mathit{lex}\rank{\xi}$ for all
$\xi'\in\mathrm{dec}(\xi)$ and
$L_\xi=\bigcup_{\xi'\in\mathrm{dec}(\xi)}L_{\xi'}$.
\end{itemize}
Both the initialisation and the decomposition steps are the results of
elementary steps presented in \cref{sub-elem} and aiming to enforce
various properties on KLM sequences.

By repeatedly applying decomposition steps, the decomposition
algorithm explores a \emph{decomposition forest} labelled with clean
KLM sequences, where the roots are labelled by the
elements~$\xi'_0\in\mathrm{clean}(\xi_0)$, and where each node
labelled by a non-normal KLM sequence~$\xi$ has a child
labelled~$\xi'$ for each $\xi'\in\mathrm{dec}(\xi)$.  A
decomposition forest has finitely many roots, finite branching degree,
and, because the ranks decrease strictly along the branches and
$(\+N^{d},{<_\mathit{lex}})$ is well-founded, it has finite branches.
A decomposition forest is thus finite by K\H{o}nig's Lemma, and the
algorithm terminates.

Note that, in order to answer the \nameref{reach} problem, we only
need to explore a decomposition forest nondeterministically in
search of a leaf labelled by a normal KLM sequence.  However, a
\emph{full decomposition} $\mathrm{fdec}(\xi_0)$, which we define as
the set of all the normal KLM sequences in a decomposition forest
for~$\xi_0$, is computable, and such that
\begin{equation}\label{eq-fdec}
  L_{\xi_0}=\bigcup_{\xi'\in\mathrm{fdec}(\xi_0)}L_{\xi'}\;.
\end{equation}

\begin{remark}
  Note that decomposition steps are not deterministic, meaning that
  there might be several choices of sets~$\mathrm{dec}(\xi)$ for
  each~$\xi$.  Thus there might be several decomposition forests for a
  KLM sequence~$\xi_0$.  This does not impact the correctness of the
  algorithm; in fact, we know from~\citep{leroux15} that all the full
  decompositions one can obtain actually denote the same canonical
  \emph{ideal decomposition}.
\end{remark}

\subsection{Elementary Decomposition Steps}\label{sub-elem}

As will be further explained in \cref{sub-normal}, clean KLM sequences
are obtained in
\cref{lem:initialreduction} by performing a decomposition
into strongly connected components (\cref{sub-scc}), followed by
a \emph{saturation} step (\cref{sub-saturated}), and keeping only the
satisfiable KLM sequences according to their characteristic systems,
which were defined in \cref{sub-chara}.  A decomposition step
according to \cref{lem:inductivereduction} first
unfolds \emph{unpumpable} (\cref{sub-pump}) or \emph{bounded}
(\cref{sub-unbounded}) sequences, and then cleans up the resulting
sequences thanks to \cref{lem:initialreduction}.

\subsubsection{Strongly Connected KLM Sequences}\label{sub-scc}
A KLM sequence
$\xi=(\vec{x}_0G_0\vec{y}_0)\vec{a}_1\ifams\ldots$ $\else (\vec{x}_1G_1\vec{y}_1)\ldots\fi
\vec{a}_k(\vec{x}_kG_k\vec{y}_k)$
is said to be \emph{strongly connected} if the VASSes $G_0,\ldots,G_k$
occurring in~$\xi$ are strongly connected.
\begin{lemma}\label{sec:sc}
  For any KLM sequence~$\xi$ that is not strongly connected, we can
  compute in time $\exp(\size{\xi})$ a finite set~$\Xi$ of
  strongly connected KLM sequences such that
  $L_\xi=\bigcup_{\xi'\in\Xi}L_{\xi'}$ and such that
  $\rank{\xi'}<_\mathit{lex}\rank{\xi}$ and
  $\size{\xi'}\leq \size{\xi}$ for every~$\xi'\in\Xi$.
\end{lemma}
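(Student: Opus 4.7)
The plan is to replace each non-strongly-connected VASS $G_j=(Q_j,q_\mathit{in}^j,q_\mathit{out}^j,T_j)$ occurring in $\xi$ by a sequence of its strongly connected components (SCCs) separated by the \emph{bridge} transitions that link them. For such a $G_j$, compute its SCC decomposition and enumerate every sequence $C_{i_1},b_1,\ldots,b_{\ell-1},C_{i_\ell}$ of pairwise-distinct SCCs such that $q_\mathit{in}^j\in C_{i_1}$, $q_\mathit{out}^j\in C_{i_\ell}$, and each $b_s=(u_s,\vec{\alpha}_s,v_s)\in T_j$ is a bridge with $u_s\in C_{i_s}$, $v_s\in C_{i_{s+1}}$. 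For each such enumeration, substitute the segment $(\vec{x}_j G_j \vec{y}_j)$ in $\xi$ by
\[(\vec{x}_j H_1 \vec{\omega})\,\vec{\alpha}_1\,(\vec{\omega} H_2 \vec{\omega})\,\vec{\alpha}_2\cdots\vec{\alpha}_{\ell-1}\,(\vec{\omega} H_\ell \vec{y}_j)\,,\]
where $H_s$ is the (strongly connected) sub-VASS of $G_j$ induced by $C_{i_s}$, with input state $v_{s-1}$ (or $q_\mathit{in}^j$ if $s=1$) and output state $u_s$ (or $q_\mathit{out}^j$ if $s=\ell$). Taking the Cartesian product over all non-strongly-connected indices $j$ yields the set $\Xi$.

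The language equality $L_\xi=\bigcup_{\xi'\in\Xi}L_{\xi'}$ follows from the observation that every complete path of $G_j$ decomposes uniquely into intra-SCC sub-paths separated by bridges, since the SCC graph is a DAG and a left SCC can never be re-entered. For the rank, \cref{lem:cyclespan} ensures that every transition lying in a cycle of $G_j$ stays inside a single SCC and keeps the same cycle-space dimension when restricted to the corresponding $H_s$; hence the entries $r_d,\ldots,r_1$ of $\rank{\cdot}$ are preserved. Bridges on the other hand lie in no cycle of $G_j$ and so contribute only to the $r_0$ coordinate of $\rank{G_j}$, but they disappear from $\rank{\xi'}$. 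Since $G_j$ is not strongly connected, at least one bridge exists, $r_0$ strictly decreases, and $\rank{\xi'}<_\mathit{lex}\rank{\xi}$.

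For the size bound, one checks that $\sum_{s=1}^{\ell}\size{H_s}\leq\size{G_j}-(\ell-1)-\sum_{s=1}^{\ell-1}\|\vec{\alpha}_s\|$, because the bridges (together with possibly further states and transitions of SCCs off the chosen path) are discarded along with their displacements. The substitution adds $\ell-1$ new segments to $k$, $\ell-1$ new actions of total norm $\sum\|\vec{\alpha}_s\|$, and new $\vec{\omega}$-configurations of zero norm, which together exactly compensate the drop, so $\size{\xi'}\leq\size{\xi}$. Finally, the number of candidate SCC-sequences in a single $G_j$ is bounded by $2^{|Q_j|}$ and each bridge slot admits at most $|T_j|$ choices, so the overall enumeration takes $\exp(\size{\xi})$ time. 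The main subtlety is the rank accounting: \cref{lem:cyclespan} is invoked precisely to guarantee that cycle-space dimensions are invariant under restriction to an SCC, which is what makes the bridge removal a strict lexicographic decrease rather than a mere reshuffling of the higher-order $r_i$'s.
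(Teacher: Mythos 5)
Your construction is exactly the one used in the paper: each non-strongly-connected $G_j$ is replaced by every chain of pairwise-distinct strongly connected components joined by single connecting transitions, with $\vec\omega$ as the intermediate input/output constraints, and your rank and size accounting (bridges contribute only to $r_0$ and vanish; the discarded states, bridges and their displacements pay for the new segments and actions) matches and even spells out what the paper leaves implicit. The proposal is correct and takes essentially the same approach.
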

\begin{proof}
  We just replace every triple $(\vec{x}G\vec{y})$ occurring in $\xi$
  where $G=(Q,q_\mathit{in},q_\mathit{out},T)$ is a non strongly
  connected VASS by all the possible sequences
  $(\vec{x}G_0\vec\omega)\vec{a}_1\ldots (\vec\omega G_n\vec{y})$
  where $n\geq 1$, $G_j=(Q_j,r_j,s_j,T_j)$ is such that
  $Q_0,\ldots,Q_n$ are distinct strongly connected components of $G$,
  $T_j\eqdef T\cap (Q_j\times\setZ^d\times Q_j)$ for every
  $0\leq j\leq n$, $r_0\eqdef q_\mathit{in}$,
  $s_n\eqdef q_\mathit{out}$, and $(s_{j-1},\vec{a}_j,r_j)$ is a
  transition in~$T$ for every $1\leq j\leq n$.\ifams\par\fi  We obtain
  that way a finite set~$\Xi$ of strongly connected KLM sequences
  satisfying the lemma.
  \ifams
  In particular, regarding sizes, observe that $|(\vec
  xG_0\vec\omega)\vec a_1\dots(\vec\omega G_n\vec
  y)|=2(d+1)^{d+1}(\|\vec x\|+\|\vec y\|+(n+\sum_{j=1}^n\|\vec
  a_j\|+\sum_{j=0}^n|G_j|))\leq 2(d+1)^{d+1}(\|\vec x\|+\|\vec y\|+|G|)$.\fi  
\end{proof}

\begin{figure}[tbp]
  \centering
  \begin{tikzpicture}[auto,on grid,node distance=1.6,initial text={}]
    \node[state,initial by arrow,accepting by arrow](p){$q_\mathit{in}$};
    \node[state,below left=of p](p1){$p$};
    \node[state,right=3 of p,initial by arrow,accepting by arrow](q){$q_\mathit{out}$};
    \node[state,below right=of q](q1){$q$};
    \path[->,every node/.style={font=\scriptsize,inner sep=1pt}]
      (p)  edge[loop above] node {$\vec{a}_1{=}(0,2,0)$}  ()
      (p)  edge[bend left,inner sep=0pt,near end]  node {$\vec{a}_2{=}(2,2,-1)$} (p1)
      (p1) edge[bend left, near start]  node {$\vec{a}_5{=}(1,0,-2)$} (p)
      (q)  edge[loop above] node {$\vec{a}_6{=}(1,-1,0)$} ()
      (q)  edge[bend left, near end]  node {$\vec{a}_7{=}(1,-1,-2)$} (q1)
      (q1) edge[bend left,inner sep=0pt,near start]  node {$\vec{a}_9{=}(0,0,0)$}  (q)
      (q1) edge[loop below,inner sep=3pt] node {$\vec{a}_8{=}(-2,-1,0)$}();
  \end{tikzpicture}
  \caption{\label{fig-scc}The strongly connected VASSes $G_\mathrm{ex}^1$ (left) and
  $G_\mathrm{ex}^2$ (right).}
\end{figure}
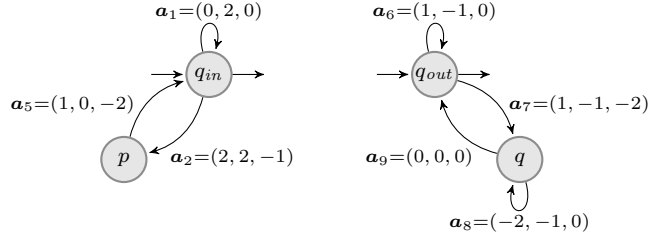
\begin{example}\label{ex-scc}
  Consider again the VASS $G_\mathrm{ex}$ of~\cref{ex-vass} and the
  KLM sequence $\xi_\mathrm{ex}=((0,0,2)G_\mathrm{ex}(1,1,0))$.  The
  decomposition into strongly connected KLM sequences yields
  a set $\{\xi_\mathrm{ex}^1,\xi_\mathrm{ex}^2\}$ where
  \begin{align*}
   \xi_\mathrm{ex}^1&\eqdef
  ((0,0,2)G_\mathrm{ex}^1(\omega,\omega,\omega))\vec a_3((\omega,\omega,\omega)G_\mathrm{ex}^2(1,1,0))\;,\\
   \xi_\mathrm{ex}^2&\eqdef
  ((0,0,2)G_\mathrm{ex}^1(\omega,\omega,\omega))\vec
  a_4((\omega,\omega,\omega)G_\mathrm{ex}^2(1,1,0))\;,
  \end{align*}
  where $G_\mathrm{ex}^1$ and~$G_\mathrm{ex}^2$ are displayed
  in \cref{fig-scc}.
\end{example}

\subsubsection{Saturated KLM Sequences}\label{sub-saturated}
A KLM sequence $\xi=(\vec{x}_0G_0\vec{y}_0)\vec{a}_1\ifams\relax\else (\vec{x}_1G_1\vec{y}_1)\fi\ldots
\vec{a}_k(\vec{x}_kG_k\vec{y}_k)$ is said to be \emph{saturated} if
for every $0\leq j\leq k$ and for every $i\in\{1,\ldots,d\}$ the following
two conditions hold:
\begin{enumerate}
\item if $\vec{x}_j(i)=\omega$, then the set of values $\vec{m}_j^{\vec{h}}(i)$
  where
  $\vec{h}$
  is a model of $E_\xi$ is unbounded, and
\item  if $\vec{y}_j(i)=\omega$, then the set of values $\vec{n}_j^{\vec{h}}(i)$
  where
  $\vec{h}$
  is a model of $E_\xi$ is unbounded.
\end{enumerate}
Saturation corresponds essentially to \citeauthor{kosaraju82}'s
property~$\uptheta1(b)$.

\begin{lemma}\label{lem:saturated}
  From any strongly connected KLM sequence $\xi$, we can compute in
  time $\exp(\size{\xi}^{\size{\xi}})$
  a
  finite set $\Xi$ of saturated strongly connected KLM sequences such that
  $L_\xi=\bigcup_{\xi'\in \Xi}L_{\xi'}$, and such that
  $\rank{\xi'}\leq_\mathit{lex}\rank{\xi}$ and $\size{\xi'}\leq
  \size{\xi}^{\size{\xi}}$ for every $\xi'\in\Xi$.
\end{lemma}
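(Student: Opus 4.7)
The plan is to compute the saturation of $\xi$ in one shot, without iterating, by exploiting the characterisation of bounded variables from \cref{lem:unbounded}. First, I identify the set $P_b$ of \emph{badly saturated} positions: pairs $(j,i)$ such that either $\vec{x}_j(i)=\omega$ while $\vec{m}_j(i)$ is bounded in $E_\xi$, or $\vec{y}_j(i)=\omega$ while $\vec{n}_j(i)$ is bounded in $E_\xi$. By \cref{lem:unbounded}, boundedness is decidable by checking whether the homogeneous system $E_\xi^0$ admits a model with a strictly positive value at the corresponding coordinate, and every bounded value is at most $B\eqdef\size{\xi}^{\size{\xi}-1}$. Next, for each assignment $v\colon P_b\to\{0,\dots,B\}$ that is realised by some model of $E_\xi$, I build the KLM sequence $\xi_v$ obtained from $\xi$ by replacing $\omega$ with $v(j,i)$ at the corresponding coordinate of $\vec{x}_j$ or $\vec{y}_j$. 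The underlying VASSes $G_j$ are left untouched, so strong connectedness is preserved and $\rank{\xi_v}=\rank{\xi}$.

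The identity $L_\xi=\bigcup_v L_{\xi_v}$ is easy to verify: if $\sigma\in L_{\xi_v}$ with witness vectors $\vec{m}_j,\vec{n}_j$, then $\vec{m}_j\sqsubseteq\vec{x}_j^{(v)}$ implies $\vec{m}_j\sqsubseteq\vec{x}_j$, so $\sigma\in L_\xi$; conversely, given $\sigma\in L_\xi$, the argument from the proof of \cref{lem-unsat} produces a model $\vec{h}$ of $E_\xi$ whose restriction to $P_b$ is an admissible $v$, and the same data witnesses $\sigma\in L_{\xi_v}$.

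The key step, which I expect to be the main obstacle, is proving that each $\xi_v$ is actually saturated. Let $(j,i)$ be a residual $\omega$-position of $\xi_v$, say $\vec{x}_j(i)=\omega$. By construction it lies outside $P_b$, so $\vec{m}_j(i)$ is unbounded in $E_\xi$, and \cref{lem:unbounded} supplies a model $\vec{h}_0$ of $E_\xi^0$ with $\vec{m}_j^{\vec{h}_0}(i)>0$. Now $E_{\xi_v}^0$ adds to $E_\xi^0$ precisely the homogeneous versions of the substituted equations, i.e.\ $\vec{m}_{j'}(i')=0$ or $\vec{n}_{j'}(i')=0$ for each $(j',i')\in P_b$; but these equalities are automatic for every model of $E_\xi^0$, since \cref{lem:unbounded} tells us that no model of $E_\xi^0$ can make those bounded variables strictly positive, and non-negativity then forces them to be zero. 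Therefore $\vec{h}_0$ is still a model of $E_{\xi_v}^0$, and \cref{lem:unbounded} applied to $\xi_v$ yields unboundedness of $\vec{m}_j(i)$ in $E_{\xi_v}$, as required; the $\vec{y}$ side is analogous.

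For the quantitative bounds, the norm added by substitution is $\sum_{p\in P_b}v(p)\leq B=\size{\xi}^{\size{\xi}-1}$, since the sum of bounded values in any single model is itself bounded by \cref{lem:unbounded}. Plugging into~\eqref{eq-klm-size} this gives $\size{\xi_v}\leq\size{\xi}^{\size{\xi}}$. The number of candidate assignments is at most $(B+1)^{|P_b|}\leq\size{\xi}^{O(\size{\xi}^2)}$, comfortably inside $\exp(\size{\xi}^{\size{\xi}})$; realisability of each $v$ and construction of the corresponding $\xi_v$ is decided by linear algebra in time polynomial in $\size{\xi_v}$, keeping the overall running time within the stated budget.
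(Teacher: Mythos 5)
Your proposal is correct and follows essentially the same route as the paper's (much terser) proof: identify the $\omega$-components whose associated variables $\vec m_j(i)$ or $\vec n_j(i)$ are bounded in $E_\xi$ via \cref{lem:unbounded}, and branch over all realisable substitutions of values at most $\size{\xi}^{\size{\xi}-1}$. Your additional verification that one pass suffices---because the homogeneous constraints added by the substitution are automatically satisfied by every model of $E_\xi^0$, so unboundedness of the remaining $\omega$-positions is preserved---is exactly the detail the paper leaves implicit, and it is argued correctly.
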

\begin{proof}
  Thanks to \cref{lem:unbounded}, we can saturate a KLM sequence.  In
  fact, we just have to replace some~$\omega$ components by all the
  possible bounded values $\leq\size{\xi}^{\size\xi-1}$ given by the
  characteristic system~$E_\xi$ for the variables~$\vec m_j,\vec n_j$.
\end{proof}

\begin{example}\label{ex-saturated}
  Consider the KLM sequences $\xi_\mathrm{ex}^1$ and
  $\xi_\mathrm{ex}^2$ from \cref{ex-scc}.  \Cref{lem:saturated} yields
  respectively
  \begin{align*}
  \xi_\mathrm{ex}^3&\eqdef
  ((0,0,2)G_\mathrm{ex}^1(0,\omega,2))\vec
  a_3((1,\omega,2)G_\mathrm{ex}^2(1,1,0))\;,\\ \xi_\mathrm{ex}^4&\eqdef
  ((0,0,2)G_\mathrm{ex}^1(0,\omega,2))\vec
  a_4((0,\omega,0)G_\mathrm{ex}^2(1,1,0))\;.\;\:\qedhere \end{align*}
\end{example}

\subsubsection{Unbounded KLM Sequences}\label{sub-unbounded}
Consider a KLM sequence~$\xi$ of form
$(\vec{x}_0G_0\vec{y}_0)\vec{a}_1\ifams$ $\else(\vec{x}_1G_1\vec{y}_1)\fi\ldots
\vec{a}_k (\vec{x}_kG_k\vec{y}_k)$,
where~$T_j$ denotes the set of transitions of~$G_j$.  Observe that, if
a transition~$t$ in~$T_j$ is such that the set of
values~$\phi_j^{\vec{h}}(t)$ where~$\vec{h}$ ranges over the models of
the characteristic system~$E_\xi$ of~$\xi$ is bounded by some
value~$B$, then the number of times a word $\sigma\in L_\xi$ can use
the transition~$t$ is bounded by~$B$.  It means that the VASS~$G_j$
can be replaced by at most~$B$ copies of itself without the
transition~$t$, joined using the action~$\Delta(t)$ of~$t$, while
preserving the language~$L_\xi$.

Formally, in such a situation, we define~$T'_j$ as the set of
transitions $t\in T_j$ such that the set of
values~$\phi^{\vec{h}}_j(t)$ is unbounded.  A KLM sequence~$\xi$ is
said to be \emph{unbounded} if~$T_j'=T_j$ for every $0\leq j\leq k$,
and otherwise to be \emph{bounded}.  Unboundedness corresponds
essentially to \citeauthor{kosaraju82}'s property~$\uptheta1(a)$, but
here we also need to show that the ranks decrease when performing this
decomposition.

\begin{lemma}\label{lem:unboundeddec}
  Whether a KLM sequence~$\xi$ is unbounded is in \NP.  Moreover, if
  $\xi$ is strongly connected and bounded, we can compute in time
  $\exp(\size{\xi}^{\size{\xi}})$ a finite set~$\Xi$ of KLM sequences
  such that $L_\xi=\bigcup_{\xi'\in\Xi}L_{\xi'}$ and such that
  $\rank{\xi'}<_\mathit{lex}\rank{\xi}$ and
  $\size{\xi'}\leq \size{\xi}^{\size{\xi}}$ for every~$\xi'\in\Xi$.
\end{lemma}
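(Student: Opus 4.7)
By \cref{lem:unbounded}, $\xi$ is unbounded iff for every~$(j,t)$ with $t\in T_j$ the homogeneous system~$E_\xi^0$ admits a model~$\vec h_0$ with $\phi_j^{\vec h_0}(t)>0$. Since the non-negative rational solutions of~$E_\xi^0$ form a convex cone closed under sum, this will be equivalent to the existence of a single model~$\vec h_0^*$ of~$E_\xi^0$ that is simultaneously strictly positive on every transition variable, obtained as a sum of individual witnesses; a polynomially bounded such witness exists by standard integer-programming results~\citep[e.g.,][Thm.~1]{pottier91}, giving an~\NP{} procedure.

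\textbf{Decomposition when~$\xi$ is bounded.} Let $T_j^{\mathrm{bnd}}\eqdef T_j\setminus T'_j$; by hypothesis some $T_j^{\mathrm{bnd}}\neq\emptyset$, and by \cref{lem:unbounded} the sum $\sum_{t\in T_j^{\mathrm{bnd}}}\phi_j^{\vec h}(t)$ is bounded, across models~$\vec h$ of~$E_\xi$, by $N_{\max}\eqdef\size{\xi}^{\size{\xi}-1}$. Write $G_j^{\mathrm{unb}}$ for~$G_j$ restricted to the transitions in~$T'_j$. I will split every complete path~$\pi$ of~$G_j$ realising a model of~$E_\xi$ at each occurrence of a bounded transition, yielding a unique decomposition $\pi=\rho_0\,t_1\,\rho_1\,t_2\cdots t_N\,\rho_N$ with $t_1,\ldots,t_N\in T_j^{\mathrm{bnd}}$, $N\leq N_{\max}$, and each $\rho_i$ a path in~$G_j^{\mathrm{unb}}$. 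For every admissible sequence $(t_1,\ldots,t_N)$, I form~$\xi'$ from~$\xi$ by replacing the~$j$th triple with
\[
  (\vec x_j\, H_0\, \vec\omega)\,\Delta(t_1)\,(\vec\omega\, H_1\, \vec\omega)\,\Delta(t_2)\,\cdots\,\Delta(t_N)\,(\vec\omega\, H_N\, \vec y_j),
\]
where each~$H_i$ is a copy of~$G_j^{\mathrm{unb}}$ whose input and output states are pinned to~$q_\mathit{in}^{G_j}$, $q_\mathit{out}^{G_j}$, or the endpoints of the~$t_i$'s as appropriate. Collecting these yields~$\Xi$, and $L_\xi=\bigcup_{\xi'\in\Xi}L_{\xi'}$ follows directly from the unique path decomposition; each $\size{\xi'}\leq\size{\xi}^{\size{\xi}}$ by~\eqref{eq-klm-size}, and there are at most $|T_j^{\mathrm{bnd}}|^{N_{\max}+1}\leq\exp(\size{\xi}^{\size{\xi}})$ such sequences, accounting for the claimed runtime.

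\textbf{Rank decrease.} The decisive step will be the claim that, in~$G_j^{\mathrm{unb}}$, every transition~$t'$ satisfies $\dim{\VS(t')}<d_0$, where $d_0\eqdef\dim{\VS(t)}$ is the common cycle-span dimension in the strongly connected~$G_j$ (\cref{lem:cyclespan}). Once this is established, the~$N+1$ copies of~$G_j^{\mathrm{unb}}$ in~$\xi'$ contribute~$0$ to the $d_0$-th coordinate of~$\rank{\xi'}$, whereas the original~$G_j$ contributed~$|T_j|$; coordinates strictly above~$d_0$ are unchanged, yielding the desired strict lexicographic decrease. To prove the claim, I proceed by contradiction: if cycles of~$G_j^{\mathrm{unb}}$ already spanned the whole~$\VS(t)$, then for any cycle~$\theta$ through a bounded~$t$ in~$G_j$ I could write $\Delta(\theta)=\sum_i c_i\Delta(\theta_i)$ with $\theta_i$ cycles in~$G_j^{\mathrm{unb}}$ and $c_i\in\setQ$. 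Scaling by some~$M\in\setN$ to clear denominators and adding the Kirchhoff-preserving, zero-displacement signed flow $M\one_\theta-\sum_i Mc_i\vec\tau_i$ to the $j$th component of a sufficiently scaled combination $\alpha\vec h+\beta\vec h_0^*$ (with $\vec h\models E_\xi$ and $\vec h_0^*\models E_\xi^0$ strictly positive on~$T'_j$) would produce a model of~$E_\xi^0$ with $\phi_j(t)>0$, contradicting the boundedness of~$t$ by~\cref{lem:unbounded}.

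\textbf{Main obstacle.} The delicate point in the rank-decrease argument is preserving non-negativity of the modified flow on every bounded transition $t''\in T_j^{\mathrm{bnd}}\setminus\{t\}$, since the base model~$\alpha\vec h+\beta\vec h_0^*$ may vanish on such transitions (its strictly positive part~$\beta\vec h_0^*$ only covers~$T'_j$). I would handle this by exploiting the strong connectivity of~$G_j$ to pick~$\theta$ as a simple cycle through~$t$ that avoids all other bounded transitions, and by choosing the generating cycles~$\theta_i$ so that the negative contributions $-Mc_i\vec\tau_i$ on bounded transitions are dominated by the scaled~$\alpha\vec h$; saturation of~$\xi$, inherited from the preceding cleaning steps, is what guarantees that sufficiently positive witnesses~$\vec h,\vec h_0^*$ and sufficiently flexible cycle bases exist inside the relevant strongly connected components.
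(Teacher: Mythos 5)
Your overall architecture is the same as the paper's: characterise the bounded transitions via the homogeneous system $E_\xi^0$, unfold the at most $\size{\xi}^{\size{\xi}-1}$ occurrences of bounded transitions into a chain of copies of $G_j$ restricted to $T_j'$, and obtain the rank decrease by showing that the cycles of the restricted VASS span a strictly smaller space, arguing by contradiction that otherwise one could manufacture a model of $E_\xi^0$ positive on a bounded transition. The decomposition and counting parts are fine, modulo one small caveat for the \NP\ claim: deciding unboundedness also requires certifying that $E_\xi$ itself is satisfiable, since when $E_\xi$ has no model every value set is empty, hence bounded, and a criterion based on $E_\xi^0$ alone (which always has the trivial solution and possibly non-trivial ones) would misclassify such sequences.

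The genuine gap is in the execution of the rank-decrease contradiction. First, the object you build is supposed to be a model of the \emph{homogeneous} system $E_\xi^0$, but you seed it with $\alpha\vec h$ where $\vec h\models E_\xi$; a model of the inhomogeneous system does not satisfy the constraints $\vec m_j(i)=0$ for the components with $\vec x_j(i)\neq\omega$, nor the homogeneous chaining constraints $\vec n_{j-1}=\vec m_j$, so $\alpha\vec h+\beta\vec h_0^*$ plus a zero-displacement flow is not a model of $E_\xi^0$ unless $\alpha=0$. Second, and consequently, your ``main obstacle'' is misdiagnosed: on a bounded transition $t''$ \emph{every} model of $E_\xi^0$ vanishes (by \cref{lem:unbounded}, a positive value there would make $t''$ unbounded), and the only term of your modified flow supported on $T_j\setminus T_j'$ is $M$ times the Parikh image of~$\theta$, which is non-negative; there is nothing to dominate there, and in particular no need for a cycle $\theta$ through~$t$ avoiding the other bounded transitions --- which, incidentally, need not exist even in a strongly connected VASS. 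The place where non-negativity genuinely has to be arranged is on the \emph{unbounded} transitions, where the subtracted cycle flows $-Mc_i\vec\tau_i$ live; this is repaired by scaling the coefficient of $\vec h_0^*$ (strictly positive on all of $T_j'$) so that it dominates $\sum_i Mc_i\vec\tau_i$, which is exactly the paper's choice of the integer $q$ with $p\lambda_r\phi_r\leq q\phi_j^{\vec h_0}$. With $\alpha=0$ and that scaling the contradiction goes through (the paper in fact takes $\theta$ to be a single cycle visiting all of $T_j$, handling every bounded transition at once); saturation plays no role in this step, contrary to your closing remark.
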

\begin{proof}
  Let $T_j'$ be the set of transitions $t\in T_j$ such that the set
  $\phi_j^{\vec{h}}(t)$ where $\vec{h}$ is a model of $E_\xi$ is unbounded.
  Let us introduce the VASS~$G_j'$ obtained from~$G_j$ by
  replacing~$T_j$ by~$T_j'$. Let $\vec{V}_j$ be
  the vector space spanned by the displacements of the cycles
  of~$G_j$, and let $\vec{V}_j'$ be the vector space generated by the
  displacements of the cycles of~$G_j'$. Since $T_j'\subseteq T_j$,
  naturally $\vec{V}_j'\subseteq \vec{V}_j$. We are going to prove
  that if $\vec{V}_j'=\vec{V}_j$ then $T_j'=T_j$.

  \begin{claim}\label{lem:redbound}
    Assume that $E_\xi$ is satisfiable. For every $j$, if
    $\vec{V}_j'=\vec{V}_j$ then $T_j'=T_j$.
  \end{claim}
  \begin{proof}[Proof of \cref{lem:redbound}]
    Let us consider $j\in\{0,\ldots,k\}$ such that
    $\vec{V}_j'=\vec{V}_j$ and let us prove that $T_j'=T_j$.
    By summing up a finite number of solutions of~$E_\xi^0$ (one for
    each transition~$t\in T_j'$), \cref{lem:unbounded} shows that
    there exists a solution $\vec{h}_0$ of~$E_\xi^0$ such that
    $\phi_j^{\vec{h}_0}(t)>0$ for every~$t\in T_j'$.

    Let us consider a cycle of~$G_j$ that contains all the transitions
    of~$T_j$; such a cycle exists since~$G_j$ is strongly connected.
    We denote by~$\psi$ the Parikh image of that cycle.  Notice that
    $\Delta(\psi)\in\vec{V}_j$; since $\vec{V}_j=\vec{V}_j'$, there
    exists a sequence $\theta_1,\ldots,\theta_s$ of cycles of~$G_j'$,
    and a sequence $\lambda_1,\ldots,\lambda_s$ of rational numbers
    such that $\Delta(\psi)=\sum_{r=1}^s\lambda_r\Delta(\phi_r)$,
    where $\phi_r$ is the Parikh image of $\theta_r$.  Let~$p>0$ be a
    natural number such that $p \lambda_r\in\setZ$ for every~$r$.
    Since $\phi_j^{\vec{h}_0}(t)>0$ for every $t\in T_j'$, there
    exists $q\in\setN$ such that
    $p \lambda_r\phi_r\leq q\phi_j^{\vec{h}_0}$ for every~$r$.  It
    follows that $\phi_r'\eqdef q\phi_j^{\vec{h}_0}-p \lambda_r\phi_r$
    maps every $t\in T_j\setminus T_j'$ to zero. Let~$\phi'$ be the mapping
    $p\psi+\sum_{r=1}^s\phi_r'$.  We deduce that
    \begin{equation}\Delta(\phi')=\Delta(qs\phi_j^{\vec{h}_0})=qs\vec{n}_j^{\vec{h}_0}-qs\vec{m}_j^{\vec{h}_0}\end{equation}
    since~$\vec{h}_0$ is a model of~$E_\xi^0$.

    It follows that the sequence~$\vec{h}_0'$ obtained from
    $qs \vec{h}_0$ by replacing the~$j$th tuple by
    $(qs\vec{m}_j^{\vec{h}_0},\phi',qs\vec{n}_j^{\vec{h}_0})$ is a
    model of~$E_\xi^0$. Notice that
    $\phi_j^{\vec{h}_0'}(t)=\phi'(t)\geq p\psi(t)\geq 1$ for every
    $t\in T_j$. \cref{lem:unbounded} shows that $T_j\subseteq
    T_j'$. Hence~$T_j'=T_j$.
  \end{proof}

  Let us return to the proof of~\cref{lem:unboundeddec}.  First
  observe that we can decide in nondeterministic polynomial time
  whether~$E_\xi$ is satisfiable. If it is not the case, then~$L_\xi$
  is empty and we can return the empty set.
  Otherwise, \cref{lem:unbounded} shows that the sets
  $T_1',\ldots,T_j'$ are computable in polynomial time. If $T_j'=T_j$
  for every~$j$, then~$\xi$ is unbounded. Otherwise,~$\xi$ is bounded,
  and there exists~$j$ such that~$T_j'$ is strictly included
  in~$T_j$. \Cref{lem:unbounded} shows that a word $\sigma\in L_\xi$
  cannot use a transition in $T_j\backslash T_j'$ more than
  $\size{\xi}^{\size{\xi}-1}$ times. It follows that we can replace
  the triple $\vec{x}_jG_j\vec{y}_j$ in~$\xi$ by a triple where the
  transitions in $T_j\setminus T'_j$ are taken at most
  $\size{\xi}^{\size{\xi}-1}$ times the VASS~$G_j'$.  Hence
  $\size{\xi'}\leq \size{\xi}^{\size{\xi}}$.  Since~$G_j$ is strongly
  connected, \cref{lem:redbound} shows that the KLM sequences~$\xi'$
  obtained that way satisfy $\rank{\xi'}<_\mathit{lex}\rank{\xi}$.
\end{proof}

\begin{figure}[tbp]
  \centering\hspace*{-2ex}
  \begin{tikzpicture}[auto,on grid,node distance=1.6,initial text={}]
    \node[state,initial by arrow,accepting by arrow](p){$q_\mathit{in}$};
    \node[state,right=2.4 of p,initial by arrow](q){$q_\mathit{out}$};
    \node[state,below right=of q](q1){$q$};
    \node[state,right=2 of q,accepting by arrow](q2){$q'_\mathit{out}$};
    \node[state,right=2.4 of q2,initial by arrow,accepting by arrow](q3){$q_\mathit{out}$};
    \node[state,below=1.13 of q3](q4){$q$};
    
    \path[->,every node/.style={font=\scriptsize,inner sep=1pt}]
      (p)  edge[loop above] node {$\vec{a}_1{=}(0,2,0)$}  ()
      (q)  edge[loop above] node {$\vec{a}_6{=}(1,-1,0)$} ()
      (q)  edge[swap] node {$\vec{a}_7{=}(1,-1,-2)$} (q1)
      (q1) edge[loop below] node {$\vec{a}_8{=}(-2,-1,0)$}()
      (q1) edge[swap]  node {$\vec{a}_9{=}(0,0,0)$}  (q2)
      (q2)  edge[loop above] node {$\vec{a}_6{=}(1,-1,0)$} ()
      (q3)  edge[loop above] node {$\vec{a}_6{=}(1,-1,0)$} ()
      (q4) edge[loop below] node {$\vec{a}_8{=}(-2,-1,0)$}();
  \end{tikzpicture}
  \caption{\label{fig-bound}The VASSes $G_\mathrm{ex}^3$
      (left), $G_\mathrm{ex}^4$ (middle), and
  $G_\mathrm{ex}^5$ (right).}
  \end{figure}
\begin{example}\label{ex-bound}
  Consider the KLM sequences $\xi_\mathrm{ex}^3$ and
  $\xi_\mathrm{ex}^4$
  from \cref{ex-saturated}.  \Cref{lem:unboundeddec} yields
  respectively
  \begin{align*}
  \xi_\mathrm{ex}^5&\eqdef
  ((0,0,2)G_\mathrm{ex}^3(0,\omega,2))\vec
  a_3((1,\omega,2)G_\mathrm{ex}^4(1,1,0))\;,\\
  \xi_\mathrm{ex}^6&\eqdef
  ((0,0,2)G_\mathrm{ex}^3(0,\omega,2))\vec
  a_4((0,\omega,0)G_\mathrm{ex}^5(1,1,0))\;,
  \intertext{%
  where $G_\mathrm{ex}^3$, $G_\mathrm{ex}^4$, and $G_\mathrm{ex}^5$
  are displayed in \cref{fig-bound}.  Applying \cref{sec:sc}
  and \cref{lem:saturated} to $\xi_\mathrm{ex}^5$ yields}
  \xi_\mathrm{ex}^7&\eqdef
  ((0,0,2)G_\mathrm{ex}^3(0,\omega,2))\vec
  a_3((1,\omega,2)G_\mathrm{ex}^6(\omega,\omega,2))\vec
  a_7\\&\quad((\omega,\omega,0)G_\mathrm{ex}^7(\omega,\omega,0))\vec a_9((\omega,\omega,0)G_\mathrm{ex}^6(1,1,0))\;,
  \end{align*}
  where $G_\mathrm{ex}^6$ and $G_\mathrm{ex}^7$ are shown in \cref{fig-clean}.
  The KLM
  sequence~$\xi_\mathrm{ex}^6$ is unsatisfiable, thus
  by \cref{lem-unsat}, it can be discarded.  
\end{example}
\begin{figure}[tbp]
  \centering\hspace*{-2ex}
  \begin{tikzpicture}[auto,on grid,node distance=1.6,initial text={}]
    \node[state,initial by arrow,accepting by arrow](p){$q_\mathit{out}$};
    \node[state,right=4 of p,initial by arrow,accepting by arrow](q){$q$};    
    \path[->,every node/.style={font=\scriptsize,inner sep=1pt}]
      (p)  edge[loop above] node {$\vec{a}_6{=}(1,-1,0)$}  ()
      (q)  edge[loop above] node {$\vec{a}_8{=}(-2,-1,0)$} ();
  \end{tikzpicture}
  \caption{\label{fig-clean}The VASSes $G_\mathrm{ex}^6$
      (left) and $G_\mathrm{ex}^7$ (right).}
  \end{figure}

\subsection{Rigid KLM Sequences}\label{sub-rigid}
A component~$i$ is said to be \emph{fixed} by a VASS
$G=(Q,q_\mathit{in},q_\mathit{out},T)$ if there exists a function
$f_i{:}\,Q\rightarrow\setN$ such that $f_i(q)=f_i(p)+\vec{a}(i)$ for every
transition $(p,\vec{a},q)\in T$. Notice that we can compute in
polynomial time the set of fixed components of $G$, and given such a
component $i$, we can compute in polynomial time a function 
$f_i{:}\,Q\rightarrow\setN$ such that $f_i(q)=f_i(p)+\vec{a}(i)$ for every
transition $(p,\vec{a},q)\in T$.

A KLM sequence $\vec{x}G\vec{y}$ where
$G=(Q,q_\mathit{in},q_\mathit{out},T)$ is a VASS is said to be
\emph{rigid} if for every component~$i$ that is fixed by~$G$ there exists a function
$g_i{:}\,Q\rightarrow\setN$ such that $g_i(q)=g_i(p)+\vec{a}(i)$ for
every transition $(p,\vec{a},q)\in T$, and such that
$g_i(q_\mathit{in})\sqsubseteq \vec{x}(i)$ and
$g_i(q_\mathit{out})\sqsubseteq\vec{y}(i)$.  More generally, a KLM
sequence $\xi=(\vec{x}_0G_0\vec{y}_0)\vec{a}_1\ifams\relax\else
(\vec{x}_1G_1\vec{y}_1)\fi\ldots
\vec{a}_k(\vec{x}_kG_k\vec{y}_k)$ is said to be \emph{rigid} if
$\vec{x}_jG_j\vec{y}_j$ is rigid for every $0\leq j\leq k$.
Rigidity corresponds essentially to the rigid components introduced
by \citeauthor{kosaraju82}.  \ifams\relax\else We prove the following
in \appref{sec-rigid}.\fi

\begin{restatable}{lemma}{rigid}\label{lem:rigid}
  From any strongly connected KLM sequence~$\xi$, we can decide in
  time $\poly(\size{\xi})$ whether~$\xi$ is not rigid.  Moreover, in
  that case we can compute in time $\poly(\size{\xi})$ a KLM
  sequence~$\xi'$ such that $L_\xi=L_{\xi'}$,
  $\rank{\xi'}<_\mathit{lex}\rank{\xi}$, and
  $\size{\xi'}\leq \size{\xi}$.
\end{restatable}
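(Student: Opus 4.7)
The plan is to test rigidity coordinate by coordinate on each triple of~$\xi$ and, when rigidity fails, to identify states that no valid execution can visit and delete them. For each $0\leq j\leq k$ and each coordinate $1\leq i\leq d$, I first decide whether~$i$ is fixed by~$G_j$: a single BFS through~$G_j$ propagating the putative $f_i$ along transitions both decides fixedness and returns a representative potential $f_i\!:Q_j\to\setN$ normalised to $\min_q f_i(q)=0$, in polynomial time.  The triple is then rigid at coordinate~$i$ iff there is an integer $c\geq 0$ with $f_i(q_\mathit{in})+c\sqsubseteq\vec{x}_j(i)$ and $f_i(q_\mathit{out})+c\sqsubseteq\vec{y}_j(i)$.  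If both endpoints are~$\omega$ the choice $c=0$ works; otherwise a finite endpoint pins~$c$ to a specific integer, and the check reduces to verifying $c\geq 0$ together with the compatibility equation $\vec{y}_j(i)-\vec{x}_j(i)=f_i(q_\mathit{out})-f_i(q_\mathit{in})$ when both endpoints are finite.  All of these steps are polynomial in $\size{\xi}$.

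Assume now $\xi$ is not rigid, and pick a witnessing triple~$j$ and coordinate~$i$.  If the two finite endpoints are inconsistent, so that no integer~$c$ satisfies both constraints, then the characteristic system~$E_\xi$ has no model---any Kirchhoff solution~$\phi_j$ already forces $\Delta(\phi_j)(i)=f_i(q_\mathit{out})-f_i(q_\mathit{in})$ on the fixed coordinate---so $L_\xi=\emptyset$ by \cref{lem-unsat} and we take $\xi'$ to be any trivially-unsatisfiable KLM sequence of strictly smaller rank.  In every other non-rigid sub-case, the unique~$c$ forced by the finite endpoints is negative, so $Q_\mathrm{bad}\eqdef\{q\in Q_j : f_i(q)+c<0\}$ is non-empty; nonetheless satisfiability of~$E_\xi$ together with the identity above yields $f_i(q_\mathit{in})+c=\vec{x}_j(i)\geq 0$ and $f_i(q_\mathit{out})+c=\vec{y}_j(i)\geq 0$ (the unseen endpoint being read off through~$\delta_i$), so $q_\mathit{in},q_\mathit{out}\notin Q_\mathrm{bad}$.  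Since~$i$ is fixed, every valid execution of the $j$th triple assigns to coordinate~$i$ at state~$q$ the value $f_i(q)+c$, which must remain in~$\setN$; thus no such execution visits~$Q_\mathrm{bad}$.  Letting $G_j'$ be the restriction of $G_j$ to $Q_j\setminus Q_\mathrm{bad}$ and $\xi'$ be~$\xi$ with its $j$th triple replaced by $\vec{x}_jG_j'\vec{y}_j$, we obtain $L_\xi=L_{\xi'}$, and $\size{\xi'}\leq\size{\xi}$ is immediate since we only delete.

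The key point is the strict rank decrease, which relies on \cref{lem:cyclespan}.  Setting $v\eqdef\dim{\vec V_{G_j}}$, strong connectedness of~$G_j$ gives $\rank{G_j}=(0,\dots,0,|T_j|,0,\dots,0)$ with $|T_j|$ at position~$v$.  Decomposing $G_j'$ into its strongly connected components $C_1,\dots,C_m$ and applying \cref{lem:cyclespan} to each~$C_l$, any transition of~$G_j'$ lying inside some $C_l$ contributes to position $\dim{\vec V_{C_l}}\leq v$ (since $\vec V_{C_l}\subseteq\vec V_{G_j}$), while inter-SCC transitions lie on no cycle of~$G_j'$ and contribute to position~$0$.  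Hence $\rank{G_j'}(l)=0$ for $l>v$ and $\rank{G_j'}(v)\leq|T_j'|<|T_j|$, the strict inequality following from $Q_\mathrm{bad}\neq\emptyset$ and strong connectedness of~$G_j$ (with $|Q_j|\geq 2$), which guarantees that removing a state also removes at least one incident transition.  Since the other triples of~$\xi'$ are unchanged, $\rank{\xi'}$ coincides with $\rank{\xi}$ at all positions above~$v$ and is strictly smaller at position~$v$, giving $\rank{\xi'}<_\mathit{lex}\rank{\xi}$.

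The main obstacle will be the sub-case analysis distinguishing the inconsistent-endpoint situation (where $L_\xi=\emptyset$) from the ``forbidden states'' situation, and the verification that $q_\mathit{in}$ and $q_\mathit{out}$ never end up in~$Q_\mathrm{bad}$---a step that implicitly relies on satisfiability of~$E_\xi$.  The lex-ordering accounting at position~$v$ is where \cref{lem:cyclespan} is indispensable: without it, one could not argue that a local restriction of a single strongly connected~$G_j$ causes the rank tuple of~$\xi$ as a whole to drop leftmost-first.
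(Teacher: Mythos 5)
Your proposal follows the same route as the paper's proof: compute the fixed components and a representative flow $f_i$ by constant propagation, reduce rigidity to an endpoint-consistency condition plus non-negativity of the shifted flow, discard the sequence as unsatisfiable when the two finite endpoints disagree, and otherwise delete the states at which the forced value of the fixed coordinate would go negative. Your explicit rank bookkeeping via \cref{lem:cyclespan} (all of $\rank{G_j}$ sits at position $v=\dim{\vec V_{\!G_j}}$, every transition surviving in $G'_j$ lands at a position $\leq v$, and at least one transition disappears because $G_j$ is strongly connected with $|Q_j|\geq 2$) is correct, and is in fact spelled out in more detail than in the paper.

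One step deserves a flag: you justify $q_\mathit{in},q_\mathit{out}\notin Q_\mathrm{bad}$ by invoking satisfiability of~$E_\xi$. That is not a hypothesis of the lemma, and deciding satisfiability of~$E_\xi$ is only known to be in \NP, so you cannot afford to test it first within the claimed $\poly(\size{\xi})$ budget. The fix is to avoid the detour through~$E_\xi$: the two inequalities $f_i(q_\mathit{in})+c\geq 0$ and $f_i(q_\mathit{out})+c\geq 0$ are directly checkable in polynomial time, and if either fails then---precisely by the flow identity you already use---no execution of the $j$th triple compatible with its finite endpoint can exist, so $L_\xi=\emptyset$ and this sub-case folds into your first branch. (When both endpoints are finite these inequalities follow from endpoint consistency, which is how the paper's \cref{cl-rigid} argues it; the only delicate sub-case is one finite and one $\omega$ endpoint, where the equation ``$f_i(q_\mathit{out})+c=\vec{y}_j(i)$'' you write is ill-typed since the right-hand side may be~$\omega$. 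Also, the symbol~$\delta_i$ is never defined.) With that adjustment the argument is complete and matches the paper's.
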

\ifams\begin{proof}
  Let us assume that~$\xi$ is the KLM sequence $\vec{x}G\vec{y}$ where
  $G=(Q,q_\mathit{in},q_\mathit{out},T)$ is strongly connected (the
  general case can be obtained the same way).  We can compute in
  polynomial time by a straightforward constant propagation algorithm
  the set~$I$ of components that are fixed by~$G$ and for every~$i\in
  I$ a function $f_i{:}\,Q\rightarrow\setN$ such that
  $f_i(q)=f_i(p)+\vec{a}(i)$ for every transition $(p,\vec{a},q)\in
  T$.
    
  \begin{claim}\label{cl-rigid}
    $\xi$ is rigid if and only if the following three conditions hold
    for every~$i\in I$ and for every~$q\in Q$:
  \begin{enumerate}[(i)]
  \item\label{rigid-1} $\vec{y}(i)-f_i(q_\mathit{out})=\vec{x}(i)-f_i(q_\mathit{in})$ if
    $\vec{x}(i),\vec{y}(i)\in\setN$,
  \item\label{rigid-2} $\vec{x}(i)-f_i(q_\mathit{in})+f_i(q)\geq 0$ if
    $\vec{x}(i)\in\setN$, and
  \item\label{rigid-3} $\vec{y}(i)-f_i(q_\mathit{out})+f_i(q)\geq 0$ if
    $\vec{y}(i)\in\setN$.
  \end{enumerate}
  \end{claim}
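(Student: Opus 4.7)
My plan is to first establish the displayed claim, which reduces rigidity at each coordinate $i \in I$ to three easily checkable conditions, and then use it to extract both the polynomial-time decision procedure and the reduction $\xi \mapsto \xi'$. To prove the claim I would observe that any witness $g_i$ must satisfy the same additive recurrence $g_i(q) - g_i(p) = \vec{a}(i) = f_i(q) - f_i(p)$ along every transition of $G$; since $G$ is strongly connected, this forces $g_i - f_i$ to be constant on $Q$, so writing $c_i \in \setZ$ for this constant we have $g_i(q) = c_i + f_i(q)$. The boundary constraint $g_i(q_\mathit{in}) \sqsubseteq \vec{x}(i)$ then fixes $c_i = \vec{x}(i) - f_i(q_\mathit{in})$ when $\vec{x}(i) \in \setN$, and symmetrically $c_i = \vec{y}(i) - f_i(q_\mathit{out})$ when $\vec{y}(i) \in \setN$; when both endpoints are finite these two determinations must agree, which is exactly condition~(i), and the non-negativity $g_i(q) \geq 0$ then yields conditions~(ii) and~(iii) in the respective cases. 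Conversely, given the three conditions, I would pick $c_i$ from the finite endpoint(s) (or any sufficiently large non-negative integer when both are $\omega$) and set $g_i(q) := c_i + f_i(q)$. Since the set $I$ and the functions $f_i$ are computable in polynomial time by a single constant-propagation pass on $G$, each condition of the claim can then be checked in polynomial time, which yields the decision procedure.

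For the transformation when $\xi$ is not rigid, some condition fails at a pair $(i, q)$. If~(ii) fails at~$q$, then along any execution of a word of $L_\xi$ the coordinate~$i$ at state~$q$ would have to equal $\vec{x}(i) - f_i(q_\mathit{in}) + f_i(q) < 0$, which is impossible; hence no word of~$L_\xi$ corresponds to a path visiting~$q$, and I would delete~$q$ together with its incident transitions from~$G$ to obtain $G'$ and set $\xi' := (\vec{x}\,G'\,\vec{y})$. The case where~(iii) fails at some~$q$ is entirely symmetric. If only~(i) fails (both endpoints finite), then every complete path in~$G$ has coordinate-$i$ displacement $f_i(q_\mathit{out}) - f_i(q_\mathit{in}) \neq \vec{y}(i) - \vec{x}(i)$, so $L_\xi = \emptyset$, and I would simply remove any one transition from~$G$ to produce $\xi'$ with $L_{\xi'} \subseteq L_\xi = \emptyset$.

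The main obstacle will be verifying the strict lex-decrease of the rank. Here I would exploit that $G$ is strongly connected: by \cref{lem:cyclespan} all its transitions share a single cycle vector space $\vec{V}$ of some dimension~$k$, so $\rank{G}$ concentrates all $|T|$ transitions at coordinate~$r_k$. After removing at least one transition, the cycles of~$G'$ form a subset of those of~$G$, so every surviving transition $t$ satisfies $\vec{V}_{\!G'}(t) \subseteq \vec{V}$; hence the number of transitions in~$\xi'$ whose cycle-vector-space has dimension exactly~$k$ is strictly less than~$|T|$, with all reassigned transitions moving to strictly lower dimensions. This gives $\rank{\xi'} <_\mathit{lex} \rank{\xi}$, and $\size{\xi'} \leq \size{\xi}$ is immediate from the size definition. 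The general case of a KLM sequence with several triples follows by applying this transformation to whichever $G_j$ witnesses non-rigidity, leaving the other triples untouched.
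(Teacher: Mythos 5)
Your proof of the claim is correct and follows essentially the same route as the paper: both directions rest on the observation that strong connectivity forces any witness $g_i$ to differ from $f_i$ by a constant, which is then pinned down by whichever of the boundary values $\vec{x}(i),\vec{y}(i)$ is finite, with condition~(i) expressing consistency of the two determinations and conditions~(ii) and~(iii) expressing non-negativity. The surrounding material on the decision procedure and the rank-decreasing reduction likewise mirrors the paper's argument for the enclosing lemma.
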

  \begin{proof}[Proof of \cref{cl-rigid}]
    Assume first that~$\xi$ is rigid.  In that case, for every $i\in
    I$ there exists a function $g_i{:}\,Q\rightarrow\setN$ such that
    $g_i(q)=g_i(p)+\vec{a}(i)$ for every transition $(p,\vec{a},q)\in
    T$ and such that $g_i(q_\mathit{in})\sqsubseteq \vec{x}(i)$ and
    $g_i(q_\mathit{out})\sqsubseteq\vec{y}(i)$.  Since $G$ is strongly
    connected, it follows that there exists an integer $z_i\in\setZ$
    such that $g_i(q)=z_i+f_i(q)$ for every $q\in Q$. This equality in
    $q_\mathit{in}$ and $q_{\mathit{out}}$ provides
    $z_i=\vec{x}(i)-f_i(q_\mathit{in})$ if $\vec{x}(i)\in\setN$ and
    $z_i=\vec{y}(i)-f_i(q_\mathit{out})$ if $\vec{y}(i)\in\setN$. We
    deduce that conditions \eqref{rigid-1}, \eqref{rigid-2},
    and \eqref{rigid-3} hold.

    Conversely, assume that these conditions hold and let us prove
    that~$\xi$ is rigid. Let $i\in I$ and let us prove that there
    exists a function $g_i{:}\,Q\rightarrow\setN$ such that
    $g_i(q)=g_i(p)+\vec{a}(i)$ for every transition $(p,\vec{a},q)\in
    T$ and such that $g_i(q_\mathit{in})\sqsubseteq \vec{x}(i)$ and
    $g_i(q_\mathit{out})\sqsubseteq\vec{y}(i)$.  If
    $\vec{x}(i)=\omega$ and $\vec{y}(i)=\omega$, notice that
    $g_i\eqdef f_i$ fullfills the required conditions.  If
    $\vec{x}(i)\in\setN$, then
    condition~\eqref{rigid-1} shows that we define
    $g_i{:}\,Q\rightarrow\setN$ by
    $g_i(q)\eqdef\vec{x}(i)-f_i(q_\mathit{in})+f_i(q)$.  Notice that
    for every transition $(p,\vec{a},q)\in T$, we have
    $g_i(q)=g_i(p)+\vec{a}(i)$.  Observe that
    $g_i(q_\mathit{in})= \vec{x}(i)$.  Let us show that
    $g_i(q_\mathit{out})\sqsubseteq\vec{y}(i)$. If
    $\vec{y}(i)=\omega$, the relation is immediate. Otherwise, by
    condition~\eqref{rigid-1}, we get
    $g_i(q_\mathit{out})= \vec{y}(i)$. We have proved that $g_i$
    fullfills the required conditions. Symmetrically, we
    obtain the case $\vec{y}(i)\in\setN$ and
    $\vec{x}(i)\in\setN_\omega$.  
    We have shown that~$\xi$ is
    rigid.
  \end{proof}
  
  By \cref{cl-rigid}, we can decide in polynomial time whether~$\xi$
  is rigid.  Moreover, if~$\xi$ is not rigid, we can compute in
  polynomial time both~$i\in I$ and~$q\in Q$ such that one of the
  three conditions \eqref{rigid-1}, \eqref{rigid-2},
  and \eqref{rigid-3} does not hold.  If condition~\eqref{rigid-1}
  does not hold, then~$\xi$ cannot be satisfiable, and in particular
  $L_\xi=\emptyset$.  Thus we can consider for~$\xi'$ the KLM sequence
  obtained from~$\xi$ by removing all the transitions and all the
  states except~$q_\mathit{in}$ and~$q_\mathit{out}$.

  Otherwise, if condition~\eqref{rigid-1} holds, then
  either~\eqref{rigid-2} or~\eqref{rigid-3} does not hold.
  Since~\eqref{rigid-1} holds, it follows that
  $q\not\in\{q_\mathit{in},q_\mathit{out}\}$.  Let us show that
  $L_\xi=L_{\xi'}$ where $\xi'\eqdef\vec{x}G'\vec{y}$ and
  $G'\eqdef(Q',q_\mathit{in},q_\mathit{out},T')$, $Q'\eqdef
  Q\backslash\{q\}$, $T'\eqdef T\cap (Q'\times\setZ^d\times Q')$. To
  prove this inclusion, let us consider any $\sigma\in L_\xi$. There
  exists two configurations $\vec{m}$ and $\vec{n}$ and a word
  $\sigma=\vec{a}_1\ldots\vec{a}_k$ of actions such that
  $q_\mathit{in}(\vec{m})\xrightarrow[G]{\sigma}q_\mathit{out}(\vec{n})$. Thus
  there exists a sequence $q_0(\vec{c}_0),\ldots,q_k(\vec{c}_k)$ of
  state-configurations such
  that \begin{equation}q_\mathit{in}(\vec{m})=q_0(\vec{c}_0)\xrightarrow[G]{\vec{a}_1}\cdots \xrightarrow[G]{\vec{a}_k}q_k(\vec{c}_k)=q_\mathit{out}(\vec{n})\;.\end{equation}
  Observe that if $\vec{x}(i)\in\setN$, then $\vec{c}_0(i)=\vec{x}(i)$
  and by induction we get
  $\vec{x}(i)-f_i(q_\mathit{in})+f_i(q_j)=\vec{c}_j(i)\geq 0$ for
  every $0\leq j\leq k$.  Symmetrically, if $\vec{y}(i)\in\setN$, then
  $\vec{y}(i)-f_i(q_\mathit{out})+f_i(q_j)=\vec{c}_j(i)\geq 0$ for
  every $0\leq j\leq k$.  Thus $q\not\in \{q_0,\ldots,q_k\}$ and in
  particular $\sigma\in L_{\xi'}$.
\end{proof}
\fi

\subsubsection{Pumpable KLM Sequences}\label{sub-pump}
\newcommand{\Facc}[2]{\operatorname{Facc}_{#1}(#2)}
\newcommand{\Bacc}[2]{\operatorname{Bacc}_{#1}(#2)}

Given a VASS $G$ and two configurations $\vec{x},\vec{y}$, 
the \emph{forward and backward accelerations} are the vectors
$\Facc{G}{\vec{x}}$ and $\Bacc{G}{\vec{y}}$ in~$\setN_\omega^d$
defined respectively for every
$i\in\{1,\ldots,d\}$ as follows:
\begin{align*}
  \Facc{G}{\vec{x}}(i)&\eqdef\begin{cases}
    \omega & \text{ if }\exists \vec{x}'\geq\vec{x}\text{ with
    }\vec{x}'(i)>\vec{x}(i)\text{ s.t.}\\
    & q_\mathit{in}(\vec{x})\xrightarrow[G]{*}q_\mathit{in}(\vec{x}')\\
    \vec{x}(i) & \text{otherwise}
\end{cases}\\
  \Bacc{G}{\vec{y}}(i)&\eqdef\begin{cases}
    \omega & \text{ if }\exists \vec{y}'\geq\vec{y} \text{ with }
    \vec{y}'(i)>\vec{y}(i)\text{ s.t.}\\& q_\mathit{out}(\vec{y}')\xrightarrow[G]{*}q_\mathit{out}(\vec{y})\\
    \vec{y}(i) & \text{otherwise}
\end{cases}
\end{align*}

Observe that $\Facc{G}{\vec{x}}(i)=\vec{x}(i)$ and
$\Bacc{G}{\vec{y}}(i)=\vec{y}(i)$ for every component~$i$ fixed
by~$G$.  A triple $(\vec{x}G\vec{y})$ is said to be \emph{pumpable} if
$\Facc{G}{\vec{x}}(i)=\omega$ and $\Bacc{G}{\vec{y}}(i)=\omega$ for
every component~$i$ not fixed by~$G$.  More generally, a KLM sequence
$\xi=(\vec{x}_0G_0\vec{y}_0)\vec{a}_1 (\vec{x}_1G_1\vec{y}_1)\ldots
\vec{a}_k (\vec{x}_kG_k\vec{y}_k)$
is said to be \emph{pumpable} if $(\vec{x}_jG_j\vec{y}_j)$ is pumpable
for every $0\leq j\leq k$, and otherwise to be
\emph{unpumpable}.

\begin{remark}
  Pumpability, rigidity, and saturation together correspond
  essentially to \citeauthor{kosaraju82}'s property~$\uptheta2$.  In
  fact, we show in \appref{sub:flow} that if 
  a KLM sequence $\vec{x}G\vec{y}$ is pumpable, rigid, and
  saturated, then there exists a function
  $f{:}\,Q\rightarrow\setN_\omega^d$ such that $f(q)=f(p)+\vec{a}$ for
  every $(p,\vec{a},q)\in T$, and such that
  $f(q_\mathit{in})=\Facc{G}{\vec{x}}$ and
  $f(q_\mathit{out})=\Bacc{G}{\vec{y}}$. 
\end{remark}

\begin{example}\label{ex-unpump}
  The KLM sequence $\xi_\mathrm{ex}^7$ from \cref{ex-bound} is
  unpumpable: indeed, in the triple
  $((\omega,\omega,0)G^6_\mathrm{ex}(1,1,0))$, the components~$1$
  and~$2$ are not fixed\ifams, but we find \else\ but \fi
  $\Bacc{G_\mathrm{ex}^6}{(1,1,0)}(1)=\Bacc{G_\mathrm{ex}^6}{(1,1,0)}(2)=1$.
\end{example}

\paragraph{Deciding Pumpability}

Observe that $\Facc{G}{\vec{x}}$ and $\Bacc{G}{\vec{y}}$ are
computable by performing $2d$ calls to an oracle for the
\emph{coverability problem}~\citep[see, e.g.,][Lem.~3.3]{leroux10}.
By the results of \citet{rackoff78}, we can therefore decide in
exponential space whether a KLM sequence~$\xi$ is pumpable.

\paragraph{Unfolding}

When a KLM sequence~$\xi$ is unpumpable, there is a triple
$(\vec xG\vec y)$ and a component~$i$ not fixed by~$G$ such that
$\Facc{G}{\vec{x}}(i)<\omega$ or $\Bacc{G}{\vec{y}}(i)<\omega$.
Assume that we are in the former case.
If~$\xi$ is strongly connected, then there exists a finite $B\in\+N$
such that $\Facc{G}{\vec{x}}(i)=B$, and
the idea is then to
\emph{unfold}~$G$ by tracking the value of the $i$th component in the
control state.  Classically, such a bound~$B$ is computed by
constructing a \citeauthor{karp69} coverability tree, but this has a
worst-case Ackermannian complexity~\citep{cardoza76}.  Thus the
decomposition algorithms of \citeauthor{mayr81},
\citeauthor{kosaraju82}, and \citeauthor{lambert92} might use
an Ackermannian time in their very first decomposition step.

Here, we refine this decomposition step using insights from
\citeauthor{rackoff78}'s results in~\citep{rackoff78}.  We show that,
if there is a component~$i$ not fixed by~$G$ such that
$\Facc{G}{\vec{x}}(i)<\omega$, then there exists a component~$i'$ not
fixed by~$G$ and such that a double exponential~$B$ suffices.  Formally, let
$\setN_B\eqdef\{0,\ldots,B-1,\omega\}$.  Consider any
$i\in\{1,\ldots,d\}$, $r\in\setN_B$, and $\vec{x}(i)\in\setN_B$; the
\emph{forward $(i,B,r)$-unfolding} of a KLM triple $\vec{x}G\vec{y}$
is the KLM triple $\vec{x}G'\vec{y}$ where
$G'\eqdef(Q\times\setN_B,(q_\mathit{in},\vec{x}(i)),(q_\mathit{out},r),T')$
and~$T'$ is the set of transitions $((p,m),\vec{a},(q,n))$ where
$(p,\vec{a},q)\in T$ and $m,n\in\setN_B$ satisfy $n=m+\vec{a}(i)$ or
$(n=\omega\wedge m+\vec{a}(i)\geq B)$, and such that $m=\omega$
implies $q\neq q_\mathit{in}$.  (The \emph{backward
  $(i,B,r)$-unfolding} is defined symmetrically.)  We show the
following in \appref{sec-pump}.

\begin{restatable}{lemma}{cycleplus}\label{lem:cycleplus}
  Let $\xi=\vec{x}G\vec{y}$ be a KLM sequence and let~$I$ be the set
  of components $i\in\{1,\ldots,d\}$ that are not fixed by~$G$ and
  such that $\Facc{G}{\vec{x}}(i)<\omega$.  If~$I$ is not empty, then
  there exists $i\in I$ such that
  $L_{\xi}=\bigcup_{r\in\setN_B}L_{\xi_r}$ where~$\xi_r$ is the
  forward $(i,B,r)$-unfolding of~$\xi$ and
  $B\eqdef (\norm{\vec{x}}{}+2\size{G})^{1+d^d}$.
\end{restatable}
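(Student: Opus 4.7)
The forward inclusion $\bigcup_{r\in\setN_B}L_{\xi_r}\subseteq L_\xi$ is immediate for any candidate $i\in I$: any complete path in $G'$ projects to a complete path in $G$ by forgetting the state annotations, the action labels are identical, and the realization constraints $\vec{m}\sqsubseteq\vec{x}$ and $\vec{n}\sqsubseteq\vec{y}$ refer to the same $d$-dimensional configurations in both systems. So the content of the lemma lies entirely in the reverse inclusion.

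For the reverse inclusion, fix an $i\in I$ (the precise choice is discussed below) and a witnessing run $q_\mathit{in}(\vec{m}_0)\xrightarrow{\sigma}q_\mathit{out}(\vec{n}_0)$ through configurations $\vec{c}_0,\ldots,\vec{c}_k$. Since $i\in I$ forces $\vec{x}(i)<\omega$ and hence $\vec{m}_0(i)=\vec{x}(i)$, the natural lifting annotates each state $p_j$ with $\vec{c}_j(i)$ while this value remains in $\{0,\ldots,B-1\}$, and with $\omega$ from the first step $j^\star$ (if any) where $\vec{c}_{j^\star}(i)\geq B$. The definition of $T'$ accepts such a lifting and produces a complete path from $(q_\mathit{in},\vec{x}(i))$ to $(q_\mathit{out},r)$, with $r$ either the terminal finite value or $\omega$. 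The sole obstruction is the clause ``$m=\omega$ implies $q\neq q_\mathit{in}$'': the lifting fails precisely when the $i$-th coordinate exceeds $B$ and the run subsequently revisits the input state $q_\mathit{in}$.

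To produce an $i\in I$ avoiding this obstruction, the plan is to mount a Rackoff-style induction~\citep{rackoff78}. Define $B_0\eqdef\norm{\vec{x}}{}+2\size{G}$ and a non-decreasing sequence $B_0\leq B_1\leq\cdots\leq B_d$ with each $B_{k+1}$ polynomially dominated by $B_k$, tuned so that $B_d\leq B=(\norm{\vec{x}}{}+2\size{G})^{1+d^d}$. Induct on the number of coordinates whose values have already been shown to remain below the current bound: either (i) the candidate $i\in I$ satisfies the lifting requirement and we are done, or (ii) there exists a run in which the $i$-th coordinate exceeds $B_k$ and the run afterwards returns to $q_\mathit{in}$. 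In case (ii), apply Dickson's lemma to the state-configurations at visits to $q_\mathit{in}$ having components below $B_k$: this yields two visits at configurations $\vec{d}\leq\vec{d}'$ that are strictly ordered on coordinate $i$, hence a cycle through $q_\mathit{in}$ of nonnegative displacement on every coordinate and strictly positive displacement on $i$. Since the cycle's positive part is controlled by $B_k$, it is realizable from $q_\mathit{in}(\vec{x})$ (finite on $\vec{x}$-finite coordinates, free on $\omega$-coordinates), which contradicts $\Facc{G}{\vec{x}}(i)<\omega$ and forces case (i), completing the inductive step.

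The main obstacle will be calibrating the recursion precisely so that the composed bound closes at $B=(\norm{\vec{x}}{}+2\size{G})^{1+d^d}$ rather than requiring an extra level of exponentiation, and verifying that the cycle extracted in the inductive step is realizable from $\vec{x}$ and not merely from some intermediate configuration reached along the lifted run. The hypothesis $I\neq\emptyset$ intervenes both at the base, to guarantee $\vec{x}(i)<\omega$ so that $(q_\mathit{in},\vec{x}(i))$ is a well-defined initial state of $G'$, and at each inductive step, to ensure that the Dickson-based cycle extraction pinpoints a coordinate in $I$ producing the desired contradiction with the forward acceleration bound.
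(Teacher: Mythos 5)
Your setup is right---the easy inclusion, the identification of the clause ``$m=\omega$ implies $q\neq q_\mathit{in}$'' as the sole obstruction, and the overall plan of contradicting $\Facc{G}{\vec{x}}(i)<\omega$ whenever a run re-enters $q_\mathit{in}$ after coordinate~$i$ has exceeded~$B$---and this matches the paper. The gap is in your extraction step. Applying Dickson's lemma (or pigeonhole) to the visits to $q_\mathit{in}$ whose components lie below the current bound cannot produce two visits $\vec{d}\leq\vec{d}'$ with $\vec{d}(i)<\vec{d}'(i)$: a finite sequence of configurations with all entries $<B_k$ only guarantees, when long enough, a \emph{repeated} configuration, i.e.\ a cycle of displacement zero, which is useless; and nothing forces the configurations at successive visits to $q_\mathit{in}$ to contain an increasing pair that is strict on~$i$---if such a pair were guaranteed, one could read off $\Facc{G}{\vec{x}}(i)=\omega$ directly, which is exactly what is in question. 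Moreover, hunting for comparable pairs along a run is the Karp--Miller argument, whose bounds are Ackermannian; it cannot close at $B=(\norm{\vec{x}}{}+2\size{G})^{1+d^d}$, and avoiding it is the entire point of this lemma.

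The paper's mechanism is different. Writing $C\eqdef\norm{\vec{x}}{}+2\size{G}$, it takes $k'$ to be the \emph{first} index by which \emph{every} coordinate of~$I$ has reached $B=C^{1+d^d}$, and picks for~$i$ the last coordinate to do so, which therefore stays $<B$ before~$k'$. Assuming $q_\mathit{in}$ is revisited after~$k'$, the Rackoff-style \cref{lem:rack}---proved by induction on the number of non-$\omega$ components, which is where your tower of bounds $C^{1+n^n}$ actually lives---replaces the prefix up to~$k'$ by a \emph{short} run (fewer than $C^{(d+1)^{d+1}}$ steps) ending in a configuration all of whose coordinates are at least $C-\size{G}$ \emph{simultaneously}; appending a simple path back to $q_\mathit{in}$ loses at most $\size{G}$ per coordinate and lands strictly above $C-2\size{G}=\norm{\vec{x}}{}\geq\vec{x}(i)$, so the resulting cycle has strictly positive displacement on every $i\in I$ at once. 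This simultaneity is essential and is precisely what a per-coordinate Dickson extraction does not deliver. Finally, your worry about firing the cycle from $\vec{x}$ rather than from an intermediate configuration is legitimate and is resolved in the paper by a separate argument: the cycle is first shown fireable from $\Facc{G}{\vec{x}}$ using the flow functions $f_i$ on the fixed coordinates, and then prefixed by $\sigma^n$ for $n$ large enough to pump up the coordinates where $\vec{x}(i)<\omega$ but $\Facc{G}{\vec{x}}(i)=\omega$, yielding $\Facc{G}{\vec{x}}(i)=\omega$ for all $i\in I$ and the desired contradiction with $I\neq\emptyset$.
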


Of course, we also require that unfolding $\vec{x}G\vec{y}$ decreases
the rank.  The condition that $m=\omega$ must imply
$q\neq q_\mathit{in}$ in the unfolding is central for the proof of the
following lemma\ifams\relax\else, shown in \appref{sec-pump}\fi.

\begin{restatable}{lemma}{rankred}\label{lem:rankred}
  Let $\xi=\vec{x}G\vec{y}$ be a strongly connected KLM sequence and
  let~$i$ be a component not fixed by~$G$ and such that
  $\vec{x}(i)\in\setN_B$ for some~$B\in\setN$.  Then the
  $(i,B,r)$-unfolding~$\xi'$ of~$\xi$ satisfies
  $\rank{\xi'}<_\mathit{lex}\rank{\xi}$ for all $r\in\+N_B$.
\end{restatable}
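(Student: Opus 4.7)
The plan is to compare the tuples $\rank{G'}$ and $\rank{G}$ position by position, with the highest index being most significant. Let $\vec V$ be the vector space spanned by the displacements of the cycles of $G$, and set $d_0 \eqdef \dim \vec V$. By \cref{lem:cyclespan} and strong connectivity of $G$, $\VS(t)=\vec V$ for every $t\in T$, so $\rank{G}$ has $|T|$ at position $d_0$ and zeros elsewhere. Since $i$ is not fixed and $G$ is strongly connected, the standard fact that ``fixedness of $i$ is equivalent to every cycle having zero $i$-displacement'' (one direction is immediate; the converse uses strong connectivity to define $f_i$ along a spanning tree and a shift by a constant to ensure non-negativity) yields a cycle with nonzero $i$-displacement. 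Hence $d_0\geq 1$ and $\vec V\cap\{\vec v:\vec v(i)=0\}$ is a proper subspace of $\vec V$.

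The next step is to bound $\dim\VS(t')$ for every transition $t'$ of $G'$. Projecting via $(p,m)\mapsto p$, every cycle of $G'$ becomes a closed walk of $G$ with the same displacement; its Parikh image satisfies Kirchhoff's conservation law and decomposes as a nonnegative integer combination of simple-cycle Parikh images of $G$, hence its displacement lies in $\vec V$. Therefore $\VS(t')\subseteq\vec V$ and $\dim\VS(t')\leq d_0$ for every $t'\in T'$; in particular positions strictly above $d_0$ vanish in both $\rank{G'}$ and $\rank{G}$.

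The key step is to split $T'$ according to whether the source's second component is finite or $\omega$. From the unfolding rule, from a source $(p,\omega)$ the only admissible targets have $n=\omega$, so once the second component reaches $\omega$ it remains $\omega$. Consequently, any cycle of $G'$ passing through a transition whose source has finite $m$ is confined to the finite layer, where the second component evolves by the exact equation $n=m+\vec a(i)$; closing the cycle back to the starting value of $m$ forces the cycle's $i$-displacement to vanish. Hence $\VS(t')\subseteq\vec V\cap\{\vec v:\vec v(i)=0\}$ for every $t'$ with finite-$m$ source, so $\dim\VS(t')\leq d_0-1$ and only transitions with source $(p,\omega)$ can contribute to position $d_0$ of $\rank{G'}$.

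Finally, I count those $\omega$-sourced transitions. By the unfolding definition they are exactly the $((p,\omega),\vec a,(q,\omega))$ for $(p,\vec a,q)\in T$ with $q\neq q_\mathit{in}$, since the restriction ``$m=\omega$ implies $q\neq q_\mathit{in}$'' removes precisely the copies of $G$-transitions targeting $q_\mathit{in}$. Because $d_0\geq 1$ the set $T$ contains cycles and so is non-empty, and strong connectivity then forces $q_\mathit{in}$ to have positive in-degree in $G$; hence at least one transition of $T$ is dropped, leaving at most $|T|-1$ transitions in the $\omega$-layer. This bounds position $d_0$ of $\rank{G'}$ strictly below the $|T|$ at position $d_0$ of $\rank{G}$, giving $\rank{\xi'}<_\mathit{lex}\rank{\xi}$ as required, uniformly in $r\in\+N_B$. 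The main obstacle I expect is the finite-layer trap argument in the third paragraph: it is what makes the $(i,B,r)$-unfolding rank-reducing rather than rank-preserving, and it is also the step that explains why the ``$m=\omega\Rightarrow q\neq q_\mathit{in}$'' clause in the unfolding definition is not cosmetic but essential.
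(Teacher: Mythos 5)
Your proof is correct and follows essentially the same route as the paper's: project cycles of $G'$ onto $G$ to get $\VS(t')\subseteq\vec V$, observe that the $\omega$-layer is absorbing so any cycle through a finite-$m$ transition has zero $i$-displacement (whence $\VS(t')\subsetneq\vec V$ because $i$ is not fixed), and count the $\omega$-sourced transitions as at most $|T|-1$ thanks to the ``$m=\omega\Rightarrow q\neq q_\mathit{in}$'' clause. Your explicit tuple-position bookkeeping and the justification that $q_\mathit{in}$ has positive in-degree are slightly more detailed than the paper's write-up, but the substance is identical.
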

\ifams\begin{proof}
  Assume that $G=(Q,q_\mathit{in},q_\mathit{out},T)$ and let
  $\xi'=\vec{x}G'\vec{y}$ be the $(i,B,r)$-unfolding of $G$ where
  $G'=(Q\times\setN_B,(q_\mathit{in},\vec{x}(i)),(q_\mathit{out},r),T')$.
  Let~$\vec{V}$ be the vector space generated by the displacements of the
  cycles of~$G$. As~$G$ is strongly connected, \cref{lem:cyclespan}
  shows that $\VS(t)=\vec{V}$ for every transition~$t$ in~$T$.

  Observe that since~$i$ is not fixed by~$G$, it means that there
  exists a vector~$\vec{v}\in\vec{V}$ such that $\vec{v}(i)\not=0$. In
  particular the dimension of~$\vec{V}$ is larger than or equal to~one.

  Let us observe that every cycle of~$G'$ labelled by a word~$\sigma$
  corresponds (by projecting on the first component of its control
  states) to a cycle of~$G$ also labelled by $\sigma$.  It follows
  that the displacement of every cycle of $G'$ is in $\vec{V}$,
  therefore $\vec{V}_{\!G'}(t')\subseteq \vec{V}$ for every
  transition~$t'$ in~$T'$.  Let us consider such a transition
  $t'=((p,m),\vec{a},(q,n))$ from~$T'$, such that $(p,\vec{a},q)\in T$
  and $m,n\in\setN_B$.  For the transitions $t'\in T'$ such that
  $m=\omega$, then $n=\omega$ and $q\neq q_\mathit{in}$, thus there
  are at most $|T|-1$ such transitions.  For the other transitions in
  $T'$, i.e., such that $m\not=\omega$, let us prove that $\vec
  V_{\!G'}(t')$ is strictly included in $\vec{V}$. If there is no
  cycle using~$t'$, then $\vec{V}_{G'}(t')=\{\vec{0}\}$ and we are
  done.  Otherwise, notice that this cycle keep tracks in~$G'$ of the
  precise displacement on the component~$i$ since there is no way to
  move from a state in $Q\times\{\omega\}$ to a state in
  $Q\times\{0,\ldots,B-1\}$.  It follows that the displacement of such
  a cycle is zero on component~$i$.  Hence the vector $\vec{v}$ we
  singled out earlier is not in~$\vec{V}_{\!G'}(t')$ and we have proven
  that $\vec{V}_{\!G'}(t')$ is strictly included in $\vec{V}$.

  This shows that
  $\rank{G'}<_\mathit{lex}\rank{G}$.
\end{proof}
\fi

Together, the previous two lemmas allow to show the following.
\begin{restatable}{lemma}{pump}\label{lem:pump}
  Whether a KLM sequence~$\xi$ is pumpable is in \EXPSPACE.  Moreover,
  if~$\xi$ is strongly connected and unpumpable, we can compute in
  time $\exp(\size{\xi}^{2+d^d})$ a finite set~$\Xi$ of KLM sequences
  such that $L_\xi=\bigcup_{\xi'\in\Xi}L_{\xi'}$ and such that
  $\rank{\xi'}<_\mathit{lex}\rank{\xi}$ and
  $\size{\xi'}\leq \size{\xi}^{2+d^d}$ for every $\xi'\in\Xi$.
\end{restatable}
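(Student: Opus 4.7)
The plan is to establish the two parts of the lemma separately, leveraging \cref{lem:cycleplus} and \cref{lem:rankred} for the constructive part.

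For decidability of pumpability in \EXPSPACE, I would proceed triple by triple in $\xi=(\vec x_0G_0\vec y_0)\vec a_1\dots\vec a_k(\vec x_kG_k\vec y_k)$. Pumpability of $(\vec x_jG_j\vec y_j)$ is defined component-wise in terms of $\Facc{G_j}{\vec x_j}(i)$ and $\Bacc{G_j}{\vec y_j}(i)$. As recalled right before the lemma, each of these accelerations reduces to $O(d)$ coverability queries on a VASS whose size is polynomial in $\size{\xi}$ (for the forward case, cover $q_\mathit{in}(\vec x)$ from itself while forcing a strict increase on coordinate $i$ via standard gadgets; symmetrically for the backward case). By \citeauthor{rackoff78}'s \EXPSPACE\ upper bound for coverability, each query is in \EXPSPACE, and since \EXPSPACE\ is closed under polynomially many calls, the whole test remains in \EXPSPACE.

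For the decomposition part, assume $\xi$ is strongly connected and unpumpable. Then some triple $(\vec x_jG_j\vec y_j)$ in $\xi$ has a non-fixed component $i$ with $\Facc{G_j}{\vec x_j}(i)<\omega$ or $\Bacc{G_j}{\vec y_j}(i)<\omega$. Without loss of generality (the backward case is symmetric), assume the first. Apply \cref{lem:cycleplus} to select an index $i$ in the non-empty set~$I$ of non-fixed bounded-forward components, yielding the bound $B\eqdef(\|\vec x_j\|+2\size{G_j})^{1+d^d}\leq \size{\xi}^{1+d^d}$. Form $\Xi$ by replacing the triple $(\vec x_jG_j\vec y_j)$ by its forward $(i,B,r)$-unfolding, for each $r\in\setN_B$; by \cref{lem:cycleplus} the action language is preserved, i.e., $L_\xi=\bigcup_{\xi'\in\Xi}L_{\xi'}$.

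For the rank condition, note that $G_j$ is strongly connected and $i$ is not fixed by $G_j$, so \cref{lem:rankred} applies to the unfolded triple and yields $\rank{\xi'}<_\mathit{lex}\rank{\xi}$ for every $\xi'\in\Xi$; the other triples and actions in $\xi$ are unchanged, hence the overall rank strictly decreases componentwise in the $\leq_\mathit{lex}$ order. For sizes, the unfolded VASS~$G_j'$ has state set $Q_j\times\setN_B$ of cardinality $|Q_j|\cdot(B+1)$, transition set of cardinality at most $|T_j|\cdot(B+1)^2$, and total displacement norm bounded by $(B+1)^2\sum_{t\in T_j}\|\Delta(t)\|$, so $\size{G_j'}\leq (B+1)^2\size{G_j}\leq\size{\xi}^{2+2d^d}$; the other triples are unchanged and the whole KLM sequence stays within $\size{\xi}^{2+d^d}$ after absorbing constants into the exponent (this is where I expect to be most careful). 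The cardinality $|\Xi|=B+1$ and the construction of each $\xi'$ proceeds in time polynomial in its size, yielding the claimed $\exp(\size{\xi}^{2+d^d})$ time bound.

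The main obstacle I anticipate is the bookkeeping for the size bound: one must verify that the blow-up induced by the unfolding (which is quadratic in $B$, hence essentially $\size{\xi}^{2(1+d^d)}$) can be folded into the stated bound $\size{\xi}^{2+d^d}$, which requires the constants hidden in $\size{\cdot}$ and the $2(d+1)^{d+1}$ factor in \eqref{eq-klm-size} to dominate. The secondary care-point is checking that \cref{lem:cycleplus} and \cref{lem:rankred} really do apply simultaneously to the same chosen index $i$, which is immediate since \cref{lem:cycleplus} already selects $i\in I$ (non-fixed) and $G_j$ is strongly connected by hypothesis, matching the premises of \cref{lem:rankred}.
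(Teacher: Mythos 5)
Your overall route is exactly the paper's: decide pumpability by reducing each acceleration to coverability queries and invoking \citeauthor{rackoff78}'s \EXPSPACE\ bound, then, in the unpumpable case, apply \cref{lem:cycleplus} to pick a suitable component~$i$ and the bound $B=(\norm{\vec x}{}+2\size{G})^{1+d^d}$, replace the offending triple by its forward $(i,B,r)$-unfoldings for all $r\in\setN_B$, and conclude the rank decrease from \cref{lem:rankred}. The one place where your write-up does not close is precisely the spot you flagged: the size accounting. Your estimate $|T'|\leq|T|\cdot(B+1)^2$ leads to $\size{\xi'}\lesssim\size{\xi}^{3+2d^d}$, and no amount of constant-juggling in \eqref{eq-klm-size} will fold an exponent $3+2d^d$ into the claimed $2+d^d$; as written, the bound genuinely fails.

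The fix is that your transition count is an overestimate. In the $(i,B,r)$-unfolding, for a given original transition $(p,\vec a,q)\in T$ and a given source annotation $m\in\setN_B$, the target annotation $n$ is \emph{uniquely determined}: if $m\in\{0,\dots,B-1\}$ then either $m+\vec a(i)\in\{0,\dots,B-1\}$ and $n$ must equal it, or $m+\vec a(i)\geq B$ and $n$ must be $\omega$, or $m+\vec a(i)<0$ and there is no transition; and if $m=\omega$ then $n=\omega$. Hence $|T'|\leq(B+1)\,|T|$ and likewise the total displacement norm grows by at most a factor $B+1$, so $\size{G'}\leq(B+1)\size{G}$ and $\size{\xi_r}\leq(1+B)\size{\xi}\leq\size{\xi}^{1+d^d}\cdot\size{\xi}=\size{\xi}^{2+d^d}$, which is exactly the computation in the paper. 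With that correction your argument coincides with the paper's proof; everything else (the simultaneous applicability of \cref{lem:cycleplus} and \cref{lem:rankred} to the same $i$, the cardinality $|\Xi|=B+1$, and the resulting time bound) is fine.
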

\begin{proof}
  We have already argued that pumpability is decidable in exponential
  space.  Assume that~$\xi$ is strongly connected and unpumpable.
  Then there is a triple~$\vec{x}G\vec{y}$ in~$\xi$ and a
  component~$i$ not fixed by~$G$ such that
  $\Facc{G}{\vec{x}}(i)<\omega$ or $\Bacc{G}{\vec{y}}(i)<\omega$.
  Let us consider the former case and define
  $B\eqdef (\norm{\vec{x}}{}+2\size{G})^{1+d^d}$.

  \Cref{lem:cycleplus} shows that
  $L_\xi=\bigcup_{r\in\setN_B}L_{\xi_r}$ where $\xi_r$ is the KLM
  sequence obtained from~$\xi$ by replacing the KLM triple
  $\vec{x}G\vec{y}$ by its $(i',B,r)$-unfolding for a
  suitable~$i'$. \Cref{lem:rankred} shows that
  $\rank{\xi_r}<_\mathit{lex}\rank{\xi}$.  Finally,
  $B<\size{\xi}^{1+d^d}$ and thus
  $\size{\xi_r}\leq (1+B)\size{\xi}\leq \size{\xi}^{2+d^d}$.
\end{proof}

\begin{figure}[tbp]
  \centering\hspace*{-2ex}\scalebox{.83}{
  \begin{tikzpicture}[auto,on grid,node distance=1.2,initial text={}]
    \node[state,initial by arrow](p){$0$};
   \node[state,right=1.6 of p,accepting by arrow](q){$1$};
    \node[state,below right=of p]{$\omega$};
    \node[state,right=2 of q](p1){$0$}; 
   \node[state,right=1.6 of p1,initial by arrow,accepting by arrow](q1){$1$};   
    \node[state,below right=of p1]{$\omega$};
    \node[state,right=2 of q1](p2){$0$}; 
   \node[state,right=1.6 of p2,accepting by arrow](q2){$1$};   
    \node[state,below right=of p2,initial by arrow]{$\omega$};
    \path[->,every node/.style={font=\scriptsize,inner sep=1pt}]
      (p)  edge[bend left] node {$\vec{a}_6{=}(1,-1,0)$}  (q)
      (p1)  edge[bend left] node {$\vec{a}_6{=}(1,-1,0)$} (q1)
      (p2)  edge[bend left] node {$\vec{a}_6{=}(1,-1,0)$} (q2);
  \end{tikzpicture}}
  \caption{\label{fig-normal}The VASSes $G_\mathrm{ex}^8$
      (left), $G_\mathrm{ex}^9$ (middle), and~$G_\mathrm{ex}^{10}$ (right).}
  \end{figure}
\begin{example}\label{ex-pump}
  Consider again \cref{ex-unpump} and in particular component~$1$.
  Then $B=1$ suffices, and we can unfold along the first
  component, yielding three new KLM triples
  \begin{align*}
  \xi_\mathrm{ex}^8&\eqdef
  ((0,0,2)G_\mathrm{ex}^3(0,\omega,2))\vec
  a_3((1,\omega,2)G_\mathrm{ex}^6(\omega,\omega,2))\vec
  a_7\\&\quad((\omega,\omega,0)G_\mathrm{ex}^7(\omega,\omega,0))\vec a_9((\omega,\omega,0)G_\mathrm{ex}^8(1,1,0))\;,\\
  \xi_\mathrm{ex}^9&\eqdef
  ((0,0,2)G_\mathrm{ex}^3(0,\omega,2))\vec
  a_3((1,\omega,2)G_\mathrm{ex}^6(\omega,\omega,2))\vec
  a_7\\&\quad((\omega,\omega,0)G_\mathrm{ex}^7(\omega,\omega,0))\vec a_9((\omega,\omega,0)G_\mathrm{ex}^9(1,1,0))\;,\\
  \xi_\mathrm{ex}^{10}&\eqdef
  ((0,0,2)G_\mathrm{ex}^3(0,\omega,2))\vec
  a_3((1,\omega,2)G_\mathrm{ex}^6(\omega,\omega,2))\vec
  a_7\\&\quad((\omega,\omega,0)G_\mathrm{ex}^7(\omega,\omega,0))\vec a_9((\omega,\omega,0)G_\mathrm{ex}^{10}(1,1,0))\;,
  \end{align*}
  where~$G_\mathrm{ex}^8$, $G_\mathrm{ex}^9$, and~$G_\mathrm{ex}^{10}$ are shown
  in \cref{fig-normal}.  When applying \cref{sec:sc,lem:saturated},
  $\xi_\mathrm{ex}^8$ and~$\xi_\mathrm{ex}^9$ are respectively
  decomposed into
  \begin{align*}
  \xi_\mathrm{ex}^{11}&\eqdef
  ((0,0,2)G_\mathrm{ex}^3(0,\omega,2))\vec
  a_3((1,\omega,2)G_\mathrm{ex}^6(\omega,\omega,2))\vec
  a_7\\&\quad((\omega,\omega,0)G_\mathrm{ex}^7(0,2,0))\vec
         a_9((0,2,0)G_\mathrm{ex}^{11}(0,2,0))\vec a_6\\&\quad((1,1,0)G_\mathrm{ex}^{11}(1,1,0))\;,\\
  \xi_\mathrm{ex}^{12}&\eqdef
  ((0,0,2)G_\mathrm{ex}^3(0,\omega,2))\vec
  a_3((1,\omega,2)G_\mathrm{ex}^6(\omega,\omega,2))\vec
  a_7\\&\quad((\omega,\omega,0)G_\mathrm{ex}^7(1,1,0))\vec a_9((1,1,0)G_\mathrm{ex}^{11}(1,1,0))\;,
  \end{align*}
  where $G_\mathrm{ex}^{11}=(\{q\},\{q\},\{q\},\emptyset)$ is the
  trivial VASS with no transitions, while~$G_\mathrm{ex}^{10}$ is discarded.
\end{example}

\subsection{Normal KLM Sequences}\label{sub-normal}
A KLM sequence is said to be \emph{clean} if it is satisfiable
(see \cref{sub-chara}), strongly connected (see \cref{sub-scc}), and
saturated (see \cref{sub-saturated}).  It is \emph{normal} if it is
clean, rigid (see \cref{sub-rigid}), pumpable (see \cref{sub-pump}), and unbounded
(see \cref{sub-unbounded}).

\subsubsection{Cleaning Lemma}
We can transform any KLM sequence into a finite set of clean KLM
sequences thanks to the following lemma.
\begin{lemma}[Cleaning]\label{lem:initialreduction}
  From any KLM sequence $\xi$, we can compute in time
  $\exp(g(\size{\xi}))$ a finite set $\mathrm{clean}(\xi)$ of clean
  KLM sequences such that
  $L_\xi=\bigcup_{\xi'\in\mathrm{clean}(\xi)}L_{\xi'}$ and such that
  $\rank{\xi'}\leq_\mathit{lex}\rank{\xi}$ and
  $\size{\xi'}\leq g(\size\xi)$ for every
  $\xi'\in\mathrm{clean}(\xi)$, where $g(x)\eqdef x^{x}$.
\end{lemma}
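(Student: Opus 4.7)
The plan is to obtain $\mathrm{clean}(\xi)$ by composing the three preceding tools in sequence: strong-connectedness decomposition (\cref{sec:sc}), saturation (\cref{lem:saturated}), and satisfiability filtering (justified by \cref{lem-unsat}). Cleanness was defined as the conjunction of these three properties, so the task reduces to choosing an order of application that lets each step preserve the invariants already established by the previous ones.

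First, if $\xi$ is not strongly connected, I apply \cref{sec:sc} to obtain in time $\exp(\size{\xi})$ a finite family $\Xi_1$ of strongly connected KLM sequences $\xi_1$ with $L_\xi = \bigcup_{\xi_1 \in \Xi_1} L_{\xi_1}$, $\rank{\xi_1} <_\mathit{lex} \rank{\xi}$, and $\size{\xi_1} \leq \size{\xi}$; otherwise I set $\Xi_1 \eqdef \{\xi\}$. Next, to each $\xi_1 \in \Xi_1$ I apply \cref{lem:saturated}, obtaining a finite family $\Xi_2(\xi_1)$ of saturated strongly connected KLM sequences with languages summing to $L_{\xi_1}$, with $\rank{\xi_2} \leq_\mathit{lex} \rank{\xi_1}$, and with $\size{\xi_2} \leq \size{\xi_1}^{\size{\xi_1}} \leq \size{\xi}^{\size{\xi}} = g(\size{\xi})$. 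Crucially, \cref{lem:saturated} only modifies the $\omega$-components of the endpoint vectors $\vec x_j, \vec y_j$, so strong connectedness of each $G_j$ is inherited from $\xi_1$. Finally I form $\Xi_3 \eqdef \bigcup_{\xi_1 \in \Xi_1} \Xi_2(\xi_1)$ and filter out its unsatisfiable members, using the fact that satisfiability of a characteristic system can be decided in \NP\ (as already noted in the proof of \cref{lem:unboundeddec}), and hence certainly within the claimed budget. By \cref{lem-unsat}, dropping the unsatisfiable sequences preserves the union of action languages, so I set $\mathrm{clean}(\xi)$ to be the resulting family.

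It remains to check the three bookkeeping claims. The union of languages unfolds as $L_\xi = \bigcup_{\xi_1} L_{\xi_1} = \bigcup_{\xi_1} \bigcup_{\xi_2 \in \Xi_2(\xi_1)} L_{\xi_2} = \bigcup_{\xi' \in \mathrm{clean}(\xi)} L_{\xi'}$ because the discarded sequences contribute nothing by \cref{lem-unsat}. Rank monotonicity follows by transitivity of $\leq_\mathit{lex}$ from the two preceding lemmata, and the size bound is exactly $g(\size{\xi})$ as computed above. For the running time, \cref{sec:sc} produces at most $\exp(\size{\xi})$ sequences, to each of which \cref{lem:saturated} is applied in time $\exp(\size{\xi_1}^{\size{\xi_1}}) \leq \exp(g(\size{\xi}))$; the satisfiability filter runs in time polynomial in the size of each $\xi_2$, which is dominated by $g(\size{\xi})$. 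Summing the contributions yields an overall bound of $\exp(g(\size{\xi}))$ as required.

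There is no real obstacle here: all the substantive work has been done in \cref{sec:sc,lem:saturated,lem-unsat}. The only point requiring a brief verification is that applying \cref{lem:saturated} after \cref{sec:sc} does not break strong connectedness---which is immediate because saturation only adjusts the boundary vectors $\vec x_j, \vec y_j$ and never modifies the underlying VASSes $G_j$---and symmetrically that satisfiability filtering preserves both saturation and strong connectedness, which is trivial since it only removes entire sequences from the family.
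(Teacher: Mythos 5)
Your proof is correct and follows essentially the same route as the paper's: decompose into strongly connected sequences via \cref{sec:sc}, saturate each via \cref{lem:saturated} (whose statement already guarantees the output is both saturated and strongly connected), and discard unsatisfiable sequences using \cref{lem-unsat}, with the same size, rank, and time bookkeeping. The extra verification that saturation preserves strong connectedness is harmless but already subsumed by the statement of \cref{lem:saturated}.
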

\begin{proof}
  By \cref{sec:sc}, we can compute a finite set $\Xi$ of strongly
  connected KLM sequences such that
  $L_\xi=\bigcup_{\xi'\in\Xi}L_{\xi'}$ and such that
  $\rank{\xi'}\leq_\mathit{lex}\rank{\xi}$ and
  $\size{\xi'}\leq\size{\xi}$ for every $\xi'\in\Xi$. By
  applying \cref{lem:saturated} to each KLM sequence in $\Xi$, we
  compute in exponential time a finite set $\Xi'$ of saturated
  strongly connected KLM sequences such that
  $\bigcup_{\xi'\in\Xi}L_{\xi'}=\bigcup_{\xi''\in \Xi'}L_{\xi''}$ and
  such that $\rank{\xi''}\leq_\mathit{lex}\rank{\xi}$ and
  $\size{\xi''}\leq \size{\xi}^{\size{\xi}}$ for every $\xi''\in\Xi'$.
  By \cref{lem-unsat}, we can safely remove the unsatisfiable KLM
  sequences from $\Xi'$---which can be performed in nondeterministic
  time polynomial in $\sum_{\xi''\in\Xi''}\size{\xi''}$ since each
  $E_{\xi''}$ is of size polynomial in $\size{\xi''}$--- and we obtain
  a set $\mathrm{clean}(\xi)$ satisfying the lemma.
\end{proof}

\subsubsection{Decomposition Lemma}
In order to decompose a KLM sequence into a finite set of normal KLM
sequences, the decomposition algorithm applies as many times as
possible the \emph{decomposition step} defined by the following lemma.
\begin{lemma}[Decomposition]\label{lem:inductivereduction}
  Let $\xi$ be a clean KLM sequence.  If $\xi$ is not normal, we can
  compute in time $\exp(h(\size{\xi}))$ a finite set $\mathrm{dec}(\xi)$
  of clean KLM sequences such that
  $L_\xi=\bigcup_{\xi'\in\mathrm{dec}(\xi)}L_{\xi'}$ and such that
  $\rank{\xi'}<_\mathit{lex}\rank{\xi}$ and $\size{\xi'}\leq
  h(\size\xi)$ for every $\xi'\in\mathrm{dec}(\xi)$,
  where $h(x)\eqdef x^{x^{1+x}}$.
\end{lemma}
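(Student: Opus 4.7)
The plan is to reduce to the elementary decomposition lemmata established in \cref{sub-elem} together with the Cleaning Lemma: since $\xi$ is clean but not normal, it must fail at least one of rigidity, pumpability, or unboundedness, each of which has a dedicated decomposition lemma that strictly drops the rank; a final cleaning pass then restores the three properties of being clean without harming the rank drop.

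First, I would identify which property $\xi$ violates, in increasing order of test cost: \cref{lem:unboundeddec} gives an \NP\ test for boundedness, \cref{lem:rigid} a polynomial-time test for non-rigidity, and \cref{lem:pump} an \EXPSPACE\ test for unpumpability; because $\xi$ is clean but not normal, at least one of these fires. Applying the matching lemma produces a finite set $\Xi$ of KLM sequences with $L_\xi=\bigcup_{\xi''\in\Xi}L_{\xi''}$ and $\rank{\xi''}<_\mathit{lex}\rank{\xi}$ for every $\xi''\in\Xi$. The worst case for the size bound is the unbounded one, which gives $\size{\xi''}\leq \size{\xi}^{\size{\xi}}$; both \cref{lem:pump} and \cref{lem:unboundeddec} execute within time bounded by $\exp(\size{\xi}^{\size{\xi}})$, hence well inside $\exp(h(\size{\xi}))$.

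Second, I would set $\mathrm{dec}(\xi)\eqdef\bigcup_{\xi''\in\Xi}\mathrm{clean}(\xi'')$ by invoking the Cleaning Lemma \cref{lem:initialreduction} on each $\xi''\in\Xi$. Cleaning guarantees only a weak decrease in rank, but this is precisely what we need: the strict decrease $\rank{\xi''}<_\mathit{lex}\rank{\xi}$ already obtained in the first step is preserved, so every $\xi'\in\mathrm{dec}(\xi)$ satisfies $\rank{\xi'}\leq_\mathit{lex}\rank{\xi''}<_\mathit{lex}\rank{\xi}$. Language preservation composes transparently. The size bound composes to $\size{\xi'}\leq g(\size{\xi''})\leq g\bigl(\size{\xi}^{\size{\xi}}\bigr)=\bigl(\size{\xi}^{\size{\xi}}\bigr)^{\size{\xi}^{\size{\xi}}}=\size{\xi}^{\size{\xi}^{1+\size{\xi}}}=h(\size{\xi})$, and the total running time remains within $\exp(h(\size{\xi}))$ since cleaning is performed at cost $\exp(g(\size{\xi''}))$.

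The main obstacle is really just compositional bookkeeping: one must check that the strict rank decrease survives cleaning (which is exactly why \cref{lem:initialreduction} was stated with $\leq_\mathit{lex}$ rather than equality), and that the arithmetic of composing the worst elementary size bound with $g(x)=x^{x}$ collapses cleanly to the advertised $h(x)=x^{x^{1+x}}$. There is no new mathematical content beyond this verification; the lemma is a structural clause that packages the three elementary decomposition steps together with cleaning into a single decomposition step suitable for driving the main loop of the algorithm.
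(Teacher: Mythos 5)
Your proposal is correct and follows essentially the same route as the paper: apply the appropriate elementary decomposition lemma (\cref{lem:rigid}, \cref{lem:pump}, or \cref{lem:unboundeddec}) to get a strict rank decrease with size bound $\size{\xi}^{\size{\xi}}$, then clean each resulting sequence via \cref{lem:initialreduction}, composing the bounds to $g(\size{\xi}^{\size{\xi}})=h(\size{\xi})$. The only cosmetic difference is that you single out the unbounded case as the worst for size, whereas the paper absorbs the $\size{\xi}^{2+d^d}$ bound from \cref{lem:pump} by noting $2+d^d\leq\size{\xi}$; both observations serve the same purpose.
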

\begin{proof}
  \Cref{lem:rigid,lem:pump,lem:unboundeddec} show that we can compute in double
  exponential time a finite set $\Xi$ of KLM sequences such that
  $L_\xi=\bigcup_{\xi'\in\Xi}L_{\xi'}$ and such that
  $\rank{\xi'}<_\mathit{lex}\rank{\xi}$ and
  $\size{\xi'}\leq \size{\xi}^{\size{\xi}}$ for every
  $\xi'\in\Xi$ by observing that $2+d^d\leq \size{\xi}$.
  For each KLM sequence $\xi'\in \Xi$, by applying
  \cref{lem:initialreduction} we compute in time exponential in
  $g(\size{\xi'})$ a finite set $\mathrm{clean}(\xi')$ of clean KLM
  sequences such that
  $L_{\xi'}=\bigcup_{\xi''\in \mathrm{clean}(\xi')}L_{\xi''}$ and such
  that $\rank{\xi''}\leq_\mathit{lex}\rank{\xi'}$ and
  $\size{\xi''}\leq g(\size{\xi'})$ for each
  $\xi''\in\mathrm{clean}(\xi')$.  We deduce the statement by letting
  $\mathrm{dec}(\xi)\eqdef\bigcup_{\xi'\in \Xi}\mathrm{clean}(\xi')$.
\end{proof}

\subsubsection{Bounded Witness Lemma}\label{sub-normalnonempty}
Thanks to the following lemma, we can stop the decomposition once we
obtain a normal KLM sequence.  The proof given in \appref{sec-normal}
follows the same lines as \citeauthor{kosaraju82}'s, with the added
twist that we extract a bound on the length of minimal words
in~$L_\xi$.
\begin{restatable}[Bounded Witness]{lemma}{normalnonempty}\label{lem:normalnonempty}
  From any normal KLM sequence $\xi$, we can compute in space $O(\ell(\size{\xi}))$
  a word $\sigma\in
  L_\xi$ such that $|\sigma|\leq\ell(\size{\xi})$ where $\ell(x)\mathbin{\raisebox{0pt}[0pt][0pt]{$\eqdef$}} x^{3x}$.
\end{restatable}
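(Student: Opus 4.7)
The plan is to exploit the three structural properties bundled into normality. Since $\xi = (\vec{x}_0G_0\vec{y}_0)\vec{a}_1\cdots\vec{a}_k(\vec{x}_kG_k\vec{y}_k)$ is clean, it is satisfiable, so by \cref{lem:unbounded} there is a model $\vec{h} = (\vec{m}_j,\phi_j,\vec{n}_j)_{0\leq j\leq k}$ of its characteristic system $E_\xi$ in which every bounded variable has value at most $\size{\xi}^{\size{\xi}-1}$, and every unbounded variable can also be kept polynomially small relative to $\size{\xi}^{\size{\xi}}$ by taking a short integer solution. In particular, $\|\vec{h}\| \leq \size{\xi}^{O(\size{\xi})}$. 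I would fix such an $\vec{h}$ and construct a witness word realising precisely the Parikh images $\phi_0, \ldots, \phi_k$ glued by the bridging actions $\vec{a}_1, \ldots, \vec{a}_k$.

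The central technical tool is the ``flow'' function promised by the remark in \cref{sub-pump} (and proved in \appref{sub:flow}): because each triple $(\vec{x}_jG_j\vec{y}_j)$ is pumpable, rigid, and saturated, there is an $f_j\colon Q_j \to \setN_\omega^d$ compatible with every transition of $G_j$ such that $f_j(q_\mathit{in}) = \Facc{G_j}{\vec{x}_j}$ and $f_j(q_\mathit{out}) = \Bacc{G_j}{\vec{y}_j}$, with $\omega$ on every non-fixed component. This gives short witnesses of forward/backward pumping: for every state $q$ of $G_j$ and every threshold $N$, there is a path of length at most $\size{G_j}^{O(1)}\cdot N$ from $q_\mathit{in}$ to $q_\mathit{in}$ (respectively from $q_\mathit{out}$ to $q_\mathit{out}$) that increases all non-fixed components above $N$ while preserving rigid equalities on fixed components.

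Given this, the construction for each $G_j$ proceeds in three phases. First, from the incoming configuration $\vec{m}_j$ at $q_\mathit{in}$ I apply a forward-pumping loop that raises every non-fixed component above a threshold $N_j$ larger than the worst-case deficit encountered while firing any linear arrangement of the multiset $\phi_j$. Second, I execute an actual path of $G_j$ whose Parikh image is $\phi_j$; once the counters are above $N_j$, any topological linearisation of $\phi_j$ is firable, and by the Kirchhoff constraint in $E_\xi$ such a linearisation exists from $q_\mathit{in}$ to $q_\mathit{out}$. Third, from $q_\mathit{out}$ I apply a backward-pumping loop (in reverse) to land on $\vec{n}_j \sqsubseteq \vec{y}_j$. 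The output configuration is then compatible with $\vec{a}_{j+1}$ and the next segment, and the whole concatenation is a word in $L_\xi$.

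For the length bound, the threshold $N_j$ only needs to dominate the negative excursion of a firing of $\phi_j$, which is at most $\|\phi_j\|\cdot\max_t\|\Delta(t)\| \leq \size{\xi}^{O(\size{\xi})}$; each pumping round contributes at most $\size{G_j}$ steps, and we need at most $N_j$ rounds per non-fixed component, yielding per-segment length $\size{\xi}^{O(\size{\xi})}$. Summing over the $k \leq \size{\xi}$ segments gives total length $\size{\xi}^{O(\size{\xi})} \leq \ell(\size{\xi})$ for $\ell(x) = x^{3x}$, provided the hidden constants are absorbed into the cubic exponent. The \textsc{PSpace}-style space bound follows because the witness can be produced segment-by-segment, the current segment's pumping schedule is described by $O(\log \ell(\size{\xi})) = \tilde O(\size{\xi})$ bits of counters, and a short solution $\vec{h}$ of $E_\xi$ can be guessed and verified in polynomial space.

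The main obstacle I expect is the careful choice of the threshold $N_j$ and the proof that pumpability+rigidity+saturation really deliver a flow function with the required geometry (the argument of \appref{sub:flow}); once that is in hand, the rest is a bookkeeping exercise. A secondary difficulty is proving that a solution $\vec{h}$ of $E_\xi$ with small norm exists for the unbounded variables too, which requires invoking Pottier-style bounds on integer solutions of linear systems rather than just \cref{lem:unbounded} (which only bounds the bounded variables).
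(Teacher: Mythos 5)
Your overall architecture---a Pottier-bounded model of $E_\xi$, flow functions extracted from pumpability$+$rigidity$+$saturation, Rackoff-length pumping words, and a per-segment ``pump up / realise a Parikh image / pump down'' word---is exactly the skeleton of the paper's proof (Claims in \appref{sec-normal}). But there is a genuine gap in how your three phases compose. After pumping up at $q_{\mathit{in},j}$ by iterating a cycle $u_j$ and before pumping down at $q_{\mathit{out},j}$ by iterating a cycle $v_j$, the net displacement of the segment is $\Delta(\text{middle})+s\,(\Delta(u_j)+\Delta(v_j))$, and there is no reason the pumping displacements cancel; so if the middle path realises the Parikh image $\phi_j$ of your small model, you do \emph{not} land on $\vec n_j$, and the bridging step $\vec n_j\step{\vec a_{j+1}}\vec m_{j+1}$ fails. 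The paper's resolution is essential and absent from your plan: take a model $\vec h_0$ of the \emph{homogeneous} system $E^0_\xi$ that is strictly positive on every transition and on every $\omega$-component (this needs unboundedness \emph{and} saturation, not just satisfiability), insert between the pumping phases a correction cycle $w_j$ whose Parikh image is $r\phi_j^{\vec h_0}-(\psi_{u_j}+\psi_{v_j})$, and accept that the witness realises the \emph{large} model $\vec h+sr\vec h_0$ rather than the small one---its segment endpoints $\vec m_j^{\vec h}+sr\vec m_j^{\vec h_0}$ and $\vec n_j^{\vec h}+sr\vec n_j^{\vec h_0}$ still satisfy the $\sqsubseteq$ and bridging constraints precisely because $\vec h_0$ is homogeneous. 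Without this device the endpoint bookkeeping does not close.

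Two smaller points. First, ``by the Kirchhoff constraint such a linearisation exists from $q_\mathit{in}$ to $q_\mathit{out}$'' is not quite enough: Euler's Lemma additionally needs the support of $\phi_j$ to be connected, which the paper guarantees by forcing $\phi_j^{\vec h}(t)>0$ for \emph{every} $t\in T_j$ (again by adding full-support homogeneous solutions, using unboundedness); Kirchhoff alone does not give a realisable path. Second, your claimed $\tilde O(\size{\xi})$-bit working space is an over-claim that the statement does not require---the lemma only asks for space $O(\ell(\size{\xi}))$, which is what outputting the witness already costs.
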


\begin{figure}[tbp]
  \centering
  \begin{tikzpicture}[auto,on grid,node distance=1.2]
    \node(3){$\xi_\mathrm{ex}^3$};
    \node[right=3 of 3](4){$\xi_\mathrm{ex}^4$};
    \node[below=1 of 3](7){$\xi_\mathrm{ex}^7$};
    \node[below left=of 7](8){$\xi_\mathrm{ex}^{11}$};
    \node[below right=of 7](9){$\xi_\mathrm{ex}^{12}$};
    \path[-]
      (3) edge (7)
      (7) edge (8)
      (7) edge (9);
   \end{tikzpicture}
   \caption{\label{fig-forest}A decomposition forest for
  $\xi_\mathrm{ex}$.} 
\end{figure}
\begin{example}\label{ex-forest}
  Let us consider \crefrange{ex-scc}{ex-pump}.  We have
  $\mathrm{clean}(\xi_\mathrm{ex})=\{\xi_\mathrm{ex}^3,\xi_\mathrm{ex}^4\}$,
  which are both bounded, and then
  $\mathrm{dec}(\xi^3_\mathrm{ex})=\{\xi_\mathrm{ex}^7\}$ and
  $\mathrm{dec}(\xi^4_\mathrm{ex})=\emptyset$.  Then,
  $\xi_\mathrm{ex}^7$ is unpumpable and
  $\mathrm{dec}(\xi_\mathrm{ex}^7)=\{\xi_\mathrm{ex}^{11},\xi_\mathrm{ex}^{12}\}=\mathrm{fdec}(\xi_\mathrm{ex})$,
  since those last two KLM sequences are normal.  The corresponding
  decomposition forest in depicted in \cref{fig-forest}.  Observe that
  the union provided in \cref{ex-lang} for~$L_{\xi_\mathrm{ex}}$
  corresponds exactly to the union of~$L_{\xi_\mathrm{ex}^{12}}$
  and~$L_{\xi_\mathrm{ex}^{11}}$.
\end{example}

\section{Complexity Upper Bounds}
\label{sec-up}
In this section, we derive upper bounds on the lengths of the branches
in a decomposition forest of a KLM sequence~$\xi_0$, from which we
can in turn provide upper bounds on the size of normal KLM sequences,
the length of small witnesses, the running time of the decomposition
algorithm, and the size of the full decomposition.  The idea is to
exploit the ranking function defined in \cref{sub-rank} in order to
bound how many decomposition steps can be performed along a branch of
a decomposition forest.  We rely for this on a so-called `length
function theorem' from~\cite{schmitz14} to bound the length of
descending sequences of ordinals.  Finally, we classify the running
time complexity using the `fast-growing' complexity classes defined
in~\citep{schmitz16}.  A general introduction to these techniques can
be found in~\citep{schmitz17}.

\subsection{Controlled Sequences of Ranks}

For the purposes of this section, it is more convenient to recast the
ranking function $\mathrm{rank}()$ on KLM sequences from \cref{sub-rank}
in terms of ordinals.  If $\rank{\xi}=(r_d,\dots,r_0)$, then we
associate to~$\xi$ the \emph{ordinal rank} in $\omega^{d+1}$ defined by
\begin{equation}\label{eq-rank}
  \alpha_\xi\eqdef \omega^{d}\cdot r_d+\omega^{d-1}\cdot
  r_{d-1}+\cdots+\omega^0\cdot r_0\;.
\end{equation}
This is just a reformulation, because
$\rank{\xi}<_\mathit{lex}\rank{\xi'}$ if and only if
$\alpha_\xi<\alpha_{\xi'}$.  Along a branch $\xi'_0,\xi_1,\xi_2,\dots$
of a decomposition forest for a KLM sequence~$\xi_0$, we see
therefore a descending sequence of ordinal ranks
\begin{equation}\label{eq-ranks}
  \alpha_{\xi'_0}>\alpha_{\xi_1}>\alpha_{\xi_2}>\cdots
\end{equation}

Though all descending sequences of ordinals are finite, we cannot
bound their lengths in general; e.g., $K+1>K>K-1>\cdots>0$ and
$\omega>K>K-1>\cdots>0$ are descending sequences of length $K+2$ for
all~$K$ in~$\+N$.  Nevertheless, a descending sequence of ordinal
ranks like~\eqref{eq-ranks}, found along a branch of a decomposition
forest, is not arbitrary, because the successive KLM sequences are
either $\xi'_0\in\mathrm{clean}(\xi_0)$ or the result of some
decomposition step, hence one cannot use an arbitrary~$K$ as in these
examples.

\subsubsection{Controlled Sequences of Ordinals}
The previous intuition is captured by the notion of \emph{controlled
  sequences}.  In general, for an ordinal $\alpha<\omega^\omega$ (like
the ordinal ranks defined by~\eqref{eq-rank}), let us write $\alpha$
in Cantor normal form as
$\alpha=\omega^{n}\cdot c_n+\cdots+\omega^0\cdot c_0$ with
$c_0,\dots,c_n$ and $n$ in~$\+N$, and define its \emph{size} as
$N\alpha\eqdef\max\{n,\max_{0\leq i\leq n}c_i\}$.  Thus, for the
ordinal rank~$\alpha_\xi$ defined in~\eqref{eq-rank} for a KLM
sequence~$\xi$ with $\rank{\xi}=(r_d,\dots,r_0)$,
\begin{align}\label{eq-norm}
  N\alpha_\xi&=\max\{d,\max_{0\leq i\leq d}r_i\}\;.
\end{align}

Let $n_0$ be a natural number in~$\+N$ and $h{:}\,\+N\to\+N$ a
monotone inflationary function, i.e., $x\leq h(x)$ and
$x\leq y$ implies $h(x)\leq h(y)$.
A sequence $\alpha_0,\alpha_1,\dots$
of ordinals below~$\omega^\omega$ is \emph{$(n_0,h)$-controlled} if,
for all $j$ in~$\+N$,
\begin{equation}\label{eq-ctrl}
  N\alpha_j\leq h^j(n_0)\;,
\end{equation}
i.e., the size of the $j$th ordinal $\alpha_j$ is bounded by the $j$th
iterate of~$h$ applied to~$n_0$; in particular, $N\alpha_0\leq n_0$
for the first element of the sequence.  Because for each~$n\in\+N$,
there are only finitely many ordinals below~$\omega^\omega$ of size at
most~$n$, the length of controlled descending sequences is
bounded~\citep[see, e.g.,][]{schmitz14}.  One can actually give a
precise bound on this length in terms of \emph{subrecursive
  functions}, whose definition we are about to recall.

\subsubsection{Subrecursive Functions}

Algorithms shown to terminate via an ordinal ranking function can have
a very high worst-case complexity.  In order to express such large
bounds, a convenient tool is found in subrecursive hierarchies, which
employ recursion over ordinal indices to define faster and faster
growing functions.  We define here two such hierarchies.

\paragraph{Fundamental Sequences}
A \emph{fundamental sequence} for a limit ordinal $\lambda$ is a
strictly ascending sequence $(\lambda(x))_{x<\omega}$ of ordinals
$\lambda(x)<\lambda$ with supremum~$\lambda$.  We use the standard
assignment of fundamental sequences to limit ordinals \ifams
$\lambda<\varepsilon_0$, where~$\varepsilon_0$ denotes the least
solution of~$x=\omega^x$.  For the purposes of this paper, it actually
suffices to consider the case \fi $\lambda\leq\omega^\omega$, defined
inductively by
\begin{align*}
  \omega^\omega(x)&\eqdef \omega^{x+1}\;,&
  (\beta+\omega^{k+1})(x)&\eqdef \beta+\omega^k\cdot(x+1)\;,
\end{align*}
where $\beta+\omega^{k+1}$ is in Cantor normal form.
This particular assignment satisfies, e.g., $0 < \lambda(x)
< \lambda(y)$ for all $x < y$. For instance, $\omega(x) = x + 1$ and
$(\omega^{3}+\omega^3+\omega)(x)=\omega^3+\omega^3+x+1$.

\paragraph{Hardy and Cicho\'n Hierarchies}
In the context of controlled sequences, the hierarchies of Hardy and
Cicho\'n turn out to be especially well-suited~\citep{cichon98}.  Let
$h{:}\,\+N\to\+N$ be a function.  For each such~$h$, the \emph{Hardy
hierarchy} $(h^\alpha)_{\alpha\leq\omega^\omega}$ and the \emph{Cicho\'n hierarchy}
$(h_\alpha)_{\alpha\leq\omega^\omega}$ relative to~$h$ are two families of functions
$h^\alpha,h_\alpha{:}\,\+N\to\+N$ defined by induction over~$\alpha$ by
\begin{align*}
  h^0(x)&\eqdef x\;,&
  h_0(x)&\eqdef 0\;,\\
  h^{\alpha+1}(x)&\eqdef h^\alpha(h(x))\;,
  &h_{\alpha+1}(x)&\eqdef1+h_\alpha(h(x))\;,\\
  h^\lambda(x)&\eqdef h^{\lambda(x)}(x)\;,&
  h_\lambda(x)&\eqdef h_{\lambda(x)}(x)\;.
\end{align*}
The Hardy functions are well-suited for expressing a large number of
iterations of the provided function~$h$.  For instance, $h^k$ for some
finite $k$ is simply the $k$th iterate of~$h$.  This intuition carries
over: $h^\alpha$ is a `transfinite' iteration of the function~$h$,
using a kind of diagonalisation in the fundamental sequences to handle
limit ordinals.  For instance, if we use the successor function $H(x)
= x+1$ as our function~$h$, we see that a first diagonalisation yields
$H^\omega(x) = H^{x+1}(x) = 2x+1$. The next diagonalisation occurs at
$H^{\omega\cdot 2}(x) = H^{\omega+x+1}(x)=H^\omega(2x + 1) = 4x +
3$. Fast-forwarding a bit, we get for instance a function of
exponential growth $H^{\omega^2}(x) = 2^{x+1} (x + 1) - 1$, and later
a non-elementary function $H^{\omega^3}$ akin to a tower of
exponentials, and a non primitive-recursive function
$H^{\omega^\omega}$ of Ackermannian growth.

In the following, we will use the fact that, if $h$ is monotone
inflationary, then so is~$h^\alpha$: if $x\leq y$, then
$x\leq h^\alpha(x)\leq h^\alpha(y)$.  Regarding the Cicho\'n
functions, if $h$ is monotone inflationary, then by induction
on~$\alpha$,
\begin{align}\label{eq-hardy}
h^\alpha(x)&\geq h_\alpha(x) + x\;.
\intertext{%
But the main interest of
Cicho\'n functions is that they capture how many iterations are
performed by Hardy functions~\citep{cichon98}:}
\label{eq-hardy-cichon}
  h^{h_\alpha(x)}(x)&=h^\alpha(x)\;.
\end{align}

\subsubsection{Length Function Theorem}
We can now state a `length function theorem' for controlled descending
sequences of ordinals.

\begin{theorem}[{\citep[Thm.~3.3]{schmitz14}}]\label{th-lft}
  Let $n_0\geq d+1$.  The maximal length of $(n_0,h)$-controlled
  descending sequences of ordinals in $\omega^{d+1}$ is
  $h_{\omega^{d+1}}(n_0)$.
\end{theorem}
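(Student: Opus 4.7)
The plan is to establish matching upper and lower bounds via transfinite induction on the ordinal, with both halves pivoting on the dual roles of the Hardy and Cicho\'n hierarchies captured by identity~\eqref{eq-hardy-cichon}. That identity expresses that, starting from $\alpha$ with initial control $x$, the Cicho\'n value $h_\alpha(x)$ counts exactly how many ``descent steps'' are taken along the canonical iteration through fundamental sequences, while $h^\alpha(x)$ records the resulting final control.

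For the upper bound, I would prove by transfinite induction on $\alpha \leq \omega^{d+1}$ the strengthened claim that for every monotone inflationary $h$ and every $n \geq d+1$, every $(n, h)$-controlled descending sequence $\alpha_0 > \alpha_1 > \cdots$ with $\alpha_0 < \alpha$ has length at most $h_\alpha(n)$. The base case $\alpha = 0$ is trivial. For a successor $\alpha = \beta + 1$, a non-empty sequence has $\alpha_0 \leq \beta$, and after reindexing its suffix is $(h(n), h)$-controlled and contained in $\beta + 1$; the induction hypothesis then yields length at most $1 + h_\beta(h(n)) = h_{\beta+1}(n)$. For a limit $\lambda$, the key geometric fact is that any $\alpha_0 < \lambda$ with $N\alpha_0 \leq n$ already satisfies $\alpha_0 < \lambda(n)$; this can be checked directly from Cantor normal forms using the specific assignment of fundamental sequences and the hypothesis $n \geq d+1$ to bound the degree of $\alpha_0$. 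Applying the induction hypothesis at $\lambda(n) < \lambda$ with the same control~$n$ then yields length at most $h_{\lambda(n)}(n) = h_\lambda(n)$.

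For the matching construction, I would define the canonical descending sequence iteratively: start at the largest $\alpha_0 < \omega^{d+1}$ with $N\alpha_0 \leq n_0$, and set $\alpha_{j+1}$ to be the predecessor of $\alpha_j$ whenever $\alpha_j$ is a successor and to be $\alpha_j(h^j(n_0))$ whenever $\alpha_j$ is a limit. A dual induction on $\alpha$, unfolding the defining equations of the Cicho\'n hierarchy in the opposite direction, shows that this descent from $\alpha$ with initial control $n$ reaches $0$ in exactly $h_\alpha(n)$ steps; the control condition $N\alpha_j \leq h^j(n_0)$ is preserved automatically by the choice of fundamental sequences together with the geometric lemma above. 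The main technical obstacle is precisely this geometric lemma driving the limit case: it depends on the precise assignment of fundamental sequences adopted in the paper and on the hypothesis $n_0 \geq d+1$ to guarantee that the degree of every ordinal appearing in the sequence is bounded by the current control. Once the lemma is in place, the remaining work reduces to routine unfoldings of the defining equations for $h_\alpha$ and $h^\alpha$.
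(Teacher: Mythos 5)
The paper does not prove this statement: it is imported verbatim from \citep[Thm.~3.3]{schmitz14}, so there is no internal proof to compare yours against. Your sketch essentially reconstructs the argument of that reference. The upper-bound half is correct as you describe it: the strengthened claim (every $(n,h)$-controlled descending sequence below $\alpha$ has length at most $h_\alpha(n)$, for all monotone inflationary $h$ and all $n$) goes through by transfinite induction, the successor case matching $h_{\beta+1}(n)=1+h_\beta(h(n))$, and the limit case resting on the ``geometric'' fact that $\alpha_0<\lambda$ and $N\alpha_0\leq n$ force $\alpha_0<\lambda(n)$ --- which indeed follows from the Cantor normal forms and the assignment $(\beta+\omega^{k+1})(x)=\beta+\omega^k\cdot(x+1)$, since the coefficient of $\omega^k$ in $\alpha_0$ is at most $N\alpha_0\leq n$. (Note this half does not actually need $n_0\geq d+1$; that hypothesis is only needed for the matching construction, to ensure the degree $d$ of the starting ordinal is dominated by the initial control. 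Note also that the paper only ever invokes the upper-bound direction, in \eqref{eq-L}.)

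The one genuine gap is in your lower-bound construction. Setting $\alpha_{j+1}\eqdef\alpha_j(h^j(n_0))$ at a limit yields $N\alpha_{j+1}$ as large as $h^j(n_0)+1$, so the control condition $N\alpha_{j+1}\leq h^{j+1}(n_0)$ requires $h(x)\geq x+1$; it is \emph{not} ``preserved automatically'' for a merely inflationary $h$. Concretely, take $h=\mathrm{id}$, $d=1$, $n_0=2$: your sequence starts $\omega\cdot 2+2,\ \omega\cdot 2+1,\ \omega\cdot 2$, and then $\alpha_3=(\omega\cdot 2)(2)=\omega+3$ has norm $3>h^3(2)=2$, so the sequence is not $(2,\mathrm{id})$-controlled --- even though the theorem is true in this case (the greedy sequence $\omega\cdot2+2>\omega\cdot2+1>\omega\cdot2>\omega+2>\omega+1>\omega>2>1>0$ has length $9=h_{\omega^2}(2)$). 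So either restrict the lower bound to strictly inflationary $h$ (which covers the $h$ of \cref{lem:inductivereduction}), or replace the fundamental-sequence descent by the greedy ``largest ordinal below $\alpha_j$ of norm at most $h^{j+1}(n_0)$'' descent and prove that \emph{its} length is $h_{\omega^{d+1}}(n_0)$ by a separate induction. Your claim that the count of steps is exactly $h_\alpha(n)$ also deserves more care than ``routine unfolding'', because your recursion spends an index on each limit step whereas the clause $h_\lambda(x)=h_{\lambda(x)}(x)$ contributes no increment; the bookkeeping does work out, but only because each limit step is immediately followed by the successor steps that the Cicho\'n recursion charges for, and that coincidence should be argued.
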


Let us apply \cref{th-lft} to the descending sequences of ordinal
ranks from~\eqref{eq-ranks} found along a branch of a decomposition
forest of~$\xi_0$.  Observe that by~\eqref{eq-klm-size}
and~\eqref{eq-norm}, $N\alpha_\xi\leq\size{\xi}$ for any KLM
sequence~$\xi$.  Thus, by monotonicity, a sequence
like~\eqref{eq-ranks} is $(g(\size{\xi_0}),h)$-controlled, where~$g$
was defined in~\cref{lem:initialreduction} and~$h$ in
\cref{lem:inductivereduction}.  By \cref{th-lft} and because
$g(\size{\xi_0})\geq d+1$, the branches of a decomposition forest
for~$\xi_0$ are of length at most
\begin{align}
  L&\eqdef h_{\omega^{d+1}}(g(\size{\xi_0}))\;.\label{eq-L}
\intertext{%
In turn, by \eqref{eq-hardy} and~\eqref{eq-L},}
  L&\leq h^{\omega^{d+1}}(g(\size{\xi_0}))\;,\label{eq-length}
\intertext{and if $\xi$ is any KLM sequence labelling a node of a
     decomposition forest for~$\xi_0$, then
     by~\eqref{eq-hardy-cichon} and~\eqref{eq-L},}
  \size\xi&\leq
            h^L(g(\size{\xi_0}))=h^{\omega^{d+1}}(g(\size{\xi_0}))\;.\label{eq-size}
\end{align}

Consider now a VASS~$G$ of dimension~$d$ and two finite configurations
$\vec c_\mathit{in}$ and $\vec c_\mathit{out}$.  Then according
to~\eqref{eq-G-size} and~\eqref{eq-klm-size},
\begin{equation}\label{eq-sizes}\size{\vec c_\mathit{in}G \vec c_\mathit{out}}=2(d+1)^{d+1}(\size{G}+\norm{\vec c_\mathit{in}}{}+\norm{\vec c_\mathit{out}}{})\;.\end{equation}
Thus, by combining~\eqref{eq-size} with \cref{lem:normalnonempty}, we
obtain the following small witness property.
\begin{property}[Small Witness]\label{cor-witness}
  Let $G=(Q,q_\mathit{in},q_\mathit{out},T)$ be a VASS of
  dimension~$d$, $\vec{c}_{in}$ and $\vec{c}_{out}$ be two finite
  configurations in~$\+N^d$, and $n\eqdef 2(d+1)^{d+1}(\size{G}+\norm{\vec
    c_\mathit{in}}{}+\norm{\vec c_\mathit{out}}{})$.  If
  $\qin{\vec{c}_{in}}\step[G]{\sigma}\qout{\vec{c}_{out}}$ for
  some~$\sigma$, then there exists a word $\sigma'\in\vec{A}^*$ such
  that $\qin{\vec{c}_{in}}\step[G]{\sigma'}\qout{\vec{c}_{out}}$ and
  $$|\sigma'|\leq \ell\big(h^{\omega^{d+1}}(g(n))\big)\;,$$
  where $g$, $h$, and~$\ell$ are defined in
  \crefrange{lem:initialreduction}{lem:normalnonempty}.
\end{property}

\subsection{Fast-Growing Complexity}

We wish now to exploit the upper bounds
from~(\ref{eq-L}--\ref{eq-size}) and \cref{cor-witness} to provide
complexity upper bounds for the decomposition algorithm and the
reachability problem.  We will employ for this the \emph{fast-growing}
complexity classes defined in~\citep{schmitz16}.  This is an
ordinal-indexed hierarchy of complexity classes
$(\F\alpha)_{\alpha<\varepsilon_0}$, that uses the Hardy functions
$(H^\alpha)_\alpha$ relative to $H(x)\eqdef x+1$ as a standard against
which we can measure high complexities.

\subsubsection{Fast-Growing Complexity Classes}
Let us first define
\begin{align}\label{eq-FGH}
  \FGH\alpha&\eqdef\bigcup_{\beta<\omega^\alpha}\ComplexityFont{FDTIME}\big(H^\beta(n)\big)
  \intertext{%
    as the class of functions computed by deterministic Turing
    machines in time $O(H^\beta(n))$ for some $\beta<\omega^\alpha$.
    This captures for instance the class of Kalmar elementary
    functions as $\FGH 3$ and the class of primitive-recursive
    functions as $\FGH\omega$~\citep{lob70,wainer72}.  Then we let}
  \F\alpha&\eqdef\bigcup_{p\in\FGH\alpha}\ComplexityFont{DTIME}\big(H^{\omega^\alpha}\!(p(n))\big)\label{eq-F}
\end{align}
denote the class of decision problems solved by deterministic Turing
machines in time $O\big(H^{\omega^\alpha}\!(p(n))\big)$ for some
function~$p\in\FGH\alpha$.  The intuition behind this quantification
over~$p$ is that, just like e.g.\
$\EXP=\bigcup_{p\in\poly}\ComplexityFont{DTIME}\big(2^{p(n)}\big)$
quantifies over polynomial functions to provide enough `wiggle room' to account
for polynomial reductions, $\F\alpha$ is closed under $\FGH\alpha$
reductions~\citep[Thms.~4.7 and~4.8]{schmitz16}.

For instance, $\TOWER\eqdef\F 3$ defines the class of
problems that can be solved using computational resources bounded by a
tower of exponentials of elementary height in the size of the input,
$\bigcup_{k\in\+N}\F k$ is the class of primitive-recursive decision
problems, and $\ACK\eqdef\F\omega$ is the class
of problems that can be solved using computational resources bounded
by the Ackermann function applied to some primitive-recursive function
of the input size---here it does not matter for $\alpha>2$ whether we
are considering deterministic, nondeterministic, alternating, time, or
space bounds~\citep[Sec.~4.2.1]{schmitz16}.  \ifams See \cref{fig-fg} for a
depiction.\fi

\subsubsection{Complexity Upper Bounds}
Let us first observe that, by \cref{lem:inductivereduction}, the
branching degree~$|\mathrm{dec}(\xi)|$ of a node labelled by~$\xi$ in
a decomposition forest for~$\xi_0$ is exponential
in~$h(\size{\xi})$, thus elementary in~$\size{\xi}$.  Furthermore, by
\cref{lem:initialreduction}, the number~$|\mathrm{clean}(\xi_0)|$ of
initial clean KLM sequences is exponential in~$g(\size{\xi_0})$, thus
elementary in~$\size{\xi_0}$.  Thus, by~\eqref{eq-size}, the size of
the entire forest---i.e., the number of decomposition steps performed
by the decomposition algorithm---is also elementary in
$h^{\omega^{d+1}}(g(\size{\xi_0}))$.  Finally, still by
\cref{lem:inductivereduction}, each decomposition step on a KLM
sequence~$\xi$ can be performed in time elementary in~$\size\xi$,
hence the entire decomposition forest can be computed in time
elementary in $h^{\omega^{d+1}}(g(\size{\xi_0}))$.

\begin{lemma}\label{lem-fdec}
  Given a KLM sequence $\xi_0$ of dimension~$d$, we can compute
  $\mathrm{fdec}(\xi_0)$ in time $e(h^{\omega^{d+1}}(g(\size{\xi_0})))$
  for some fixed elementary function~$e$ and where $g$ and $h$ are defined in
  \cref{lem:initialreduction,lem:inductivereduction}.
\end{lemma}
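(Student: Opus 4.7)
The plan is to follow the intuition already sketched in the paragraph preceding the lemma and make it precise. First, I would describe the algorithm explicitly: compute $\mathrm{clean}(\xi_0)$ using \cref{lem:initialreduction} to obtain the roots of a decomposition forest, and then, as long as some leaf is labelled by a non-normal clean KLM sequence~$\xi$, apply \cref{lem:inductivereduction} to expand it with children labelled by the elements of $\mathrm{dec}(\xi)$. By construction, the leaves of the resulting forest are exactly the elements of $\mathrm{fdec}(\xi_0)$.

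The key step is to bound both the depth and the branching degree of this forest in terms of the quantity $M \eqdef h^{\omega^{d+1}}(g(\size{\xi_0}))$. For the depth, I would invoke the argument already made before the statement: along any root-to-leaf branch $\xi'_0, \xi_1, \xi_2, \dots$ the sequence of ordinal ranks $\alpha_{\xi'_0} > \alpha_{\xi_1} > \alpha_{\xi_2} > \cdots$ is $(g(\size{\xi_0}), h)$-controlled in the sense of \eqref{eq-ctrl}, because $N\alpha_\xi \leq \size{\xi}$ by \eqref{eq-klm-size} and \eqref{eq-norm}, and the sizes grow by at most $h$ from one step to the next by \cref{lem:inductivereduction}. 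Then \cref{th-lft} bounds the length of any such branch by $L = h_{\omega^{d+1}}(g(\size{\xi_0})) \leq M$ via \eqref{eq-hardy}. The same controlled argument via \eqref{eq-hardy-cichon} gives $\size{\xi} \leq M$ for every node label~$\xi$, which is \eqref{eq-size}.

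Next, I would bound the branching degree and the per-node cost. By \cref{lem:inductivereduction}, every decomposition step is carried out in time $\exp(h(\size{\xi}))$, hence elementary in~$\size{\xi}$, and the number $|\mathrm{dec}(\xi)|$ of children produced is bounded by the same quantity. Similarly, $|\mathrm{clean}(\xi_0)|$ is elementary in $\size{\xi_0}$ by \cref{lem:initialreduction}. Using $\size{\xi} \leq M$ uniformly across the forest, both the branching degree at every internal node and the per-node cost are bounded by $e_1(M)$ for some fixed elementary function~$e_1$. The total number of nodes in the forest is therefore at most $e_1(M)^{L+1} \leq e_1(M)^{M+1}$, which is still elementary in~$M$; multiplying by the per-node cost $e_1(M)$ yields a total running time of the form $e(M)$ for some fixed elementary function~$e$, which is exactly the bound announced.

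There is no genuine obstacle here: the heavy lifting has already been done by \cref{th-lft} and by the size estimates in \crefrange{lem:initialreduction}{lem:inductivereduction}. The only care needed is bookkeeping, namely to verify (i) that the controlled-sequence hypothesis of \cref{th-lft} really applies, which requires checking that the threshold $g(\size{\xi_0}) \geq d+1$ is ensured (this follows from $g(x) = x^x$ and the fact that $\size{\xi_0} \geq d+1$ for a non-trivial KLM sequence of dimension~$d$), and (ii) that all the elementary wrappers we collect along the way can be absorbed into a single fixed elementary function~$e$ independent of $\xi_0$.
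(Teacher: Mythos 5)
Your proposal is correct and follows essentially the same route as the paper: the paper's own justification is precisely the paragraph preceding the statement, which bounds the branch lengths by $h_{\omega^{d+1}}(g(\size{\xi_0}))$ via \cref{th-lft}, the node sizes by $h^{\omega^{d+1}}(g(\size{\xi_0}))$ via \eqref{eq-size}, and then observes that the branching degree and per-node cost are elementary in the node size, so the whole forest is computed in time elementary in that bound. Your added bookkeeping (checking the threshold $g(\size{\xi_0})\geq d+1$ and the closure of elementary functions under the composition $e_1(M)^{M+1}$) is sound and only makes explicit what the paper leaves implicit.
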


Consider an instance of the \nameref{reach} problem, namely a VASS~$G$
of dimension~$d$ and two finite configurations $\vec c_\mathit{in}$
and $\vec c_\mathit{out}$, and let
$\xi_0\eqdef(\vec c_\mathit{in}G\vec c_\mathit{out})$.  Then
$\mathrm{fdec}(\xi_0)=\emptyset$ if and only
$\qin{\vec c_\mathit{in}}\step[G]{\ast}\qout{c_\mathit{out}}$, where
by \eqref{eq-sizes}, $\size{\xi_0}$ is elementary in the size of the
instance.  Let us examine the bound $e(h^{\omega^{d+1}}(g(\size{\xi_0})))$
from \cref{lem-fdec} and express it in the form of~\eqref{eq-F}.  The
innermost~$g$ function composed with the blow-up incurred
by~\eqref{eq-sizes} is a fixed elementary function in~$\FGH{<3}$, thus
is captured by the quantification over $p\in\FGH{<\alpha}$ for all
$\alpha\geq 3$.  The inner~$h$ function is also fixed and
in~$\FGH{<3}$, and \citep[Thm.~4.2]{schmitz13} allows to
over-approximate~$h^{\omega^{d+1}}$ in terms of~$H^{\omega^{d+4}}$.
Finally, the outermost function~$e$ is also fixed and in~$\FGH{<3}$,
and \citep[Lem.~4.6]{schmitz13} shows how to `shift' it into the
innermost position.

\begin{theorem}[Upper Bound]\label{th-up}
  \nameref{reach} is in \ACK, and in $\F{d+4}$ if the dimension~$d$ is
  fixed.
\end{theorem}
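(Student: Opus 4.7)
The plan is to package the work done by the preceding lemmas into the fast-growing complexity hierarchy defined by~\eqref{eq-F}. Given an instance $(G,\vec c_\mathit{in},\vec c_\mathit{out})$ of \nameref{reach}, I would form the singleton KLM sequence $\xi_0\eqdef(\vec c_\mathit{in}G\vec c_\mathit{out})$. By the observation following the definition of KLM action languages, $\qin{\vec c_\mathit{in}}\step[G]{*}\qout{\vec c_\mathit{out}}$ holds if and only if $L_{\xi_0}\neq\emptyset$. Using~\eqref{eq-fdec} together with the bounded witness \cref{lem:normalnonempty}, this is in turn equivalent to $\mathrm{fdec}(\xi_0)\neq\emptyset$: on the one hand, any normal $\xi'\in\mathrm{fdec}(\xi_0)$ contributes a concrete word to~$L_{\xi_0}$; on the other, if $L_{\xi_0}\neq\emptyset$ then at least one $\xi'\in\mathrm{fdec}(\xi_0)$ must have $L_{\xi'}\neq\emptyset$. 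The algorithm thus computes $\mathrm{fdec}(\xi_0)$ and accepts iff it is non-empty.

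Next, by~\eqref{eq-sizes} the size $\size{\xi_0}$ is elementary in the binary-encoded input size $n\eqdef \size{G}+\norm{\vec c_\mathit{in}}{}+\norm{\vec c_\mathit{out}}{}$, and both $g$ from \cref{lem:initialreduction} and $h$ from \cref{lem:inductivereduction} are fixed elementary functions, hence members of $\FGH{3}$. By \cref{lem-fdec}, the total running time is bounded by $e(h^{\omega^{d+1}}(g(\size{\xi_0})))$ for some fixed elementary~$e$. To match this against~\eqref{eq-F}, I would invoke the standard machinery of~\citep{schmitz13}: the Hardy-hierarchy comparison lemma \citep[Thm.~4.2]{schmitz13} yields a constant increase of three levels in the ordinal index when replacing the elementary base function~$h$ by the successor function $H(x)=x+1$, giving $h^{\omega^{d+1}}(x)\leq H^{\omega^{d+4}}(p_1(x))$ for some fixed $p_1\in\FGH{3}$. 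The shifting lemma \citep[Lem.~4.6]{schmitz13} then absorbs both the outer elementary~$e$ and the inner elementary compositions $g\circ(\text{blow-up of~\eqref{eq-sizes}})$ into the argument, yielding an overall bound of the form $H^{\omega^{d+4}}(p(n))$ for a fixed $p\in\FGH{3}\subseteq\FGH{d+4}$.

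For fixed dimension~$d$, this bound fits directly into the definition $\F{d+4}=\bigcup_{p\in\FGH{d+4}}\ComplexityFont{DTIME}(H^{\omega^{d+4}}(p(n)))$, giving the claimed $\F{d+4}$ upper bound. For the general case, the dimension~$d$ is part of the input and bounded by~$n$, so the exponent $d+4$ in the Hardy index is itself bounded by an elementary (in fact affine) function of~$n$; the resulting bound $H^{\omega^{n+O(1)}}(p(n))$ is dominated by the Ackermann function applied to an elementary function of~$n$, which is precisely the characterisation of $\ACK=\F\omega$ as the diagonal of the primitive-recursive hierarchy.

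The main obstacle here is not conceptual but bookkeeping: I have to verify that the three-level jump from $\omega^{d+1}$ to $\omega^{d+4}$ in the Hardy index indeed suffices to absorb the three composed elementary functions (the internal~$h$, the outer~$e$, and the pre-processing $g$ together with~\eqref{eq-sizes}), and that no hidden dependence of these elementary functions on~$d$ pushes the bound outside $\FGH{d+4}$. This is essentially a routine application of the length-function and shifting lemmas of~\citep{schmitz13,schmitz16}, but the accounting must be done carefully so that in the fixed-dimension case the controlling function genuinely remains at level $\FGH{d+4}$ rather than creeping up to $\FGH{d+5}$.
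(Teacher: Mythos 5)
Your proposal is correct and follows essentially the same route as the paper: reduce reachability to non-emptiness of $\mathrm{fdec}(\xi_0)$, invoke \cref{lem-fdec} for the $e(h^{\omega^{d+1}}(g(\size{\xi_0})))$ time bound, and then use \citep[Thm.~4.2 and Lem.~4.6]{schmitz13} to recast this as $H^{\omega^{d+4}}(p(n))$ for a fixed $p\in\FGH{3}\subseteq\FGH{d+4}$. The three-level jump you flag as the main bookkeeping risk is exactly what \citep[Thm.~4.2]{schmitz13} supplies for a fixed elementary base function, and your explicit diagonalisation over~$d$ for the \ACK\ case is the intended (if unstated) final step in the paper.
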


\subsubsection{Combinatorial Algorithm} 
An alternative proof of \cref{th-up} could also exploit the following
\emph{combinatorial algorithm}.  By \cref{cor-witness}, if
$\qin{\vec c_\mathit{in}}\step[G]{\sigma}\qout{c_\mathit{out}}$ for
some~$\sigma$, then there is a small witness~$\sigma'$ of length at
most
$\ell(h^{\omega^{d+1}}(g(n)))$.
It suffices therefore to compute this upper bound---which can be
performed in time elementary in the
bound~\citep[Thm.~5.1]{schmitz13}---, and to enumerate the paths
in~$G$ of length up to that bound until we find a witness or exhaust
the search space.

\section{Application: Downward Language Inclusion}
\label{sec-apps}
The $\ACK=\F\omega$ upper bound provided by \cref{th-up} for
the \nameref{reach} problem is still quite far from the currently best
known lower bound, which is $\TOWER=\F3$
hardness~\citep{czerwinski19}.  As mentioned in the introduction, this
upper bound is nevertheless rather tight as far as the decomposition
algorithm is concerned.  In this section, we illustrate the usefulness
of our new upper bound for another decision problem.

\paragraph{Labelled VASSes}
A \emph{labelled} VASS $(G,\Sigma,\lambda)$ is a
VASS~$G=(Q,q_\mathit{in},q_\mathit{out},T)$ of dimension~$d$ together
with a finite alphabet~$\Sigma$ and labelling
function~$\lambda{:}\,T\to\Sigma\cup\{\varepsilon\}$, which is lifted
homomorphically to a function $T^\ast\to\Sigma^\ast$.
We overload the notations for step relations by writing $p(\vec
x)\step[G]{w}q(\vec y)$ if there exists a path $\pi\in T^\ast$
from~$p$ to~$q$ labelled by~$\sigma$ such that $p(\vec
x)\step[G]{\sigma}q(\vec y)$ and $\lambda(\pi)=w\in\Sigma^\ast$.
Given two finite configurations $\vec c_\mathit{in}$ and $\vec
c_\mathit{out}$ in~$\+N^d$, its \emph{labelled language} is
\begin{multline*}
  L_\lambda(\vec c_\mathit{in},G,\vec c_\mathit{out})\eqdef\{w\in\Sigma^\ast\mid\qin{\vec c_\mathit{in}}\step[G]{w}\qout{\vec c_\mathit{out}}\}\;.
\end{multline*}

\paragraph{Downward-Closures}

For two finite words $u$ and $v$ in~$\Sigma^\ast$, we say
that~$u$ \emph{embeds} into~$v$, denoted $u\leq_\ast v$, if
$u=a_1\cdots a_k$ and $v=v_0a_1v_1a_2\cdots a_kv_k$ for some
$a_1,\dots,a_k\in\Sigma$ and $v_0,\dots,v_k\in\Sigma^\ast$.  In other
words, $u$ embeds into~$v$ if we can obtain $u$ from~$v$ by `dropping'
some letters from~$v$; for instance, $bca\leq_\ast aabacba$.  For a
language $L\subseteq\Sigma^\ast$, its \emph{downward-closure} is
${\downarrow}L\eqdef\{u\in\Sigma^\ast\mid\exists v\in
L\mathbin.u\leq_\ast v\}$.  A consequence of Higman's Lemma also known
as Haine's Theorem is that, for any $L\subseteq\Sigma^\ast$,
${\downarrow}L$ is a regular language.

\begin{example}\label{ex-down}
  Let us consider again the VASS~$G_\mathrm{ex}$ from \cref{ex-vass},
  along with the alphabet $\Sigma\eqdef\{a_j\mid 1\leq j\leq 9\}$ and
  the labelling function defined by~$\lambda(t_j)\eqdef a_j$ for all
  $1\leq j\leq 9$.  Then
  ${\downarrow}L_\lambda((0,0,2),G_\mathrm{ex},(1,1,0))$ is the
  language denoted by the regular expression
  \begin{equation*}
    a_1^\ast(a_3+\varepsilon)a_6^\ast(a_7+\varepsilon)a_8^\ast(a_9+\varepsilon)(a_6+\varepsilon)\;.\qedhere
  \end{equation*}
\end{example}

\paragraph{Downward Language Inclusion}

We are interested in this section in the following decision problem.
\begin{problem}[VASS downward language inclusion]\label{incl}
  \hfill\vspace*{-1ex}\begin{description}[\IEEEsetlabelwidth{question}] \item[input]
  Two labelled VASSes $(G,\Sigma,\lambda)$ and $(G',\Sigma,\lambda')$
  and four finite configurations $\vec c_\mathit{in}$ and $\vec
  c_\mathit{out}$ of~$G$ and $\vec c'_\mathit{in}$ and $\vec
  c'_\mathit{out}$ of~$G'$.
  \item[question] Is ${\downarrow}L_{\lambda}(\vec
  c_\mathit{in},G,\vec
  c_\mathit{out})\subseteq{\downarrow}L_{\lambda'}(\vec
  c'_\mathit{in},G',\vec c'_\mathit{out})$?
\end{description}
\end{problem}

Now, by Haine's Theorem,
${\downarrow}L_{\lambda}(\vec c_\mathit{in},G,\vec c_\mathit{out})$ is
regular for any labelled VASS.  However, that does not necessarily
mean that one can actually compute a finite automaton $\?A$ such that
$L(\?A)={\downarrow}L_{\lambda}(\vec c_\mathit{in},G,\vec
c_\mathit{out})$
from the labelled VASS and configurations.  Nevertheless,
\citet*[Prop.~1]{habermehl10} show that, given a full decomposition
$\mathrm{fdec}(\xi_0)$ of the KLM sequence
$\xi_0=(\vec c_\mathit{in}G\vec c_\mathit{out})$, one can construct
such a finite automaton in logarithmic space\footnote{The result of
  \citeauthor{habermehl10} is stated in terms of a full decomposition
  constructed by \citeauthor{lambert92}'s algorithm, but the
  adaptation to our full decomposition is mostly
  straightforward.}---as a hint, the reader might see some resemblance
between the regular expression of \cref{ex-down} and the full
decomposition
$\mathrm{fdec}(\xi_\mathrm{ex})=\{\xi_\mathrm{ex}^{11},\xi_\mathrm{ex}^{12}\}$
from \cref{ex-pump}.  Since the inclusion problem for two regular
languages represented by finite automata is in \PSPACE,
\cref{lem-fdec} entails the following.

\begin{corollary}[{of \citep[Prop.~1]{habermehl10}}]\label{cor-incl-up}
  The \nameref{incl} problem is in \ACK, and in $\F{d+4}$ if the
  dimension of the labelled VASSes is fixed to~$d$.
\end{corollary}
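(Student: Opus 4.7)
The plan is to mirror the structure of the upper bound proof for the reachability problem (Theorem~\ref{th-up}), only using Habermehl \emph{et~al.}'s construction to go from the full decomposition to a finite automaton, and then relying on the PSPACE upper bound for regular language inclusion on top.

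First, I would invoke \cref{lem-fdec} on the two KLM sequences $\xi_0\eqdef(\vec c_\mathit{in}G\vec c_\mathit{out})$ and $\xi'_0\eqdef(\vec c'_\mathit{in}G'\vec c'_\mathit{out})$: both full decompositions $\mathrm{fdec}(\xi_0)$ and $\mathrm{fdec}(\xi'_0)$ can be computed in time $e(h^{\omega^{d+1}}(g(n)))$, where $n$ is elementary in the size of the \nameref{incl} instance thanks to~\eqref{eq-sizes}, and~$e$ is a fixed elementary function. Each normal KLM sequence in these decompositions has size bounded by the same quantity by~\eqref{eq-size}, so in particular the total bit-size of $\mathrm{fdec}(\xi_0)$ and $\mathrm{fdec}(\xi'_0)$ is bounded by $e'(h^{\omega^{d+1}}(g(n)))$ for some fixed elementary function~$e'$.

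Next, I apply \citet[Prop.~1]{habermehl10} (lifted, as the footnote indicates, to our variant of full decompositions) to construct finite automata $\?A$ and $\?A'$ such that $L(\?A)={\downarrow}L_\lambda(\vec c_\mathit{in},G,\vec c_\mathit{out})$ and $L(\?A')={\downarrow}L_{\lambda'}(\vec c'_\mathit{in},G',\vec c'_\mathit{out})$. Since this construction is in logarithmic space in the size of the full decomposition, the resulting automata have size at most polynomial, hence still elementary, in $h^{\omega^{d+1}}(g(n))$. Inclusion ${\downarrow}L_\lambda(\vec c_\mathit{in},G,\vec c_\mathit{out})\subseteq{\downarrow}L_{\lambda'}(\vec c'_\mathit{in},G',\vec c'_\mathit{out})$ then reduces to the classical question $L(\?A)\subseteq L(\?A')$, which is in \PSPACE\ and therefore can be decided in deterministic time elementary in the size of $\?A$ and $\?A'$.

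Putting the three steps together, the entire decision procedure runs in time bounded by some elementary function of $h^{\omega^{d+1}}(g(n))$. The main ``obstacle'' is really just the complexity bookkeeping: one has to argue, exactly as in the proof of \cref{th-up}, that composing such elementary blow-ups on the outside and inside of $h^{\omega^{d+1}}$ does not leave the target class. This is handled by the same two ingredients used there, namely that the innermost elementary composition is absorbed by the quantification over $p\in\FGH{<\alpha}$ in~\eqref{eq-F}, and that the outermost elementary wrapper can be pushed inside the Hardy hierarchy via~\citep[Thm.~4.2 and Lem.~4.6]{schmitz13}, which majorises $h^{\omega^{d+1}}$ by $H^{\omega^{d+4}}$. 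We thus land in $\F{d+4}$ when $d$ is fixed, and in $\ACK=\F\omega$ in general, as required.
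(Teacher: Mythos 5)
Your proposal is correct and follows essentially the same route as the paper: compute the full decompositions via \cref{lem-fdec}, invoke \citet[Prop.~1]{habermehl10} to obtain finite automata in logarithmic space (hence of size elementary in $h^{\omega^{d+1}}(g(n))$), decide inclusion in \PSPACE, and absorb the inner and outer elementary overheads exactly as in the proof of \cref{th-up}. No gaps.
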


\paragraph{Lower Bounds}

The computational and the descriptional complexity of computing
downward-closures of languages is rather well
studied~\citep[e.g.,][]{zetzsche15}.  In the case of labelled VASS
languages, \citet[Thm.~10]{atig17} show that there exists a family of
labelled VASSes such that any finite automaton $\?A$ such that
$L(\?A)={\downarrow}L_{\lambda}(\vec c_\mathit{in},G,\vec
c_\mathit{out})$
requires a number of states Ackermannian in the size of~$G$.  A
stronger result was shown by~\citet[Cor.~17]{zetzsche16}, which bars
any alternative algorithm for the \nameref{incl} problem from
performing significantly better than \cref{cor-incl-up}.
\begin{theorem}[{\citep[Cor.~17]{zetzsche16}}]
  The \nameref{incl} problem is \ACK-hard, and $\F d$-hard if the
  dimension of the labelled VASSes is fixed to~$d\geq 3$.
\end{theorem}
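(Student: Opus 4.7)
The approach is to adapt Zetzsche's original $\ACK$-hardness reduction (Cor.~17 in~\citep{zetzsche16}) and track its dimension budget carefully so as to yield $\F d$-hardness in fixed dimension $d\geq 3$. The source of hardness I would use is a dimension-parametric family: reachability in Minsky-style machines whose runs are bounded by the Hardy function $H^{\omega^d}$, which is $\F d$-hard in the number of counters. This refines Schnoebelen's $\ACK$-hardness result for lossy counter machines to a fixed complexity level and gives $\ACK$-hardness in the unbounded limit.

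Given such a machine $M$ with source state $q_\mathit{in}$ and target state $q_\mathit{out}$, I would construct in logarithmic space two labelled VASSes encoding the instance. The first, $(G,\Sigma,\lambda)$, is a zero-dimensional VASS (essentially a finite automaton) whose labelled language is a regular set enumerating all \emph{syntactically plausible} runs of $M$: words over an instruction alphabet tracing control-flow-valid paths from $q_\mathit{in}$ to $q_\mathit{out}$. The second, $(G',\Sigma,\lambda')$, has dimension $d+O(1)$ and simulates the counter semantics using the non-negativity constraint inherent to VASSes; its labelled language consists of exactly those plausible words that also respect every counter update. The reduction is completed by verifying that
${\downarrow}L_{\lambda}(\vec c_\mathit{in},G,\vec c_\mathit{out}) \subseteq {\downarrow}L_{\lambda'}(\vec c'_\mathit{in},G',\vec c'_\mathit{out})$
holds if and only if every plausible run is actually valid, which by diagonalisation against non-reaching plausible paths captures a question that is $\F d$-hard.

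The main obstacle is preserving the faithfulness of the encoding under the downward-closure: an adversary could in principle satisfy the inclusion by embedding words of $L_{\lambda}$ into strictly larger words of $L_{\lambda'}$, interleaving spurious instructions between encoded steps to mask an illegal decrement. Zetzsche's resolution is a framing trick that pads each instruction symbol with sentinel markers whose multiplicities, under subword embedding, force any witnessing alignment to be essentially position-preserving. Implementing this framing within $d+O(1)$ dimensions, rather than with a polynomial blow-up, is the key technical hurdle; the threshold $d\geq 3$ reflects the minimal constant of auxiliary dimensions needed for control flow and framing. Once this dimensional accounting is discharged, $\F d$-hardness in fixed dimension follows directly from the $\F d$-hardness of the source reachability problem, and the unbounded case collapses to $\ACK$-hardness.
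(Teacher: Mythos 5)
Your proposal diverges from the paper's proof, which is essentially a citation: the general \ACK-hardness is exactly \citep[Cor.~17]{zetzsche16}, and for fixed dimension the paper invokes \citep[Thm.~15]{zetzsche16} as a black box. That theorem reduces the entire question to exhibiting a \emph{weak implementation} of the function $A_d(x)$ (defined by $A_1(x)=2x$ and $A_{d+1}(x)=A_d^x(1)$) by a VASS of dimension exactly~$d$ and size polynomial in~$d$; such implementations are standard (see \citep[Sec.~4.2.2]{schmitz17}). All of the machinery you are reconstructing---the enumeration of syntactically plausible runs, the simulating VASS, the framing/sentinel trick that keeps the encoding faithful under subword embedding---is already packaged inside Zetzsche's Theorem~15, so re-deriving it is unnecessary and, more importantly, is exactly where your argument breaks.

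The genuine gap is the dimension accounting, which you explicitly defer (``once this dimensional accounting is discharged'') even though it is the only nontrivial content of the fixed-dimension claim. If your simulating VASS $G'$ has dimension $d+O(1)$ while the source problem is $\F{d}$-hard, you obtain $\F{d}$-hardness for VASSes of dimension $d+O(1)$, i.e.\ only $\F{d-O(1)}$-hardness at dimension~$d$---strictly weaker than the statement. Absorbing those auxiliary coordinates is not a bookkeeping afterthought: it is precisely why the paper routes the proof through weak implementations of $A_d$ in dimension exactly~$d$ rather than through a direct Minsky-machine simulation with extra control and framing counters. Your reading of the threshold $d\geq 3$ is also off: it does not reflect ``the minimal constant of auxiliary dimensions'' but the fact that the fast-growing classes $\F{\alpha}$ are only robust under the relevant reductions for $\alpha\geq 3$, so the statement is only meaningful from that level on. To repair the proof you would either have to show that control flow, framing, and the counter simulation can all be packed into the same $d$ coordinates that realise the $H^{\omega^d}$ budget---which is what the weak-implementation route accomplishes---or simply cite Theorem~15 as the paper does.
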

\begin{proof}
  The lower bound in the general case is stated
  in~\citep[Cor.~17]{zetzsche16}.  Regarding the case in fixed
  dimension~$d\geq 1$, \citep[Thm.~15]{zetzsche16} shows how to derive
  $\F d$-hardness provided we can `weakly implement' the Ackermann
  function $A_d(x)$ with a VASS of dimension~$d$ and size polynomial
  in~$d$, where $A_d(x)$ is defined inductively by $A_1(x)\eqdef 2x$
  and $A_{d+1}(x)\eqdef A_d^x(1)$.  The existence of such weak
  implementations is well-known; see for
  instance~\citep[Sec.~4.2.2]{schmitz17}.
\end{proof}

\section{Concluding Remarks}
\label{sec-concl}
We have proven that a refinement of the decomposition algorithms of
\citet{mayr81}, \citet{kosaraju82}, and \citet{lambert92} runs in
Ackermannian time, and in primitive-recursive time for VASSes of fixed
dimension.  In turn, this provides respectively \ACK\ and $\F{d+4}$
upper bounds for both the \nameref{reach} and the \nameref{incl}
problems.  While the former only needs to find \emph{some} normal KLM
sequence in a decomposition forest and is only known to be
\TOWER-hard~\citep{czerwinski19}, the latter essentially requires to
construct a full decomposition and was already known to be
\ACK-hard~\citep{zetzsche16}, and therefore \ACK-complete by our
results.  Thus it is unclear at the moment whether a better algorithm
for \nameref{reach} might exist.

Another line for future research is the complexity of \nameref{reach}
in fixed dimension.  With a binary encoding, the problem is
\NP-complete in dimension one~\citep{haase09} and \PSPACE-complete in
dimension two~\citep{blondin15}; with a unary encoding, both are
\NL-complete~\citep{englert16}.  In dimension three and above, our
$\F{d+4}$ bound is currently the best known upper bound, but we expect
that this could be refined further.

\section*{Acknowledgements}

The authors thank the LICS~2019 reviewers for their thorough reviews.
Work partially funded by
\href{http://bravas.labri.fr/}{ANR-17-CE40-0028~\textsc{Bra\!VAS}}.
The second author is also partially funded by \ifams\relax\else
Institut Universitaire de France and
\fi\href{http://projects.lsv.fr/prodaq}{ANR-14-CE28-0005
  \textsc{prodaq}}.

\bibliographystyle{abbrvnat}
\bibliography{conferences,journalsabbr,references}

\appendix
\section{Pumpability}
\label{sec-pump}
In this section we provide the details of the proof of
\cref{lem:pump}.

\subsection{Rackoff Extraction}
We start by proving the following result inspired by
\citet{rackoff78}.

\begin{lemma}\label{lem:rack}
  Let us assume that 
  $$q_0(\vec{c}_0)\step[G]{\vec{a}_1}q_1(\vec{c}_1)\cdots\step[G]{\vec{a}_k}q_k(\vec{c}_k)\;.$$
  Let $ C\geq\size{G}$ and assume that for every $i\in\{1,\ldots,d\}$ there exists $j\in\{0,\ldots,k\}$
  such that
  $$\vec{c}_j(i)\geq  C^{1+d^d}\;.$$
  In that case there exists a configuration
  $\vec{c}\in\setN_\omega^d$ such that
  $\vec{c}(i)\geq  C-\size{G}$ for every
  $i\in\{1,\ldots,d\}$, and a word $\sigma$ such that $|\sigma|<
   C^{(d+1)^{d+1}}$ and $q_0(\vec{c}_0)\step[G]{\sigma}q_k(\vec{c})$.
\end{lemma}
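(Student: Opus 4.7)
The plan is to prove this lemma by induction on the dimension $d$, in the spirit of \citeauthor{rackoff78}'s analysis of short covering runs.

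For the base case $d=0$, the coordinate hypothesis is vacuous. Since the given run connects $q_0$ to $q_k$, the underlying graph of $G$ contains a path between them, so I would take $\sigma$ to be a shortest such path. Its length is at most $|Q|-1\leq\size{G}\leq C$, which is below $C^{1^1}$, and the lifted run is trivially valid.

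For the inductive step ($d\geq 1$), I would pick a coordinate $i^\ast$ and a position $j^\ast$ so that $\vec{c}_{j^\ast}(i^\ast)\geq C^{1+d^d}$, splitting the original run at $j^\ast$ into a prefix reaching $q_{j^\ast}(\vec{c}_{j^\ast})$ and a suffix continuing to $q_k(\vec{c}_k)$. The key slack observation is that any continuation of length at most $C^{d^d}$ changes $\vec c(i^\ast)$ by at most $\size{G}\cdot C^{d^d}\leq C^{1+d^d}$ (using $C\geq\size{G}$), so coordinate $i^\ast$ stays non-negative without needing to be tracked. I would then project the suffix onto the $(d-1)$-dimensional VASS $G'$ obtained by deleting coordinate $i^\ast$ from all action vectors---note $\size{G'}\leq\size{G}\leq C$---and apply the inductive hypothesis to obtain a short run of length $<C^{d^d}$ in $G'$ from $q_{j^\ast}$ to $q_k$. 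Lifting this back to $G$ gives a short suffix in the full VASS, which I would concatenate with a recursively shortened prefix to form the candidate $\sigma$.

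The main obstacle is two-fold. First, the prefix itself may be long and needs similar treatment, which suggests a joint recursion rather than a direct induction on $d$. Second, the inductive hypothesis requires each of the $d-1$ remaining coordinates to reach the threshold $C^{1+(d-1)^{d-1}}$ within the suffix, but some coordinate $i\neq i^\ast$ might only attain it in the prefix. To handle this, I would choose $j^\ast$ as the \emph{latest} position in the original run at which any coordinate reaches the threshold $C^{1+d^d}$, strengthen the induction hypothesis to allow already-boosted coordinates to be treated as ``exempt'', and recover exempt coordinates by splicing short pumping segments from the prefix into the extracted suffix. Each splice should contribute at most $|Q|$ transitions (one simple cycle) with strictly positive displacement on the designated coordinate; extracting such cycles with the right sign properties on the other coordinates is the most delicate step, which I would tackle via a Dickson-type pigeonhole over the state-configurations visited in the prefix. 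Summing up to $d$ such splices on top of the inductive bound $C^{d^d}$ and a recursive contribution from the prefix should remain well below $C^{(d+1)^{d+1}}$, thanks to the substantial gap between the exponents $d^d$ and $(d+1)^{d+1}$.
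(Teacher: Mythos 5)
Your overall strategy --- a Rackoff-style induction, abstracting a boosted coordinate away, and the slack observation that a coordinate at height $C^{1+d^d}$ cannot be exhausted by a run of length $C^{d^d}$ --- matches the paper's, which proves a strengthened statement (\cref{lem:fullrack}) by induction on the number~$n$ of finite components of~$\vec{c}_0$, with threshold $C^{1+n^n}$ and length bound $C^{(n+1)^{n+1}}$. But there is a genuine gap in your choice of split point and your treatment of the prefix. You take $j^\ast$ to be the \emph{latest} position at which a coordinate reaches the threshold; the paper takes $\hat{k}$ to be the \emph{earliest} such position, and this single choice makes both of your ``obstacles'' disappear. With the earliest split, (a)~every coordinate whose first crossing happens at $\hat{k}$ is already at height $C^{1+n^n}$ there and is abstracted to~$\omega$ in the suffix, while every other coordinate has its first crossing strictly after $\hat{k}$, hence inside the suffix, so the induction hypothesis applies to the suffix verbatim --- no ``exempt'' coordinates and no splicing are needed; and (b)~throughout the prefix \emph{all} finite coordinates stay strictly below $C^{1+n^n}$, so the prefix visits at most $|Q|\cdot(C^{1+n^n})^n$ distinct state-configurations and can be shortened to that length simply by removing loops on repeated state-configurations. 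This pigeonhole argument for the prefix is the missing piece in your sketch: you acknowledge the prefix ``needs similar treatment'' but a ``joint recursion'' yields no bound, and with a latest-crossing split the prefix is not confined to a bounded region, so no such pigeonhole is available.

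Your proposed repair --- splicing simple pumping cycles harvested from the prefix into the suffix to re-boost coordinates that only reached the threshold early --- is the step that would actually fail. A cycle increasing coordinate~$i$ may decrease other coordinates, and showing it is fireable at the splice point (and that up to~$d$ such splices compose) is essentially a coverability argument in its own right; a Dickson-type pigeonhole over the prefix gives you an increasing pair of state-configurations, hence a cycle with non-negative net displacement on the coordinates dominated by that pair, but not a cycle fireable from the (different, possibly much smaller) configuration reached at the splice point. The paper's design never needs to pump a coordinate back up: once a coordinate has crossed the threshold it is discarded into~$\omega$, and the only remaining obligation --- which you do correctly identify --- is that it cannot drop below $C-\size{G}$ during the short remaining run.
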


The previous lemma is a direct consequence of the following statement.
\begin{lemma}\label{lem:fullrack}
  Let us assume that 
  $$q_0(\vec{c}_0)\step[G]{\vec{a}_1}q_1(\vec{c}_1)\cdots\step[G]{\vec{a}_k}q_k(\vec{c}_k)\;.$$ 
  Let $ C\geq \size{G}$ be such that for every $i\in\{1,\ldots,d\}$ there exists $j\in\{0,\ldots,k\}$
  such that
  $$\vec{c}_j(i)\geq  C^{1+n^n}$$
  where $n\eqdef|\{i\mid \vec{c}_0(i)\in\setN\}|$.
  Then, there exists a
  configuration $\vec{c}$ and a word $\sigma$ such that
  $q_0(\vec{c}_0)\step[G]{\sigma}q_k(\vec{c})$,
  $\vec{c}(i)\geq  C-\size{G}$ for every $i$, and
  $$|\sigma|<  C^{(n+1)^{n+1}}\;.$$
\end{lemma}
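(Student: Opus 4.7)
I would proceed by induction on $n$, the number of initially finite coordinates. For $n = 0$, every component of $\vec c_0$---and hence of every subsequent $\vec c_j$---is $\omega$, so any simple path from $q_0$ to $q_k$ in the underlying graph of $G$ yields a valid run. Such a path exists because the given run realises a connection from $q_0$ to $q_k$; it has length at most $|Q|-1 < |G| \leq C = C^{(0+1)^{0+1}}$, and its endpoint satisfies $\vec c(i) = \omega \geq C - |G|$ everywhere.

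For the inductive step ($n \geq 1$), set $M \eqdef C^{1+n^n}$ and $I \eqdef \{i : \vec c_0(i) \in \setN\}$. Let $j^*$ be the least index at which some coordinate in $I$ reaches $M$ (which exists by hypothesis), and let $I^* \subseteq I$ be the nonempty set of coordinates attaining this at $j^*$. First I would shorten the prefix $q_0(\vec c_0) \step{*} q_{j^*}(\vec c_{j^*})$: by minimality of $j^*$, every earlier configuration has all its $I$-coordinates strictly below $M$, so $(q_j,\vec c_j)$ takes values in a set of size at most $|Q|\cdot M^n$, and loop-elimination (cutting out subpaths with identical endpoints, hence of zero displacement) produces a prefix $\sigma_0$ of length at most $|Q|\cdot M^n \leq C^{1+n+n^{n+1}}$. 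Next, form the augmented configuration $\vec c_{j^*}'$ by replacing the $I^*$-components of $\vec c_{j^*}$ by $\omega$; the tail of the original run remains valid from $\vec c_{j^*}'$ (since $\omega + z = \omega$) and now carries $n' \eqdef n - |I^*| < n$ finite coordinates. I would check that this augmented tail satisfies the lemma's hypothesis with parameter $n'$: the coordinates in $I^*$ or originally $\omega$ are trivial, while for $i \in I \setminus I^*$ the definitions of $j^*$ and $I^*$ force the witness time of $i$ to lie strictly after $j^*$, within the tail, and $M \geq C^{1+(n')^{n'}}$. The inductive hypothesis then delivers a word $\sigma_1$ of length at most $C^{(n'+1)^{n'+1}}-1 \leq C^{n^n}-1$ with $q_{j^*}(\vec c_{j^*}') \step{\sigma_1} q_k(\vec c^{\mathrm{end}})$ and $\vec c^{\mathrm{end}}(i) \geq C - |G|$ for every $i$.

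Finally, I would transport $\sigma_1$ back to the non-augmented setting and concatenate with $\sigma_0$. The crucial estimate is that the displacement of $\sigma_1$ on any single coordinate is at most $|\sigma_1|\cdot |G| \leq (C^{n^n}-1)\cdot C = C^{1+n^n} - C < M$; for $i \in I^*$, starting at $\geq M$ at $q_{j^*}$, this both keeps the resulting run $q_{j^*}(\vec c_{j^*}) \step{\sigma_1} q_k(\vec c^*)$ valid and yields $\vec c^*(i) \geq M - (C^{1+n^n}-C) = C \geq C - |G|$, while for $i$ outside $I^*$ the augmented and non-augmented runs coincide on coordinate $i$ and the inductive bound $\geq C-|G|$ transfers directly. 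The total length $|\sigma_0|+|\sigma_1| \leq 2C^{1+n+n^{n+1}}-1$ is strictly less than $C^{(n+1)^{n+1}}$, since $(n+1)^{n+1} \geq 2+n+n^{n+1}$ for $n \geq 1$. The main obstacle is the bookkeeping ensuring that the augmented tail actually satisfies the lemma's hypothesis---specifically that the coordinates one does not hide still reach $M$ strictly after $j^*$---together with the tight management of strict inequalities in the length accounting, which leaves essentially no slack at $n=1$.
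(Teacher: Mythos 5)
Your proposal is correct and follows essentially the same route as the paper's proof: induction on the number of finite coordinates, cutting the run at the first instant a finite coordinate reaches the threshold $C^{1+n^n}$, shortening that prefix by loop elimination to length at most $|Q|\cdot C^{n+n^{n+1}}$, replacing the newly large coordinates by $\omega$ so the tail falls under the induction hypothesis, and then transporting the short witness back by observing that a coordinate starting at $\geq C^{1+n^n}$ cannot drop below $C-\size{G}$ along a word of length $<C^{n^n}$. The only (immaterial) differences are that your final length bound $2C^{1+n+n^{n+1}}-1$ is slightly looser than the paper's $C^{1+n+n^{n+1}}+C^{n^n}$ (both close under $(n+1)^{n+1}\geq 2+n+n^{n+1}$, using $C\geq 2$, which holds after disposing of the trivial case $T=\emptyset$ as the paper does explicitly), and your residual bound $C$ for the unhidden large coordinates replaces the paper's $C^{n^n}(C-\size{G})$.
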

\begin{proof}
  Let $T$ be the set of transitions of~$G$. Notice that if~$T$ is
  empty, the lemma is immediate.  So, we can assume
  that~$T\neq\emptyset$, and in particular that $\size{G}\geq 2$.

  \medskip

  We prove the lemma by induction on~$n$.  Naturally, for the base
  case $n=0$, then $\vec{c}_0=\vec{\omega}$ and the proof is immediate
  by selecting for $\sigma$ the label of a simple path from~$q_0$
  to~$q_k$; notice that $|\sigma|<|Q|\leq C$.  For the induction step,
  let us assume that the lemma holds as soon as the cardinal of
  $H\eqdef\{i\in\{1,\ldots,d\} \mid \vec{c}_0(i)\not=\omega\}$ is
  strictly bounded by $n\geq 1$, and let us consider an instance of
  the lemma such that $|H|=n$.  We have by
  assumption \begin{equation}q_0(\vec{c}_0)\step[G]{\vec{a}_1}q_1(\vec{c}_1)\cdots\step[G]{\vec{a}_k}q_k(\vec{c}_k)\;.\end{equation}
  Assume that for every $i\in\{1,\ldots,d\}$ there exists
  $j\in\{0,\ldots,k\}$ such that $\vec{c}_j(i)\geq C^{1+n^n}$.
  
  \medskip 
 
  Notice that for every $i\in H$, there exists a minimal
  $k_i\in\{0,\ldots,k\}$ such that $\vec{c}_{k_i}(i)\geq C^{1+n^n}$.
  Let $\hat{k}\eqdef\min\{k_i \mid i\in H\}$. Observe that for every
  $j\in\{0,\ldots,\hat{k}-1\}$ and for every~$i\in H$, by minimality
  of~$k_i$ we deduce that $\vec{c}_j(i)< C^{1+n^n}$. It follows that
  the cardinal of the set $\{q_j(\vec{c}_j) \mid 0\leq j<\hat{k}\}$ is
  bounded by $|Q|\cdot ( C^{1+n^{n}})^n\leq C^{1+n+n^{n+1}}$.  By
  removing cycles that occur on the execution from~$q_0$
  to~$q_{\hat{k}}$, we can assume without loss of generality that
  $\hat{k}\leq C^{1+n+n^{n+1}}$.

  \medskip

  Let $I\eqdef \{i\in H \mid k_i>\hat{k}\}$.
  Observe that $\hat{n}\eqdef|I|$ satisfies $\hat{n}<n$. Let us
  define $\vec{x}_j\eqdef\vec{c}_j|_{I}$. Then
  \begin{equation}q_{\hat{k}}(\vec{x}_{\hat{k}})\xrightarrow[G]{\vec{a}_{\hat{k}+1}}q_{\hat{k}+1}(\vec{x}_{\hat{k}+1})\cdots\xrightarrow[G]{\vec{a}_k}q_k(\vec{x}_k)\;.\end{equation}
  Moreover, for every $i\in\{1,\ldots,d\}$ there exists
  $j\in\{\hat{k},\ldots,k\}$ such that
  $\vec{x}_j(i)\geq C^{1+{\hat{n}}^{\hat{n}}}$.  In fact, if
  $i\not\in I$ then
  $\vec{x}_{\hat{k}}(i)=\omega\geq C^{1+{\hat{n}}^{\hat{n}}}$, and if
  $i\in I$, and since $k_i>\hat{k}$, there exists
  $j\in\{\hat{k}+1,\ldots,k\}$ such that
  $\vec{x}_j(i)\geq C^{1+n^n}\geq C^{1+{\hat{n}}^{\hat{n}}}$. By
  induction hypothesis, there exists a
  configuration~$\hat{\vec{c}}$ and a word~$\hat{\sigma}$ such that
  $q_{\hat{k}}(\vec{x}_{\hat{k}})\step[G]{\hat{\sigma}}q_k(\hat{\vec{c}})$,
  $\hat{\vec{c}}(i)\geq C-\size{G}$ for every~$i$, and
  $|\hat{\sigma}|< C^{{(\hat{n}+1)}^{\hat{n}+1}}\leq C^{n^n}$.

  \medskip

  Let $\sigma\eqdef
  \vec{a}_1\ldots\vec{a}_{\hat{k}}\hat{\sigma}$.
  Observe that $|\sigma|< C^{1+n+n^{n+1}}+ C^{n^n}$. Let us prove that
  $C^{1+n+n^{n+1}}+C^{n^n}\leq C^{(n+1)^{n+1}}$. Since $C\geq 2$, it is
    sufficient to prove that $(1+n+n^{n+1})+n^n\leq (n+1)^{n+1}$. If
    $n=1$ the inequality is trivial. Otherwise, 
    the first two elements of the binomial decomposition of $(n+1)^{n+1}$ provide
    $(n+1)^{n+1}\geq n^{n+1}+(n+1)n^n$.  Moreover $(n+1)n^n\geq
    n^n+n^{n+1}\geq n^n+2n\geq n^n+n+1$.  We have proven the inequality.

  \medskip
  
  Let $\setZ_\omega\eqdef\setZ\uplus\{\omega\}$, and 
  let us introduce for every prefix $u$ of $\hat{\sigma}$ the vector
  $\vec{z}_u$ in $\setZ_\omega^d$ defined by
  $\vec{z}_u\eqdef\vec{c}_{\hat{k}}+\Delta(u)$. 
  Observe that
  $\vec{z}_u|_{I}=\vec{x}_{\hat{k}}+\Delta(u)$.
  Since
  $q_{\hat{k}}(\vec{x}_{\hat{k}})\step[G]{\hat{\sigma}}q_k(\hat{\vec{c}})$,
  there exists a configuration $q_u(\hat{\vec{z}}_u)$ such that
  $q_{\hat{k}}(\vec{x}_{\hat{k}})\step[G]{u}q_u(\hat{\vec{z}}_u)$.
  In particular $\hat{\vec{z}}_u=\vec{x}_{\hat{k}}+\Delta(u)$, and we
  have proven that $\vec{z}_u|_{I}=\hat{\vec{z}}_u$. It
  follows that $\vec{z}_u(i)\geq 0$ for every $i\in I$. Notice
  that for every $i\not\in H$, we have $\vec{z}_u(i)=\omega\geq
  0$. Moreover, for every $i\in H\backslash I$, we have
  $\vec{z}_u(i)=\vec{c}_{\hat{k}}(i)+\Delta(u)(i)\geq
   C^{1+n^n}-\size{G}\cdot|u|\geq  C^{1+n^n}-\size{G}\cdot
   C^{n^n}= C^{n^n}\cdot ( C-\size{G})\geq
   C-\size{G}$.
  Hence
  $\vec{z}_u(i)\geq
   C-\size{G}$.
  We have proven that $\vec{z}_u\in\setN_\omega^d$. In particular, it
  follows that
  $q_{\hat{k}}(\vec{c}_{\hat{k}})\step[G]{\hat{\sigma}}q_k(\vec{c})$
  where $\vec{c}=\vec{z}_{\hat{\sigma}}$. Notice that we have
  $\vec{c}(i)\geq  C-\size{G}$ for every~$i$. In fact, if $i\in
  H\backslash I$ then $\vec{c}(i)=\vec{z}_{\hat{\sigma}}(i)$, if
  $i\in I$ then $\vec{c}(i)=\hat{\vec{c}}(i)$, and if $i\not\in
  H$ then $\vec{c}(i)=\omega$. Finally, as
  $q_0(\vec{c}_0)\step{\sigma_0}q_{\hat{k}}(\vec{c}_{\hat{k}})$
  we deduce that $q_0(\vec{c}_0)\step[G]{\sigma}q_k(\vec{c})$
  and we have proven the induction.
\end{proof}

\subsection{Unfoldings}

Recall that $\setN_B\eqdef\{0,\ldots,B-1,\omega\}$ for any
$B\in\setN$, and that if $i\in\{1,\ldots,d\}$, $r\in\setN_B$, and
$\vec{x}(i)\in\setN_B$, then the \emph{forward $(i,B,r)$-unfolding} of
a KLM triple $\vec{x}G\vec{y}$ is the KLM triple $\vec{x}G'\vec{y}$
where
$G'\eqdef(Q\times\setN_B,(q_\mathit{in},\vec{x}(i)),(q_\mathit{out},r),T')$
and~$T'$ is the set of transitions $((p,m),\vec{a},(q,n))$ where
$(p,\vec{a},q)\in T$ and $m,n\in\setN_B$ satisfy $n=m+\vec{a}(i)$ or
$(n=\omega\wedge m+\vec{a}(i)\geq B)$, and such that $m=\omega$
implies $q\neq q_\mathit{in}$.  The \emph{backward
  $(i,B,r)$-unfolding} is defined symmetrically.  \ifams\relax\else The condition that
$m=\omega$ must imply $q\neq q_\mathit{in}$ is central for proving the
following lemma.
\rankred*

\fi

\cycleplus*
\begin{proof}
  The inclusion of the right hand side to the left hand side is
  immediate. Let us prove the other inclusion.

  \medskip
  
  Let $\sigma\eqdef\vec{a}_1\ldots\vec{a}_k$ be a word
  in~$L_\xi$. Then there exists a sequence
  $q_0(\vec{m}_0),\ldots,q_k(\vec{m}_k)$ of state-configurations such
  that $q_0=q_\mathit{in}$, $q_k=q_\mathit{out}$,
  $\vec{m}_0,\vec{m}_k\in\setN^d$, $\vec{m}_0\sqsubseteq \vec{x}$,
  $\vec{m}_k\sqsubseteq \vec{y}$, and such that
  $$q_0(\vec{m}_0)\step[G]{\vec{a}_1}q_1(\vec{m}_1)\cdots
  \step[G]{\vec{a}_k}q_k(\vec{m}_k)\,.$$
  
  \medskip

  Without loss of generality, by restricting the set of states of $G$
  to~$\{q_0,\ldots,q_k\}$, we can assume that every state of~$G$ is
  visited. Let us prove that for every component~$i$ fixed by~$G$ such
  that $\Facc{G}{\vec{x}}(i)\not=\omega$, there exists a function
  $f_i{:}\,Q\rightarrow\setN$ such that
  $f(q_\mathit{in})=\Facc{G}{\vec{x}}(i)$ and such that
  $f_i(q)=f_i(p)+\vec{a}(i)$ for every $(p,\vec{a},q)\in T$.
  Since~$i$ is fixed by~$G$, there exists a function
  $f_i{:}\,Q\rightarrow\setZ$ such that $f_i(q)=f_i(p)+\vec{a}(i)$ for
  every $(p,\vec{a},q)\in T$.  As $\Facc{G}{\vec{x}}(i)\in\setN$, by
  translating~$f_i$ we can assume that
  $f_i(q_\mathit{in})=\Facc{G}{\vec{x}}(i)$. Notice that
  $\vec{m}_0\sqsubseteq \vec{x}\sqsubseteq \Facc{G}{\vec{x}}$. It
  follows that $\vec{m}_0(i)=\Facc{G}{\vec{x}}(i)$ and in particular
  that $f_i(q_0)=\vec{m}_0(i)$. Because
  $q_{j-1}(\vec{m}_{j-1})\step[G]{\vec{a}_j}q_j(\vec{m}_j)$, we deduce
  by induction on~$j$ that $f_i(q_j)=\vec{m}_j(i)$ for every
  $0\leq j\leq k$. As $Q=\{q_0,\ldots,q_k\}$, we deduce that
  $f_i(q)\in\setN$ for every $q\in Q$.

  \medskip

  Observe that, if there exists~$i\in I$ such that for every
  $j\in\{0,\ldots,k\}$ we have $\vec{m}_j(i)< B$, then
  $\sigma\in L_{\xi_r}$ where $r\eqdef\vec{m}_k(i)$.  Thus, we can
  assume that for every~$i\in I$, there exists $j\in\{0,\ldots,k\}$
  such that $\vec{m}_j(i)\geq B$. Let~$k'$ be the minimal natural
  number such that, for every $i\in I$, there exists
  $j\in\{0,\ldots,k'\}$ such that $\vec{m}_j(i)\geq B$.  Since
  $\vec{m}_0(i)< B$ for every~$i\in I$, it follows that~$k'\geq 1$.
  By minimality of~$k'$, there exists~$i\in I$ such that for every
  $j\in\{0,\ldots,k'-1\}$ we have $\vec{m}_j(i)< B$. Observe that if
  $q_\mathit{in}\not\in \{q_{k'},\ldots,q_k\}$, we deduce that
  $\sigma\in L_{\xi_\omega}$.  So, it just remains to prove that
  $q_\mathit{in}\not\in \{q_{k'},\ldots,q_k\}$.

  \medskip

  Assume by contradiction that
  $q_\mathit{in}\in\{q_{k'},\ldots,q_k\}$.  Since
  $\vec{m}_0\sqsubseteq \vec{x}\sqsubseteq \Facc{G}{\vec{x}}$ and
  $I\subseteq \{i\mid \Facc{G}{\vec{x}}(i)\in\setN\}$, it follows that
  $\vec{m}_0|_I=\Facc{G}{\vec{x}}|_I$.  Let
  $\vec{c}_j\eqdef\vec{m}_j|_I$ for every $0\leq j\leq k$. Notice that
  we have
  \begin{equation}q_\mathit{in}(\Facc{G}{\vec{x}}|_I)=q_0(\vec{c}_0)\step[G]{\vec{a}_1}q_1(\vec{c}_1)\cdots
  \step[G]{\vec{a}_{k'}}q_{k'}(\vec{c}_{k'})\,.\end{equation}
  For every $i\in\{1,\ldots,d\}$, there exists
  $j\in\{0,\ldots,k'\}$ such that $\vec{c}_j(i)\geq
   C^{1+d^d}$, where
  $C\eqdef\norm{\vec{x}}{}+2\size{G}$.
  In fact, if $i\not\in I$ then $\vec{c}_0(i)=\omega$,
  and if $i\in I$ then there exists $j\in\{0,\ldots,k'\}$
  such that $\vec{m}_j(i)\geq B$; From $\vec{c}_j(i)=\vec{m}_j(i)$
  we are done. \Cref{lem:rack} shows that there exists a
  configuration~$\vec{\bar{x}}$ and a word~$u'$ such that
  $q_0(\vec{c}_0)\step[G]{u'}q_{k'}(\vec{\bar{x}})$ and such that
  $\bar{\vec{x}}(i)\geq  C-\size{G}$ for every $1\leq i\leq
  d$. 

  \medskip
  
  Since $q_\mathit{in}\in \{q_{k'},\ldots,q_k\}$, we deduce that there
  exists a path in~$G$ from~$q_{k'}$ to~$q_\mathit{in}$.  We can
  consider a simple path of that form.  Let~$u''$ be the label of that
  path. Because $|u''|<|Q|$, we know that for every prefix $v$ of $u''$
  we have $\Delta(v)(i)> -\size{G}$. It follows that
  $\bar{\vec{x}}+\Delta(v)\geq \vec{0}$.  We have proven that
  $q_{k'}(\vec{\bar{x}})\step[G]{u''}q_\mathit{in}(\vec{z})$ where
  $\vec{z}$ satisfies $\vec{z}(i)> C-2\size{G}$ for every
  $1\leq i\leq d$. As $q_0(\vec{c}_0)=q_\mathit{in}(\vec{x}|_I)$, we
  also know that
  $q_\mathit{in}(\Facc{G}{\vec{x}}|_I)\step[G]{u}q_\mathit{in}(\vec{z})$
  where $u=u'u''$. Since $\vec{z}(i)>\vec{x}(i)=\Facc{G}{\vec{x}}(i)$ for
  every~$i\in I$, we deduce that $\Delta(u)(i)>0$ for every~$i\in I$.

  \medskip
  
  Finally, let us prove that there exists a configuration
  $\vec{x}'\geq \Facc{G}{\vec{x}}$ such that
  $q_\mathit{in}(\Facc{G}{\vec{x}})\step[G]{u}q_\mathit{in}(\vec{x}')$,
  and such that $\vec{x}'(i)>\Facc{G}{\vec{x}}(i)$ for every~$i\in I$.
  Let~$v$ be a prefix of~$u$ and let us first prove that
  $\Facc{G}{\vec{x}}(i)+\Delta(v)(i)\geq 0$ for every $1\leq i\leq d$.
  Note that if~$i\in I$, then because
  $q_\mathit{in}(\Facc{G}{\vec{x}}|_I)\step[G]{u}q_\mathit{in}(\vec{z})$,
  we get the property. If $\Facc{G}{\vec{x}}(i)=\omega$ then
  $\Facc{G}{\vec{x}}(i)+\Delta(v)(i)\geq 0$ is immediate. Therefore we
  can assume that $\Facc{G}{\vec{x}}(i)\not=\omega$ and~$i\not\in I$.
  In that case~$i$ is fixed by~$G$. We have seen in that case that
  there exists a function $f_i{:}\,Q\rightarrow\setN$ such that
  $f_i(q_\mathit{in})=\Facc{G}{\vec{x}}(i)$ and
  $f_i(q)=f_i(p)+\vec{a}(i)$ for every $(p,\vec{a},q)\in T$. We deduce
  that $\Facc{G}{\vec{x}}(i)+\Delta(u)(i)=f_i(q)\geq 0$ where~$q$ is any
  state reachable from $q_\mathit{in}$ by a path labelled by~$v$. It
  follows that
  $q_\mathit{in}(\Facc{G}{\vec{x}})\step[G]{u}q_\mathit{in}(\vec{x}')$
  for $\vec{x}'\eqdef\Facc{G}{\vec{x}}+\Delta(u)$. Notice that for
  every $i\not\in I$ this shows that $\vec{x}'(i)=\Facc{G}{\vec{x}}$,
  and that for every $i\in I$ and because $\Delta(u)(i)>0$, we have
  $\vec{x}'(i)>\Facc{G}{\vec{x}}(i)$.

  \medskip

  By monotony, notice that there exist a word~$\sigma$ and a
  configuration~$\vec{x}''\geq \vec{x}$ such that
  $q_\mathit{in}(\vec{x})\step[G]{\sigma}q_\mathit{in}(\vec{x}'')$ and
  such that $\vec{x}''(i)>\vec{x}(i)$ for every~$i$ such that
  $\vec{x}(i)\in\setN$ and $\Facc{G}{\vec{x}}(i)=\omega$. It follows
  that, for every $n\in\setN$ large enough, there exists a
  configuration $\vec{x}_n\geq \vec{x}$ such that
  $q_\mathit{in}(\vec{x})\step[G]{\sigma^nu}q_\mathit{in}(\vec{x}_n)$. Notice
  that for $n$ large enough, we have $\vec{x}_n\geq
  \vec{x}$. Moreover, we have $\vec{x}_n(i)>\vec{x}(i)$ for
  every~$i\in I$. Hence $\Facc{G}{\vec{x}}(i)=\omega$ for every
  $i\in I$ and we get a contradiction.
\end{proof}

\section{Unbounded Equations}
\label{sec-trans}

We recall some elements of linear algebra adapted from~\cite{pottier91}.
\begin{lemma}[{corollary of~\cite[Thm.~1]{pottier91}}]\label{lem:pottier}
  Let $(a_{i,j})_{\stackrel{1\leq i\leq m}{\text{\tiny$1{\leq} j{\leq} n$}}}$ be a
  sequence of integers and let $\vec{c}\in\setZ^m$ and let us define two
  sets $\vec{X}$ and $\vec{X}_0$ by
  \begin{align*}
    \vec{X}&\eqdef\biggl\{\vec{x}\in
    \setN^n\mathrel{\Big|}\bigwedge_{i=1}^m\sum_{j=1}^na_{i,j}\vec{x}(j)=\vec{c}(i)\biggr\}\;,&
    \vec{X}_0&\eqdef \biggl\{\vec{x}\in
    \setN^n\mathrel{\Big|}\bigwedge_{i=1}^m\sum_{j=1}^na_{i,j}\vec{x}(j)=0\biggr\}\;.
  \end{align*}
  Then every vector in~$\vec{X}$ can be decomposed as the sum of a
  vector~$\vec{x}$ in~$\vec{X}$ and a finite sum of
  vectors~$\vec{x}_0$ in~$\vec{X}_0$ such that:
  \begin{align*}
    \norm{\vec{x}}{}&\leq \norm{\vec{c}}{}\cdot
    (2+\max_{1\leq i\leq m}\sum_{j=1}^n|a_{i,j}|)^m\;,&
    \norm{\vec{x}_0}{}&\leq
    (2+\max_{1\leq i\leq m}\sum_{j=1}^n|a_{i,j}|)^m\;.
  \end{align*}
\end{lemma}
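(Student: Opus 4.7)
The plan is to prove the statement in two stages: first, use Dickson's Lemma on $(\setN^n, \leq)$ to reduce the decomposition problem to bounding the $L^1$ norms of $\leq$-minimal solutions, and then obtain those norm bounds by an induction on the number $m$ of equations, following Pottier's argument. Throughout, set $M \eqdef 2 + \max_i \sum_j |a_{i,j}|$.

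For the structural part, let $\vec{v} \in \vec{X}$ be arbitrary. Since $(\setN^n, \leq)$ is a well-partial-order, there is a $\leq$-minimal element $\vec{x} \in \vec{X}$ with $\vec{x} \leq \vec{v}$; by linearity of the defining equations, the difference $\vec{v} - \vec{x}$ then lies in $\vec{X}_0$. Iteratively subtracting a $\leq$-minimal nonzero element of $\vec{X}_0$ dominated by the current remainder---each such subtraction strictly decreases the $L^1$ norm---terminates in finitely many steps and yields a decomposition
\[
  \vec{v} = \vec{x} + \sum_{i} \vec{x}_0^{(i)}
\]
with $\vec{x}$ minimal in $\vec{X}$ and each $\vec{x}_0^{(i)}$ minimal in $\vec{X}_0 \setminus \{\vec{0}\}$. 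It therefore suffices to bound $\|\vec{x}\|$ by $\|\vec{c}\| \cdot M^m$ and each $\|\vec{x}_0^{(i)}\|$ by $M^m$.

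For the norm bounds I would induct on $m$. The base case $m = 0$ is immediate: $\vec{c}$ is the empty vector of norm $0$, $\vec{X} = \setN^n$ has minimal element $\vec{0}$, and the minimal nonzero elements of $\vec{X}_0 = \setN^n$ are the standard basis vectors of norm $1 = M^0$. For the inductive step, split off the $m$-th equation, writing its row as $\vec{a}_m \in \setZ^n$ (with $\|\vec{a}_m\|_1 \leq M-2$) and its right-hand side as $c_m$, and let $\vec{X}'$, $\vec{X}_0'$ denote the analogous sets for the first $m-1$ equations. A $\leq$-minimal $\vec{x} \in \vec{X}$ is still in $\vec{X}'$, so the induction hypothesis decomposes it as $\vec{x}' + \sum_{i=1}^k \vec{y}_0^{(i)}$ with $\|\vec{x}'\| \leq \|\vec{c}'\| \cdot M^{m-1}$ (where $\vec{c}' = (c_1,\ldots,c_{m-1})$) and each $\|\vec{y}_0^{(i)}\| \leq M^{m-1}$. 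The crux is to show that $k$ can be taken at most $M$: examining the partial sums $s_\ell \eqdef \vec{a}_m \cdot \sum_{i \leq \ell} \vec{y}_0^{(i)}$ (with $s_0 = 0$), a pigeonhole argument shows that if $k > M$ then two such partial sums must collide, and the corresponding sub-collection of $\vec{y}_0^{(i)}$'s is in $\vec{X}_0$ (as it satisfies all $m$ equations); subtracting it from $\vec{x}$ would produce a strictly smaller element of $\vec{X}$, contradicting minimality. Combining the at-most-$M$ summand count with the inductive per-summand bounds yields the required $\|\vec{x}\| \leq \|\vec{c}\| \cdot M^m$ and, by the same analysis with $\vec{c} = \vec{0}$, $\|\vec{x}_0^{(i)}\| \leq M^m$.

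The main obstacle is the pigeonhole/minimality step in the inductive argument: partial sums along $\vec{a}_m$ range over $\setZ$ rather than a finite residue set, so one must set up the counting against a suitable range bounded by $M$, and ensure that the sub-collection selected for cancellation simultaneously preserves non-negativity, minimality with respect to the first $m-1$ equations, and vanishing under $\vec{a}_m$. I expect this to follow Pottier's original treatment, which carefully orders the $\vec{y}_0^{(i)}$ and tracks signs of the $\vec{a}_m$-contributions so that any prefix collision directly yields a valid strict $\leq$-decrease in $\vec{X}$.
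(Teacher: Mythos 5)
There is a genuine gap. Your structural reduction is fine: decomposing an arbitrary element of $\vec{X}$ into a $\leq$-minimal element of $\vec{X}$ plus finitely many $\leq$-minimal non-zero elements of $\vec{X}_0$ is a routine consequence of well-foundedness, and it correctly reduces the lemma to bounding the norms of those minimal elements. But that reduction is the trivial part of the statement; the entire content of the lemma is the two norm bounds, and your proposal does not establish them---at the crux step you write that you ``expect this to follow Pottier's original treatment,'' which is precisely the citation the lemma is a corollary of, so nothing is actually proven. Moreover, the inductive scheme you sketch cannot deliver the claimed bound even if completed in the natural way: after applying the induction hypothesis, each summand $\vec{y}_0^{(i)}$ has norm up to $M^{m-1}$, so the increments $\vec{a}_m\cdot\vec{y}_0^{(i)}$ of your partial sums have absolute value up to $(M-2)\,M^{m-1}$; even after reordering so that the partial sums stay within one increment of a fixed window, the pigeonhole over distinct partial sums only forces a collision once $k=\Omega(M^m)$, yielding a bound of order $M^{2m-1}$ rather than $M^m$. (Separately, even granting $k\leq M$, the final arithmetic $\|\vec{c}'\|M^{m-1}+M\cdot M^{m-1}\leq\|\vec{c}\|\cdot M^m$ already fails when $\|\vec{c}\|=1$.)

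For comparison, the paper does not attempt to re-prove Pottier's bound at all. After flipping the signs of rows with $\vec{c}(i)<0$ so that $\vec{c}\in\setN^m$, it homogenises by introducing slack variables: it considers the set $P$ of pairs $(\vec{u},\vec{v})\in\setN^m\times\setN^n$ with $\sum_j a_{i,j}\vec{v}(j)=\vec{u}(i)$, and invokes Pottier's Theorem~1 as a black box to get that every pair in $P$ is a finite sum of minimal non-zero pairs, each of norm at most $C=M^m$. Decomposing $(\vec{c},\vec{x}')$ this way, the pairs with $\vec{u}=\vec{0}$ supply the vectors of $\vec{X}_0$ with norm at most $C$, while at most $\|\vec{c}\|$ pairs can have $\vec{u}\neq\vec{0}$ (their first components must sum to $\vec{c}$), and these sum to the vector $\vec{x}\in\vec{X}$ of norm at most $\|\vec{c}\|\cdot(C-1)$. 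If you want a self-contained proof you would need to reproduce Pottier's actual argument for the homogeneous bound, which is not the equation-by-equation pigeonhole you describe.
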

\begin{proof}
  Observe that if $\vec{c}(i)<0$ for some $1\leq i\leq m$,  by
  replacing~$\vec{c}(i)$ by~$-\vec{c}(i)$ and~$a_{i,j}$ by~$-a_{i,j}$
  for every $1\leq j\leq n$, we do not modify the sets~$\vec{X}$ and~$\vec{X}_0$. So, without loss of generality, we can assume that~$\vec{c}\in\setN^m$.

  We call~$P$ the set of pairs
  $(\vec{u},\vec{v})\in\setN^n\times\setN^m$ such that
  $\bigwedge_{i=1}^m\sum_{j=1}^na_{i,j}\vec{v}(j)=\vec{u}(i)$. We
  denote by~$P_0$ the set of minimal (for~$\leq$) non-zero pairs
  in~$P$.  In~\cite[Theorem~1]{pottier91} it is shown that
  every pair in~$P$ is a finite sum of pairs in~$P_0$, and moreover,
  every pair $(\vec{u},\vec{v})$ in~$P_0$ satisfies
  $\norm{\vec{u}}{}+\norm{\vec{v}}{}\leq C$ where
  $$C\eqdef (1+\max_{1\leq i\leq m}(\sum_{j=1}^n|a_{i,j}|+1))^m\;.$$
  
  Let $\vec{x}'\in \vec{X}$. Since $(\vec{c},\vec{x}')$ is in~$P$, it
  can be decomposed as a finite sum of pairs~in $P_0$. On the one
  hand, notice that the pairs $(\vec{u},\vec{v})$ with
  $\vec{u}=\vec{0}$ provides us with vectors $\vec{v}\in\vec{X}_0$
  satisfying $\norm{\vec{v}}{}\leq C$. On
  the other hand, notice that the decomposition of $(\vec{v},\vec{x})$
  cannot contains more that $\norm{\vec{c}}{}$ pairs
  $(\vec{u},\vec{v})$ with $\vec{u}\not=\vec{0}$. Notice that such a
  pair satisfies $\norm{\vec{v}}{}\leq C-1$.
  It follows that
  those pairs sum up to a pair $(\vec{c},\vec{x})$ in~$P$ such
  that $\norm{\vec{x}}{}\leq \norm{\vec{c}}{}\cdot (C-1)$. As
  $(\vec{c},\vec{x})\in P$, it follows that $\vec{x}\in\vec{X}$.  This
  concludes the proof.
\end{proof}

\begin{corollary}\label{lem:strongunbounded}
  Every model $\vec{h}$ of $E_\xi$ can be
  decomposed as the sum of a model $\vec{h}'$ of $E_\xi$ and a finite sum of
  models $\vec{h}_0$ of $E_\xi^0$ such that
  \begin{align*}
    \norm{\vec{h}'}{}&\leq \size{\xi}^{\size{\xi}-1}\;,&
    \norm{\vec{h}_0}{}&\leq \size{\xi}^{\size{\xi}-3}\;.
  \end{align*}
\end{corollary}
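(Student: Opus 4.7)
The plan is to apply the Pottier decomposition of Lemma~\ref{lem:pottier} to the characteristic system $E_\xi$, viewed as a linear system over the variables $\vec{m}_j(i), \vec{n}_j(i), \phi_j(t)$ for $0 \leq j \leq k$, $1 \leq i \leq d$, and $t\in T_j$. Writing $N \eqdef k + \sum_{j=1}^k\norm{\vec{a}_j}{} + \sum_{j=0}^k(\norm{\vec{x}_j}{}+\size{G_j}+\norm{\vec{y}_j}{})$, so that $\size{\xi} = 2(d+1)^{d+1}N$, the first task is to rewrite the defining conditions of $E_\xi$ in the standard form of Lemma~\ref{lem:pottier} and bound each of its parameters. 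The number of variables is $n \leq 2d(k+1)+\sum_j|T_j|$; the number of equations $m$ (comprising the finite-coordinate constraints $\vec m_j(i)=\vec x_j(i)$ and $\vec n_j(i)=\vec y_j(i)$, the balance constraints $\vec n_j = \vec m_j + \Delta(\phi_j)$, the step constraints $\vec n_{j-1} + \vec a_j = \vec m_j$, and the Kirchhoff equations for each state of each $G_j$) is bounded similarly; both quantities are polynomial in $N$, say $\leq 3(d+1)N$.

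Next, I would bound the coefficients and constants appearing in the system. The maximum coefficient sum $\max_{i}\sum_{j=1}^n |a_{i,j}|$ is dominated by $\max_{t\in \bigcup_j T_j}\norm{\Delta(t)}{}+2 \leq N+2$, because the Kirchhoff and balance equations introduce coefficients of absolute value at most~$1$ or entries of action displacements. The norm of the constant vector satisfies $\norm{\vec c}{} \leq \sum_j(\norm{\vec x_j}{}+\norm{\vec y_j}{}) + \sum_{j=1}^k\norm{\vec a_j}{} + 2\sum_j|Q_j| \leq 2N$. Applying Lemma~\ref{lem:pottier} then produces the announced decomposition of every model $\vec h$ of $E_\xi$ as $\vec h = \vec h' + \sum \vec h_0$, with the explicit Pottier bounds
\begin{equation*}
\norm{\vec h_0}{} \leq (N+4)^{3(d+1)N}\,,\qquad \norm{\vec h'}{} \leq 2N \cdot (N+4)^{3(d+1)N}\;.
\end{equation*}

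The main obstacle is the last step: verifying that these quantities are indeed bounded by $\size{\xi}^{\size{\xi}-3}$ and $\size{\xi}^{\size{\xi}-1}$, which is precisely the purpose of the seemingly strange prefactor $2(d+1)^{d+1}$ in~\eqref{eq-klm-size}. Taking logarithms, the key inequality $3(d+1)N\log(N+4) \leq (\size\xi-3)\log \size\xi$ holds comfortably, because $\size\xi - 3 \geq (d+1)^{d+1}N$ exceeds $3(d+1)N$ by a super-polynomial factor in~$d$, while $\log\size\xi \geq \log(N+4)$. The bound on $\norm{\vec h'}{}$ follows from that on $\norm{\vec h_0}{}$ by absorbing the additional factor $2N \leq \size\xi^2$ into the exponent, upgrading $\size\xi^{\size\xi-3}$ to $\size\xi^{\size\xi-1}$. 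Thus the inflation factor $2(d+1)^{d+1}$ is calibrated exactly to swallow the polynomial blow-up arising from the number of equations and the coefficient bound in Pottier's lemma, leaving the clean bounds claimed in the corollary.
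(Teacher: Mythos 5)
Your proposal follows exactly the paper's route: apply Lemma~\ref{lem:pottier} to the characteristic system and check that the number of equations, the coefficient sums, and the norm of the constant vector are absorbed by the prefactor $2(d+1)^{d+1}$ in the definition of $\size{\xi}$ (the paper simply observes $m\leq\size{\xi}-3$, $2+\max_i\sum_j|a_{i,j}|\leq\size{\xi}$, and $\norm{\vec c}{}\leq\size{\xi}^2$, then reads off both bounds). Your intermediate constants (e.g.\ $m\leq 3(d+1)N$) are slightly optimistic for some small dimensions, but the slack in $2(d+1)^{d+1}$ is more than enough, so the argument is correct and essentially identical to the paper's.
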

\begin{proof}
  Just apply \cref{lem:pottier} where $(a_{i,j})_{i,j}$ corresponds to
  the coefficients occurring in the characteristic system of $\xi$ in
  front of variables, and~$\vec{c}$ corresponds to the constant terms.
  Observe that $2+\max_{1\leq i\leq m}\sum_{j=1}^n|a_{i,j}|\leq \size{\xi}$,
  $m\leq \size{\xi}-3$, and $\norm{\vec{c}}{}\leq \size{\xi}^2$.
\end{proof}

\unbounded*
\begin{proof}
  This is a direct consequence of \cref{lem:strongunbounded} by
  observing that if~$\vec{h}$ is a model of~$E_\xi$ and~$\vec{h}_0$ is
  a model of~$E_\xi^0$ then $\vec{h}+n\vec{h}_0$ is a model of~$E_\xi$
  for every~$n\in\setN$.
\end{proof}

\section{Normal KLM Sequences}
\label{sec-normal}
In this section, we prove \cref{lem:normalnonempty}.  Throughout this
appendix, we assume that~$\xi$ denotes a \emph{normal} KLM sequence of
the form
$(\vec{x}_0G_0\vec{x}_1)\vec{a}_1\ldots (\vec{x}_k,G_k,\vec{y}_k)$,
where $G_j=(Q_j,q_{\mathit{in},j},q_{\mathit{out},j},T_j)$ for each $0\leq j\leq k$.

\subsection{Models of Normal KLM Sequences}

\begin{claim}\label{lem:h}
  There exists a model $\vec{h}$ of $E_\xi$ such that $\phi_j^{\vec{h}}(t)>0$ for
  every $t\in T_j$, and such that $\norm{\vec{h}}{}\leq 2\size{\xi}^{\size{\xi}-1}$.
\end{claim}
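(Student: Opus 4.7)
The plan is to augment a small model of~$E_\xi$ with enough `pumping' witnesses from~$E_\xi^0$ to force every transition of every VASS~$G_j$ to appear with strictly positive multiplicity. First, using that~$\xi$ is clean and therefore satisfiable, \cref{lem:strongunbounded} produces a model $\vec{h}_1$ of~$E_\xi$ with $\norm{\vec{h}_1}{}\leq\size{\xi}^{\size{\xi}-1}$.

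Next, for each~$j\in\{0,\ldots,k\}$ and each $t\in T_j$, the fact that~$\xi$ is normal and in particular unbounded implies that the set $\{\phi_j^{\vec{h}}(t)\mid\vec{h}\models E_\xi\}$ is unbounded, so the second item of \cref{lem:unbounded} supplies a model of~$E_\xi^0$ on which $\phi_j(t)$ is strictly positive. I would decompose any such witness as an $\setN$-linear combination of minimal non-zero solutions of~$E_\xi^0$ via \cref{lem:pottier} and extract a minimal summand $\vec{h}_{0,j,t}$ that still satisfies $\phi_j^{\vec{h}_{0,j,t}}(t)>0$ (which must exist because the corresponding coordinate is additive in the decomposition); the Pottier bound then guarantees $\norm{\vec{h}_{0,j,t}}{}\leq\size{\xi}^{\size{\xi}-3}$, exactly as in the proof of \cref{lem:strongunbounded}.

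Setting $\vec{h}_0\eqdef\sum_{j=0}^k\sum_{t\in T_j}\vec{h}_{0,j,t}$ yields a model of~$E_\xi^0$ (solutions of a homogeneous linear system form a submonoid of~$\setN^n$) with $\phi_j^{\vec{h}_0}(t)>0$ for every~$j$ and every~$t\in T_j$. Since $\sum_{j=0}^k|T_j|\leq\size{\xi}/2$ by definition~\eqref{eq-klm-size} of the size, this gives $\norm{\vec{h}_0}{}\leq\size{\xi}^{\size{\xi}-2}$. The desired model is finally $\vec{h}\eqdef\vec{h}_1+\vec{h}_0$: it is still a model of~$E_\xi$ (adding any solution of the homogeneous system preserves satisfaction of~$E_\xi$), it satisfies $\phi_j^{\vec{h}}(t)\geq\phi_j^{\vec{h}_0}(t)>0$ on every transition, and
\[\norm{\vec{h}}{}\leq\size{\xi}^{\size{\xi}-1}+\size{\xi}^{\size{\xi}-2}\leq 2\size{\xi}^{\size{\xi}-1}.\]

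The main thing to verify is the quantitative bound $\size{\xi}^{\size{\xi}-3}$ on a minimal non-zero solution of~$E_\xi^0$, which is not stated explicitly in \cref{lem:unbounded}. It follows by applying \cref{lem:pottier} with $\vec{c}=\vec{0}$: minimal non-zero homogeneous solutions then have norm bounded by the constant~$C$ of that lemma, which is at most~$\size{\xi}^{\size{\xi}-3}$ for the characteristic system of~$\xi$, exactly the ingredient already used when extracting the bound for~$\vec{h}_0$ in the proof of \cref{lem:strongunbounded}.
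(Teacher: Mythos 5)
Your proposal is correct and follows essentially the same route as the paper: obtain a small model of~$E_\xi$ via \cref{lem:strongunbounded}, then use unboundedness to extract, for each transition, a small homogeneous witness with positive $\phi_j(t)$, and add them all up. The only cosmetic difference is that you invoke \cref{lem:pottier} with $\vec c=\vec 0$ to bound the homogeneous summands, whereas the paper reads the same bound directly off \cref{lem:strongunbounded}; the quantitative accounting is the same either way.
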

\begin{proof}
  As $\xi$ is satisfiable, there exists a model $\vec{h}$ of $E_\xi$. By
  decomposing $\vec{h}$ thanks to \cref{lem:strongunbounded}, we can
  assume that $\norm{\vec{h}}{}\leq\size{\xi}^{\size{\xi}-1}$. Moreover,
  since $\xi$ is unbounded, \cref{lem:strongunbounded} shows that
  for every $0\leq j\leq k$ and for every $t\in T_j$, 
  there exists a model $\vec{h}_0$ of $E_\xi^0$
  such that $\norm{\vec{h}_0}{}\leq
  \size{\xi}^{\size{\xi}-3}$ and such that $\phi_j^{\vec{h}_0}(t)>0$. By
  adding to $\vec{h}$, at most $\sum_{j=0}^k|T_j|$ models of $E_\xi^0$, we
  get a model of $E_\xi$ satisfying the claim.
\end{proof}

\begin{claim}\label{lem:h0}
  There
  exists a model $\vec{h}_0$ of $E_\xi^0$ such that for every $0\leq j\leq
  k$, for every $1\leq i\leq d$, and for every $t\in T_j$,
  \begin{itemize}
  \item if $\vec{x}_j(i)=\omega$ then $\vec{m}_j^{\vec{h}_0}(i)>0$,
  \item if $\vec{y}_j(i)=\omega$ then $\vec{n}_j^{\vec{h}_0}(i)>0$, and
  \item $\phi_j^{\vec{h}_0}(t)>0$,
  \end{itemize}
  and moreover,
  $$\norm{\vec{h}_0}{}\leq \size{\xi}^{\size{\xi}-2}\;.$$
\end{claim}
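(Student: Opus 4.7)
The plan is to build $\vec{h}_0$ as a sum of finitely many homogeneous solutions, each witnessing one of the required positivity conditions, and then to bound the total norm via the count times the per-witness bound.

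First, since $\xi$ is normal, it is in particular clean (hence saturated) and unbounded. So I can invoke \cref{lem:unbounded} in three flavours:
\begin{enumerate}
\item For every $j$ and every $i$ with $\vec{x}_j(i)=\omega$, saturation gives that the values $\vec{m}_j^{\vec{h}}(i)$ are unbounded over models $\vec{h}$ of $E_\xi$, so there is a model $\vec{h}_0^{j,i,\mathrm{in}}$ of $E_\xi^0$ with $\vec{m}_j^{\vec{h}_0^{j,i,\mathrm{in}}}(i)>0$.
\item Symmetrically for every $j$ and every $i$ with $\vec{y}_j(i)=\omega$, we obtain a model $\vec{h}_0^{j,i,\mathrm{out}}$ of $E_\xi^0$ with $\vec{n}_j^{\vec{h}_0^{j,i,\mathrm{out}}}(i)>0$.
\item For every $j$ and every $t\in T_j$, unboundedness (i.e.\ $T_j'=T_j$) gives a model $\vec{h}_0^{j,t}$ of $E_\xi^0$ with $\phi_j^{\vec{h}_0^{j,t}}(t)>0$.
\end{enumerate}

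Next I normalise the sizes of these witnesses. By \cref{lem:pottier} applied to the homogeneous system $E_\xi^0$ (i.e.\ with $\vec{c}=\vec{0}$), every non-zero model of $E_\xi^0$ decomposes into a finite sum of minimal models of $E_\xi^0$, each of norm at most $\size{\xi}^{\size{\xi}-3}$; in particular a positivity witness on some coordinate can be picked to be one such minimal model (equivalently, this is the $\vec{h}_0$ part of the decomposition in \cref{lem:strongunbounded}). So I can assume each $\vec{h}_0^{\bullet}$ above satisfies $\norm{\vec{h}_0^{\bullet}}{}\leq\size{\xi}^{\size{\xi}-3}$.

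I then set $\vec{h}_0$ to be the sum of all of these witnesses. Since $E_\xi^0$ is defined by a homogeneous linear system, any $\setN$-linear combination of its models is again a model, and positivity of a given coordinate in any summand is preserved. Hence $\vec{h}_0$ simultaneously satisfies all three bulleted properties. The final step is the norm bound. The total number of summands is
\[
 N\;\leq\;2d(k+1)+\sum_{j=0}^{k}|T_j|\,,
\]
and the key inequality, where I expect only a routine verification to be needed, is that $N\leq\size{\xi}$. This follows from the definition \eqref{eq-klm-size}, using that $\size{\xi}\geq 2(d+1)^{d+1}(k+1)+2(d+1)^{d+1}\sum_j|T_j|$ and $d\leq (d+1)^{d+1}$. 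Consequently $\norm{\vec{h}_0}{}\leq N\cdot\size{\xi}^{\size{\xi}-3}\leq\size{\xi}^{\size{\xi}-2}$, as required. The main (very minor) obstacle is just this last counting/size check; the conceptual content is entirely packaged in \cref{lem:unbounded} together with the characterisations of saturation and unboundedness built into the definition of normal KLM sequences.
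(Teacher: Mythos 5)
Your proposal is correct and follows essentially the same route as the paper: obtain one homogeneous witness per required positivity condition (via saturation for the $\omega$-components, unboundedness for the transitions, packaged in \cref{lem:unbounded}/\cref{lem:strongunbounded} with the per-witness norm bound $\size{\xi}^{\size{\xi}-3}$), sum the at most $2d(k+1)+\sum_j|T_j|\leq\size{\xi}$ witnesses, and conclude. The paper's proof is just a more compact statement of exactly this argument.
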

\begin{proof}
  Since $\xi$ is saturated, \cref{lem:strongunbounded} shows that
  for every
  $i\in\{1,\ldots,d\}$ and $j\in \{0,\ldots,k\}$:
  \begin{itemize}
  \item if
    $\vec{x}_j(i)=\omega$ then there exists a model $\vec{h}_0$ of $E_\xi^0$
    such that $\norm{\vec{h}_0}{}\leq
    \size{\xi}^{\size{\xi}-3}$ and such that $\vec{m}_j^{\vec{h}_0}(i)>0$, and
  \item if $\vec{y}_j(i)=\omega$ then there exists a model $\vec{h}_0$ of $E_\xi^0$
    such that $\norm{\vec{h}_0}{}\leq
    \size{\xi}^{\size{\xi}-3}$ and such that $\vec{n}_j^{\vec{h}_0}(i)>0$.
  \end{itemize}
  Moreover, since $\xi$ is unbounded, \cref{lem:strongunbounded}
  shows that for every $j\in \{0,\ldots,k\}$
  and for every $t\in T_j$ there exists a model $\vec{h}_0$ of $E_\xi^0$
  such that $\norm{\vec{h}_0}{}\leq
  \size{\xi}^{\size{\xi}-3}$ and such that $\phi_j^{\vec{h}_0}(t)>0$.
  
  \medskip

  It follows that by summing up at most $2d(k+1)+\sum_{0\leq j\leq
    k}|T_j|$ models of $E_\xi^0$, we get a model $\vec{h}_0$ of $E_\xi^0$
  satisfying the claim.
\end{proof}

\subsection{Flow Functions}\label{sub:flow}

\begin{claim}\label{lem:satpumred}
  For all $0\leq j\leq k$, there exists a function
  $F_j{:}\,Q_j\rightarrow \setN_\omega^d$ such that
  $F_j(q)=F_j(p)+\vec{a}$ for every transition $(p,\vec{a},q)\in T$,
  and such that $F_j(q_{\mathit{in},j})=\Facc{G_j}{\vec{x}_j}$ and
  $F_j(q_{\mathit{out},j})=\Bacc{G_j}{\vec{y}_j}$.
\end{claim}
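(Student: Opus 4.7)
The plan is to treat each index $j$ independently and, for a given triple $(\vec{x}_j G_j \vec{y}_j)$, construct $F_j$ coordinate by coordinate, picking the appropriate tool from among the ingredients of normality (pumpability, rigidity, saturation) for each coordinate. Fix $j$ and write $G = G_j$, $\vec{x} = \vec{x}_j$, $\vec{y} = \vec{y}_j$.

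For each coordinate $i$ that is \emph{not} fixed by $G$, pumpability of $\xi$ immediately gives $\Facc{G}{\vec{x}}(i) = \Bacc{G}{\vec{y}}(i) = \omega$, so I would set $F_j(q)(i) \eqdef \omega$ uniformly over $q \in Q_j$. The additivity $F_j(q)(i) = F_j(p)(i) + \vec{a}(i)$ then holds trivially since $\omega + z = \omega$, and the boundary values are correct by construction.

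The delicate coordinates are those $i$ that are fixed by $G$, where $\Facc{G}{\vec{x}}(i) = \vec{x}(i)$ and $\Bacc{G}{\vec{y}}(i) = \vec{y}(i)$ must be reconciled with a single function on $Q_j$. The key intermediate step I would prove is a dichotomy: for fixed $i$, either both $\vec{x}(i)$ and $\vec{y}(i)$ equal $\omega$, or both lie in $\setN$. For this, I would first observe that for any $\phi \models K_G$, the displacement $\Delta(\phi)(i)$ on the fixed coordinate $i$ is a constant, namely $f_i(q_\mathit{out}) - f_i(q_\mathit{in})$, where $f_i\colon Q \to \setN$ is the witness for the fixedness of~$i$; this is because a Kirchhoff flow decomposes as a path from $q_\mathit{in}$ to $q_\mathit{out}$ plus cycles, and each cycle preserves fixed coordinates. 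Consequently, for every model $\vec{h}$ of $E_\xi$, $\vec{n}_j^{\vec{h}}(i) - \vec{m}_j^{\vec{h}}(i)$ equals the same constant. Now if $\vec{x}(i) = \omega$, saturation of $\xi$ makes $\vec{m}_j^{\vec{h}}(i)$ unbounded, hence $\vec{n}_j^{\vec{h}}(i)$ unbounded, and the requirement $\vec{n}_j^{\vec{h}}(i) \sqsubseteq \vec{y}(i)$ then forces $\vec{y}(i) = \omega$ (the finite case would constrain $\vec{n}_j^{\vec{h}}(i)$ to a single value). The converse direction is symmetric.

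Granted the dichotomy, the coordinate construction on a fixed $i$ is immediate. When $\vec{x}(i) = \vec{y}(i) = \omega$, I set $F_j(q)(i) \eqdef \omega$ uniformly, and additivity is trivial. When both are finite, rigidity of $\xi$ supplies a function $g_i\colon Q_j \to \setN$ with $g_i(q) = g_i(p) + \vec{a}(i)$ on every transition and $g_i(q_\mathit{in,j}) \sqsubseteq \vec{x}(i)$, $g_i(q_\mathit{out,j}) \sqsubseteq \vec{y}(i)$; since both bounds are finite, these $\sqsubseteq$ relations collapse to equalities, so $g_i$ matches the required boundary values and I set $F_j(q)(i) \eqdef g_i(q)$. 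Assembling the coordinates yields $F_j$. The main obstacle I anticipate is the dichotomy step, and within it the observation that $\Delta(\phi)(i)$ is truly constant over \emph{all} $\phi \models K_G$ rather than just Parikh images of paths; everything else is a routine component-wise assembly.
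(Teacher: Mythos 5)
Your proposal is correct and follows essentially the same route as the paper's proof: a coordinate-wise construction that sends non-fixed components to the constant $\omega$ by pumpability, handles fixed components with $\omega$ endpoints by the constant $\omega$, and uses rigidity (with the $\sqsubseteq$ constraints collapsing to equalities) for fixed components with finite endpoints. The only difference is that where the paper asserts in one line that saturation yields $\vec{x}(i)\in\setN\Leftrightarrow\vec{y}(i)\in\setN$ for a fixed component~$i$, you spell out the justification via the constancy of $\Delta(\phi)(i)$ over models of $K_{G_j}$ --- which, incidentally, follows directly from $\sum_{t=(p,\vec a,q)}\phi(t)(f_i(q)-f_i(p))=f_i(q_\mathit{out})-f_i(q_\mathit{in})$ by the Kirchhoff condition, without invoking a path-plus-cycles flow decomposition.
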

\begin{proof}
  It suffices to prove the claim for some KLM triple $\vec{x}G\vec{y}$
  which is pumpable, rigid, and saturated.  Let us prove
  that for every $i\in\{1,\ldots,d\}$, there exists a function
  $f_i{:}\,Q\rightarrow\setN_\omega$ such that
  $f_i(q)=f_i(p)+\vec{a}(i)$ for every transition $(p,\vec{a},q)\in T$
  and such that $f_i(q_\mathit{in})=\Facc{G}{\vec{x}}(i)$ and
  $f_i(q_\mathit{out})=\Bacc{G}{\vec{y}}(i)$. Let
  $i\in\{1,\ldots,d\}$. Observe that if~$i$ is not fixed by $G$, then
  $\Facc{G}{\vec{x}}(i)=\omega=\Bacc{G}{\vec{y}}(i)$, and we can
  let~$f_i$ be the constant function mapping to~$\omega$.  Otherwise,
  if~$i$ is not fixed, then $\Facc{G}{\vec{x}}(i)=\vec{x}(i)$ and
  $\Bacc{G}{\vec{y}}(i)=\vec{y}(i)$. Since $\vec{x}G\vec{y}$ is
  saturated, we deduce that
  $\vec{x}(i)\in\setN\Longleftrightarrow\vec{y}(i)\in\setN$. Note
  that, if $\vec{x}(i)=\omega=\vec{y}(i)$, then we can let~$f_i$ be
  the constant function mapping to~$\omega$. So, we can assume that
  $\vec{x}(i),\vec{y}(i)\in\setN$.  Since $\vec{x}G\vec{y}$ is rigid,
  we deduce that there exists a function~$g_i{:}\,Q\rightarrow\setN$
  such that $g_i(q)=g_i(p)+\vec{a}(i)$ for every transition
  $(p,\vec{a},q)\in T$, and such that
  $g_i(q_\mathit{in})\sqsubseteq \vec{x}(i)$ and
  $g_i(q_\mathit{out})\sqsubseteq\vec{y}(i)$.  As
  $\vec{x}(i),\vec{y}(i)\in\setN$, we deduce that
  $g_i(q_\mathit{in})=\vec{x}(i)$ and
  $g_i(q_\mathit{out})=\vec{y}(i)$.  It follows that we can let
  $f_i\eqdef g_i$ in that case.
\end{proof}

\subsection{Pumping in Normal KLM Sequences}
Let us introduce the \emph{acceleration operator} $\nabla$ that maps a
pair of configurations $(\vec{x},\vec{x}')$ such that $\vec{x}\leq
\vec{x}'$ into the configuration $\vec{x}\nabla\vec{x}'$ defined for
every $1\leq i\leq d$ by:
$$(\vec{x}\nabla\vec{x}')(i)\eqdef\begin{cases}\omega & \text{ if
  }\vec{x}(i)<\vec{x}'(i)\\
  \vec{x}(i) & \text{ otherwise}
\end{cases}$$

\begin{claim}\label{lem:uvxy}
  There exists a sequence $(u_j,v_j)_{0\leq j\leq k}$ of pairs of
  words such that $|u_j|,|v_j|<\size{\xi}^{(d+1)^{d+1}}$, and a
  sequence $(\vec{x}_j',\vec{y}_j')_{0\leq j\leq k}$ of pairs of
  configurations $\vec{x}'_j\geq
  \vec{x}_j$ and $\vec{y}'_j\geq \vec{y}_j$ such that for every $0\leq
  j\leq k$:
  \begin{itemize}
  \item
    $q_{\mathit{in},j}(\vec{x}_j)\step[G_j]{u_j}q_{\mathit{in},j}(\vec{x}'_j)$ and
    $\Facc{G_j}{\vec{x}_j}=\vec{x}_j\nabla\vec{x}'_j$,
  \item $q_{\mathit{out},j}(\vec{y}'_j)\step[G_j]{v_j}q_{\mathit{out},j}(\vec{y}_j)$ and
    $\Bacc{G_j}{\vec{y}_j}=\vec{y}_j\nabla\vec{y}'_j$.
  \end{itemize}
\end{claim}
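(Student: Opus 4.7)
The plan is to construct $u_j$ and~$\vec{x}'_j$ for each~$j$ independently; the pair $v_j,\vec{y}'_j$ is then obtained by applying the same construction to the reversed VASS. Fix~$j$ and drop it from the notation. The key idea is to first build a long cycle at~$q_\mathit{in}$ from~$\vec{x}$ that strictly increases every non-fixed component, and then apply \cref{lem:rack} to extract a short witness, using rigidity to handle the fixed components.

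By pumpability of $(\vec{x}G\vec{y})$, for each non-fixed component~$i$ of~$G$ with $\vec{x}(i)\in\setN$ we have $\Facc{G}{\vec{x}}(i)=\omega$, which yields a word~$\sigma^i$ with $q_\mathit{in}(\vec{x})\step[G]{\sigma^i}q_\mathit{in}(\vec{x}^i)$, $\vec{x}^i\geq\vec{x}$, and $\vec{x}^i(i)>\vec{x}(i)$; in particular $\Delta(\sigma^i)\geq\vec{0}$ and $\Delta(\sigma^i)(i)\geq 1$. Let $I$ denote the set of such components and let~$\tau$ be the concatenation of the~$\sigma^i$ for $i\in I$. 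Then $\Delta(\tau)\geq\vec{0}$ and $\Delta(\tau)(i)\geq 1$ for each $i\in I$, so by monotonicity $\tau$ can be iterated arbitrarily many times from~$\vec{x}$; setting $N\eqdef\size{\xi}^{1+d^d}$, the iterated run $q_\mathit{in}(\vec{x})\step{\tau^N}q_\mathit{in}(\vec{z})$ reaches $\vec{z}(i)\geq\size{\xi}^{1+d^d}$ for every $i\in I$.

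Let $\vec{x}^*\in\setN_\omega^d$ coincide with~$\vec{x}$ on non-fixed components and be~$\omega$ on fixed components. The run $\tau^N$ remains valid when started from~$\vec{x}^*$, since setting entries to~$\omega$ only relaxes the non-negativity constraints; moreover, along this run every component reaches value $\geq C^{1+d^d}$ for $C\eqdef\size{\xi}$: components in~$I$ do so by construction, and all others are~$\omega$ throughout. \Cref{lem:rack} then produces a word~$u$ with $|u|<C^{(d+1)^{d+1}}=\size{\xi}^{(d+1)^{d+1}}$ and $q_\mathit{in}(\vec{x}^*)\step[G]{u}q_\mathit{in}(\vec{c})$ such that $\vec{c}(i)\geq C-\size{G}$ for every~$i$; by the definition of $\size{\xi}$ in~\eqref{eq-klm-size}, the quantity $\vec{x}(i)+\size{G}$ is at most a small fraction of~$\size{\xi}$, so we obtain $\vec{c}(i)>\vec{x}(i)$ whenever $\vec{x}(i)\in\setN$.

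To lift~$u$ back to a valid run from the original~$\vec{x}\in\setN^d$, we invoke rigidity: for each fixed component~$i$ with $\vec{x}(i)\in\setN$, rigidity provides $g_i{:}\,Q\to\setN$ satisfying $g_i(q)=g_i(p)+\vec{a}(i)$ on every transition and $g_i(q_\mathit{in})=\vec{x}(i)$. When~$u$ is executed from~$\vec{x}$, the $i$th coordinate at any visited state~$q$ equals $g_i(q)\geq 0$, so the run is valid in~$\setN^d$ and ends at some $\vec{x}'\geq\vec{x}$ with $\vec{x}'(i)=\vec{x}(i)$ on finite fixed components and strict increase (or $\omega=\omega$) elsewhere, matching $\Facc{G}{\vec{x}}=\vec{x}\nabla\vec{x}'$ case by case. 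The main subtlety is precisely the interface with \cref{lem:rack}: its hypothesis demands that each component reach a large value, but fixed components with small finite values never do---we circumvent this by hiding them at~$\omega$ in~$\vec{x}^*$, and recover genuine validity afterwards via rigidity.
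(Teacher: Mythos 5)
Your proposal is correct and follows essentially the same route as the paper's proof: pump each component of~$I$ via concatenated/iterated witnesses of $\Facc{G_j}{\vec x_j}(i)=\omega$ starting from the configuration with fixed components masked to~$\omega$ (the paper's $\vec x_j|_I$), extract a short witness with \cref{lem:rack}, and restore validity on the finite fixed components via rigidity (the paper packages this last step as the flow function of \cref{lem:satpumred}). The only blemish is the claim that $\Delta(\sigma^i)\geq\vec 0$, which is only guaranteed on components where $\vec x_j$ is finite (on $\omega$-components the displacement is unconstrained), but this is harmless since every component where it could fail is~$\omega$ in your~$\vec x^*$.
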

\begin{proof}
  Consider some triple $\vec x G\vec y\eqdef\vec x_jG_j\vec y_j$ for
  some $0\leq j\leq k$; since $\xi$ is normal, this triple is
  pumpable.  We just provide a proof for $u$ and $\vec{x}'$ since $v$
  and $\vec{y}'$ can be obtained by symmetry.  Let $I$ be the set of
  components $i\in\{1,\dots,d\}$ such that $\vec{x}(i)\in\setN$ and
  $\Facc{G}{\vec{x}}(i)=\omega$.  Fix some $i\in I$.  Notice that
  there exists a cycle $\theta_i$ on $q_\mathit{in}$ labelled by a
  word $\sigma_i$, and a configuration $\vec{x}_i\geq\vec{x}$ such
  that $\vec{x}\step{\sigma_i}\vec{x}_i$ and
  $\vec{x}_i(i)>\vec{x}(i)$.
  
  \medskip

  Let us prove that for every $n\in\setN$, there exists a
  configuration $\vec{c}\geq (n,\ldots,n)$ such that
  $q_\mathit{in}(\vec{x}|_I)\step[G]{*}q_\mathit{in}(\vec{c})$.
  Notice that
  $q_\mathit{in}(\vec{x}|_I)\step[G]{\sigma_i}q_\mathit{in}(\vec{x}_i|_I)$. From
  $\vec{x}|_I\leq \vec{x}_i|_I$, we deduce that there exists a
  configuration $\vec{c}_i\in\setN^d$ such that
  $\vec{x}_i|_I=\vec{x}|_I+\vec{c}_i$. As
  $\vec{x}|_I(i)<\vec{x}_i|_I(i)$, we get $\vec{c}_i(i)>0$. By
  monotony, notice that we have
  $q_\mathit{in}(\vec{x}|_I)\step[G]{*}q_\mathit{in}(\vec{c})$ where $\vec{c}\eqdef\vec{x}|_I+\sum_{i\in
    I}n\vec{c}_i$. Notice that $\vec{c}\geq (n,\ldots,n)$.

  \medskip

  By selecting $n$ large enough, and letting $C\eqdef\size{\xi}$,
  \cref{lem:rack} shows that 
  there exists another
  configuration~$\vec{c}$ and a word~$u$ such that
  $q_\mathit{in}(\vec{x}|_I)\step[G]{u}q_\mathit{in}(\vec{c})$ and such that
  $\vec{c}(i)\geq C-\size{G}$ for every $i$, and
  such that $|u|< C^{(d+1)^{d+1}}$.

  \medskip

  Let us prove that there exists a configuration $\vec{x}'$ such that
  $q_\mathit{in}(\vec{x})\step[G]{u}q_\mathit{in}(\vec{x}')$. Let $v$ be a
  prefix of $u$ and let us prove that $\vec{x}(i)+\Delta(u)(i)\geq 0$
  for every~$i$. If~$i\in I$, the previous paragraph provides the
  bound. If $\vec{x}(i)=\omega$, the proof is immediate. If $i\not\in
  I$ and $\vec{x}(i)\not=\omega$, then the function~$F_j$ introduced
  in \cref{lem:satpumred} shows that
  $\vec{x}(i)+\Delta(u)(i)=F_j(q)(i)\geq 0$ where $q$ is the state
  reached after reading $v$ from $q_{\mathit{in},j}$. Hence, we have proven the existence
  of $\vec{x}'$. Notice that $\vec{x}'(i)=\vec{x}(i)$ if $i\not\in I$
  and $\vec{x}'(i) \geq C-\size{G}>\vec{x}(i)$ for every $i\in
  I$.  We have proven the claim.
\end{proof}

\subsection{Proof of \Cref{lem:normalnonempty}}

\normalnonempty*
\begin{proof}
  We use the models $\vec{h}$ and $\vec{h}_0$ of~$E_\xi$ and~$E^0_\xi$
  defined in \cref{lem:h,lem:h0}, and the sequence
  $(u_j,v_j)_{0\leq j\leq k}$ and
  $(\vec x'_j,\vec y'_j)_{0\leq j\leq k}$ defined in \cref{lem:uvxy}.

  Now, let $\psi_{u_j}$ be the Parikh image of a cycle in~$G_j$
  on~$q_{\mathit{in},j}$ labelled by~$u_j$, and let~$\psi_{v_j}$ be the Parikh
  image of a cycle in~$G_j$ on~$q_{\mathit{out},j}$ labelled by~$v_j$.  We
  define $\phi_j\eqdef r\phi_j^{\vec{h}_0}-(\psi_{u_j}+\psi_{v_j})$
  where $r\eqdef 2\size{\xi}^{1+(d+1)^{d+1}}$.  Observe that
  $\phi_j(t)>0$ for every~$t\in T_j$.  Moreover, as~$\phi_j$ satisfies
  the homogeneous Kirchhoff system~$K_{G_j}^0$ and~$G_j$ is strongly
  connected, Euler's Lemma shows that there exists a cycle
  on~$q_{\mathit{in},j}$ labelled by some word~$w_j$ with a Parikh image equals
  to~$\phi_j$. Notice that
  $|w_j|=\sum_{t\in T_j}\phi_j(t)\leq r\size{\xi}^{\size{\xi}-3}$. Let
  $s\eqdef r\size{\xi}^{\size{\xi}-2}$, thus such that
  $\size{\xi}|w_j|\leq s$.  In particular
  $\Delta(w_j)=\Delta(\phi_j)=r\Delta(\phi_j^{\vec{h}_0})-(\Delta(u_j)+\Delta(v_j))$. From
  $\Delta(\phi_j^{\vec{h}_0})=\vec{n}_j^{\vec{h}_0}-\vec{m}_j^{\vec{h}_0}$
  we deduce
  \begin{equation}\vec{n}_j^{0}=\vec{m}_j^0+\Delta(w_j)\end{equation}
where $\vec{m}_j^0\eqdef r\vec{m}_j^{\vec{h}_0}+\Delta(u_j)$ and $\vec{n}_j^0\eqdef
r\vec{n}_j^{\vec{h}_0}-\Delta(v_j)$.

\medskip

Let $I_j$ be the set of components fixed by $G_j$.  Let us prove that
$\vec{m}_j^0,\vec{n}_j^0\in\setN^d$ and
$\vec{m}_j^0(i),\vec{n}_j^0(i)>0$ for every $i\not\in I_j$.  Observe
that if $i\in I_j$ then $\Delta(u_j)(i)=0$ since~$u_j$ is the label of
a cycle and in particular
$\vec{m}_j^0(i)=r\vec{m}_j^{\vec{h}_0}(i)\geq 0$.  If $i\not\in I_j$
and $\vec{x}_j(i)\in\setN$, because $\Facc{G_j}{\vec{x}_j}(i)=\omega$
we known that $\Delta(u_j)(i)>0$.  If $i\not\in I_j$ and
$\vec{x}_j(i)=\omega$, then $\vec{m}_j^{\vec{h}_0}(i)>0$ and in
particular
$\vec{m}_j^0(i)\geq r+\Delta(u_j)(i)\geq r+\Delta(u_j)(i)\geq 1$ by
definition of~$r$ and since $|u_j|<\size{\xi}^{(d+1)^{d+1}}$. We have
proven that $\vec{m}_j^0\geq \vec{0}$ and $\vec{m}_j^0(i)>0$ for every
$i\not\in I_j$.  Symmetrically, we see that $\vec{n}_j^0\in\setN^d$ and
$\vec{n}_j^0(i)>0$ for every $i\not\in I_j$.

\medskip

Notice that $\vec{x}_j=\vec{m}_j^{\vec{h}}+\omega\vec{m}_j^{\vec{h}_0}$. Since~$u_j$
is fireable from $\vec{x}_j$ and $\size{\xi}|u_j|\leq r$, we
deduce that~$u_j$ is fireable from
$\vec{m}_j^{\vec{h}}+r\vec{m}_j^{\vec{h}_0}$, thus
\begin{align}
  q_{\mathit{in},j}(\vec{m}_j^{\vec{h}}+r\vec{m}_j^{\vec{h}_0})&\step[G_j]{u_j}q_{\mathit{in},j}(\vec{m}_j^{\vec{h}}+\vec{m}_j^{0})\;.
\intertext{By monotony, this means that}
q_{\mathit{in},j}(\vec{m}_j^{\vec{h}}+sr\vec{m}_j^{\vec{h}_0})&\step[G_j]{u_j^s}q_{\mathit{in},j}(\vec{m}_j^{\vec{h}}+s\vec{m}_j^{0})\;,
\intertext{and by symmetry, we also have}
q_{\mathit{out},j}(\vec{n}_j^{\vec{h}}+s\vec{n}_j^{0})&\step[G_j]{v_j^s}q_{\mathit{out},j}(\vec{n}_j^{\vec{h}}+sr\vec{n}_j^{\vec{h}_0})\;.
\intertext{Moreover, since $\size{\xi}|w_j|\leq s$ and
$\vec{m}_j^0(i),\vec{n}_j^0(i)>0$ for every $i\not\in I_j$, we deduce}
q_{\mathit{in},j}(\vec{m}_j^{\vec{h}}+s\vec{m}_j^{0})&\step{w_j^s}q_{\mathit{in},j}(\vec{m}_j^{\vec{h}}+s\vec{n}_j^{0})\;.
\end{align}

Observe that $\phi_j^{\vec{h}}$ satisfies the Kirchhoff
system~$K_{G_j}$ and~$\phi_j^{\vec{h}}(t)>0$ for every~$t\in T_j$.
As~$G_j$ is strongly connected, Euler's Lemma shows
that~$\phi_j^{\vec{h}}$ is the Parikh image of a path from~$q_{\mathit{in},j}$
to~$q_{\mathit{out},j}$ labelled by a word~$\sigma_j$. The function $F_j$
introduced in \cref{lem:satpumred} shows that we have:
$$q_{\mathit{in},j}(\Facc{G_j}{\vec{x}_j})\xrightarrow[G_j]{\sigma_j}q_{\mathit{out},j}(\Bacc{G_j}{\vec{y}_j})$$
Notice that
$\Facc{G_j}{\vec{x}_j}=\vec{m}_j^{vec{h}}+\omega\vec{n}_j^0$ and
$\Bacc{G_j}{\vec{y}_j}=\vec{n}_j^{\vec{h}}+\omega\vec{n}_j^0$. Since $|\sigma_j|\leq \sum_{t\in T_j}\phi_j^{\vec{h}}(t)\leq
\norm{\vec{h}}{}\leq 2\size{\xi}^{2+\size{\xi}}$.
It follows that $\size{\xi}|\sigma_j|\leq s$. We deduce:
\begin{align}
q_{\mathit{in},j}(\vec{m}_j^{\vec{h}}+s\vec{n}_j^{0})&\step[G_j]{\sigma_j}q_{\mathit{out},j}(\vec{n}_j^{\vec{h}}+s\vec{n}_j^{0})\;.
\intertext{Thus, for every $0\leq j\leq k$,}
q_{\mathit{in},j}(\vec{m}_j^{\vec{h}}+sr\vec{m}_j^{\vec{h}_0})&\step[G_j]{u_j^sw_j^s\sigma_jv_j^s}q_{\mathit{out},j}(\vec{n}_j^{\vec{h}}+sr\vec{n}_j^{\vec{h}_0})\;.
\end{align}
This entails that $\sigma\eqdef(u_0^sw_0^s\sigma_0v_0^s)\vec{a}_1\ldots
(u_k^sw_k^s\sigma_kv_k^s)$ is in $L_\xi$. Notice that $|\sigma|\leq
k+(k+1)\cdot
s\cdot(2\size{\xi}^{(d+1)^{d+1}}+2\size{\xi}^{\size{\xi}-1}+r\size{\xi}^{\size{\xi}-3})\leq
7\size{\xi}^{2(d+1)^{d+1}+2\size{\xi}-1}$. Observe that
$2(d+1)^{d+1}\leq \size{\xi}$ and $7\leq \size{\xi}$. Hence
$|\sigma|\leq \size{\xi}^{3\size{\xi}}$.
\end{proof}

\end{document}
